\newtheorem{theorem}{Theorem}
\newtheorem{proposition}{Proposition}
\newtheorem{lemma}{Lemma}
\newtheorem{corollary}{Corollary}
\newtheorem{question}{Question}
\newtheorem{rem}{Remark}
\newtheorem{ex}{Example}
\newenvironment{remark}{\begin{rem}\begin{em}}{\end{em}\end{rem}}
\newenvironment{example}{\begin{ex}\begin{em}}{\end{em}\end{ex}}
\def\a{\alpha}
\def\f{\tilde f}
\def\h{\tilde h}
\def\s{\sigma}
\def\o{\oplus}
\def\d#1{|\hspace{-0.5mm}|#1|\hspace{-0.5mm}|}
\def\B{\mathbb{B}} 
\def\F{\mathcal{F}}
\def\C{\mathcal{C}}
\def\P{\mathcal{P}}
\def\ONE{\mathbf{1}}
\def\One{\text{\sc one}}
\def\Zero{\text{\sc zero}}
\def\Sub{\text{\sc{sub}}}
\def\esd{\F_{\text{\sc{esd}}}}
\def\osd{\F_{\text{\sc{osd}}}}
\def\eosd{\F_{\text{\sc{eosd}}}}
\begin{document}

\begin{frontmatter}

\title{Fixed point theorems for Boolean networks\\ expressed in terms of forbidden subnetworks}
\tnotetext[]{A preliminary version of this paper \cite{R2011} was presented at the 17th International Workshop on Cellular Automata and Discrete Complex Systems (AUTOMATA 2011).}

\author{Adrien Richard}
\address{Laboratoire I3S, UMR7271 CNRS \& University of Nice-Sophia Antipolis\\  2000, route des lucioles Les Algorithmes - b\^at. Euclide B, 
06900 Sophia Antipolis, France}
\cortext[]{Corresponding author. Phone: +33 4 92 94 27 51. Email: {\tt richard@unice.fr}}

\begin{abstract}
We are interested in fixed points in Boolean networks, {\em i.e.} functions $f$ from $\{0,1\}^n$ to itself. We define the subnetworks of $f$ as the restrictions of $f$ to the subcubes of  $\{0,1\}^n$, and we characterizes a class $\F$ of Boolean networks satisfying the following property: Every subnetwork of $f$ has a unique fixed point if and only if $f$ has no subnetwork in $\F$. This characterization generalizes the fixed point theorem of Shih and Dong, which asserts that if for every $x$ in $\{0,1\}^n$ there is no directed cycle in the directed graph whose the adjacency matrix is the discrete Jacobian matrix of $f$ evaluated at point $x$, then $f$ has a unique fixed point. Then, denoting by $\C^+$ (resp. $\C^-$) the networks whose the interaction graph is a positive (resp. negative) cycle, we show that the non-expansive networks of $\F$ are exactly the networks of $\C^+\cup \C^-$; and for the class of non-expansive networks we get a ``dichotomization'' of the previous forbidden subnetwork theorem: Every subnetwork of $f$ has at most (resp. at least) one fixed point if and only if $f$ has no subnetworks in $\C^+$ (resp. $\C^-$) subnetwork. Finally, we prove that if $f$ is a conjunctive network then every subnetwork of $f$ has at most one fixed point if and only if $f$ has no subnetworks in $\C^+$.
\end{abstract}

\begin{keyword}
Boolean network\sep fixed point\sep feedback circuit
\end{keyword}

\end{frontmatter}

\section{Introduction}\label{sec:in}

A function $f$ from $\{0,1\}^n$ to itself is often seen as a Boolean network with $n$ components. On one hand, the dynamics of the network is described by the iterations of $f$; for instance, with the synchronous iteration scheme, the dynamics is described by the recurrence $x^{t+1}=f(x^t)$. On the other hand, the ``structure'' of the network is described by a directed graph $G(f)$: The vertices are the $n$ components, and there exists an arc from $j$ to $i$ when the evolution of the $i$th component depends on the evolution of the $j$th~one.

\medskip
Boolean networks have many applications. In particular, from the seminal works of Kauffman {\cite{Kauffman69}} and Thomas {\cite{Thomas73}}, they are extensively used to model gene networks. In most cases, fixed points are of special interest. For instance, in the context of gene networks, they correspond to stable patterns of gene expression at the basis of particular biological processes.

\medskip
Importance of fixed point leads researchers to find conditions for the existence and the uniqueness of a fixed point. Such a condition was first obtained by Robert {\cite{R80}}, who proved that {\emph{if $G(f)$ has no directed cycle, then $f$ has a unique fixed point}}. This result was then generalized by Shih and Dong {\cite{SD05}}. They associated to each point $x$ in $\{0,1\}^n$ a local interaction graph $Gf(x)$, which is a subgraph of $G(f)$ defined as the directed graph whose the adjacency matrix is the discrete Jacobian matrix of $f$ evaluated at point~$x$, and they proved that {\emph{if $Gf(x)$ has no directed cycle for all $x$ in $\{0,1\}^n$, then $f$ has a unique fixed point}}. 

\medskip
In this paper, we generalize Shih-Dong's theorem using, as main tool, the {\em subnetworks} of $f$, that is, the networks obtained from $f$ by fixing to $0$ or $1$ some components. The organization is the following. 
After introducing the main concepts in Section~\ref{sec:def}, we formally state some classical results connected to this work, as Robert's and Shih-Dong's theorems. 
In Section~\ref{sec:main}, we define the class $\F$ of even and odd-self-dual networks, and we prove the main result of this paper, the following characterization: {\emph{$f$ and all its subnetworks have a unique fixed point if and only if $f$ has no subnetworks in~$\F$}}. The rest of the paper discusses this ``forbidden subnetworks theorem''. 
In section~\ref{sec:shih}, we show that it generalizes Shih-Dong's theorem. More precisely, we show how it can be used to replace the condition ``$Gf(x)$ has no cycles for all $x$'' in Shih-Dong's theorem by a weaker condition of the form ``$Gf(x)$ has short cycles for few points $x$''. 
In section~\ref{sec:asyn}, we study the effect of the absence of subnetwork in $\F$ on the asynchronous state graph of $f$ (which is a directed graph on $\{0,1\}^n$ constructed from the asynchronous iterations of $f$ and proposed by Thomas {\cite{Thomas73}} as a model for the dynamics of gene networks). 
Section~\ref{sec:sub} gives some reflexions on the characterization of properties by forbidden subnetworks. In particular, it is showed that there is not a lot of properties that are interesting to characterize in terms of forbidden subnetworks. 
In Section~\ref{sec:cycle}, we compare $\F$ with the with the classes $\C^+$ (resp. $\C^-$) of networks $f$ such that the interaction graph $G(f)$ is a positive (resp. negative) cycle. We show that $\C^+$ (resp. $\C^-$) contains exactly the non-expansive even-self-dual (resp. odd-self-dual) networks, in such a way that $\C^+\cup\C^-$ equals the non-expansive networks of $\F$. 
This result is used in Section~\ref{sec:non-exp} to obtain a strong version of the main result for non-expansive networks: {\em If $f$ is non-expansive, then $f$ and all its subnetworks have at least (resp. at most) one fixed point if and only if $f$ has no subnetworks in $\C^-$ (resp. $\C^+$)}. 
In Section~\ref{sec:and}, we focus on conjunctive networks. We prove that {\em if $f$ is a conjunctive network, then $f$ and all its subnetworks have at most one (resp. a unique) fixed point if and only if $f$ has no subnetworks in $\C^+$ (resp. $\C^+\cup\C^-$)}. Finally, we show that, for conjunctive networks, the absence of subnetwork in $\C^{\pm}$ can be easily verified from the chordless cycles of $G(f)$. 

\section{Preliminaries}\label{sec:def}

\subsection{Notations on hypercube}

If $A$ and $B$ are two sets, then $A^B$ denotes the set of functions from $B$ to~$A$. Let $\B=\{0,1\}$ and let $V$ be a finite set. Elements of $\B^V$ are seen as {\em points} of the $|V|$-dimensional Boolean space, and the elements of $V$ as the {\em components} (or dimensions) of this space.  Given a point $x\in\B^V$ and a component $i\in V$, the image of $i$ by $x$ (the $i$-component of $x$) is denoted $x_i$ or $(x)_i$. The set of components $i$ such that $x_i=1$ is denoted $\ONE(x)$. For all $I\subseteq V$, we denote by $e_I$ the point of $\B^V$ such that $\ONE(e_I)=I$. Points $e_\emptyset$ and $e_V$ are often denoted $0$ and $1$, and we write $e_i$ instead of~$e_{\{i\}}$. Hence, $e_i$ may be seen as the base vector of $\B^V$ associated with dimension $i$. For all $x\in\B^V$, we set $\d{x}=|\ONE(x)|$. A point $x$ is said to be even (resp. odd) if $\d{x}$ is even (resp. odd). The sum modulo two is denoted $\o$. If $x$ and $y$ are two points of $\B^V$, then $x\o y$ is the point of $\B^V$ such that $(x\o y)_i=x_i\o y_i$ for all $i\in V$. The {\em Hamming distance} between $x$ and $y$ is $d(x,y)=\d{x\o y}$. Thus $d(x,y)$ is the number of components $i$ such that $x_i\neq y_i$. In this way $\d{x}=d(0,x)$. For all $I\subseteq V$ and $x\in\B^V$, the restriction of $x$ to $I$ is denoted $x|_I$, and the restriction of $x$ to $V\setminus I$ is denoted $x_{-I}$. If $i\in V$, we write $x|_i$ and $x_{-i}$ instead of $x|_{\{i\}}$ and $x_{-\{i\}}$. Also, if $\a\in\B$ then $x^{i\a}$ denotes the point  of $\B^V$ such that $(x^{i\a})_i=\a$ and $(x^{i\a})_{-i}=x_{-i}$.

\subsection{Networks and subnetworks}

A {\bf (Boolean) network} on $V$ is a function $f:\B^V\to\B^V$. The elements of $V$ are the {\em components} or {\em automata} of the network, and $\B^V$ is the set of possible {\em states} or {\em configurations} for the network. At a given configuration $x\in\B^V$, the state of component $i$ is given by $x_i$. The {\em local transition function} associated with component $i$ is the function $f_i$ from $\B^V$ to $\B$ defined by $f_i(x)=f(x)_i$ for all $x\in\B^V$. Throughout this article, $f$ denotes a network on $V$.   

\medskip
We say that $f$ is {\bf non-expansive} if 
\[
\forall x,y\in\B^V,\qquad d(f(x),f(y))\leq d(x,y).
\]

\medskip
The {\bf conjugate} of $f$ is the network $\f$ on $V$ defined by  
\[
\forall x\in\B^V,\qquad \f(x)=f(x)\o x.  
\]

\medskip
Let $I$ be a non-empty subset of $V$ and $z\in\B^{V\setminus I}$. The {\bf subnetwork} of $f$ induced by $z$ is the network $h$ on $I$ defined by 
\[
\forall x\in\B^V\text{ with }x_{-I}=z,\qquad h(x|_I)=f(x)|_I.
\]
The subnetwork of $f$ induced by $z$ is thus the network obtained from $f$ by fixing to $z_i$ each component $i\in V\setminus I$. It can also be seen as the projection of the restriction of $f$ to the hyperplane defined by the equations ``$x_i=z_i$'', $i\in V\setminus I$. Note that, by definition, {\em $f$ is a subnetwork of itself}. A subnetwork of $f$ distinct from $f$ is a {\bf strict subnetwork}. 
Let $i\in V$, $\alpha\in\B$ and let $z\in\B^V$ with $z_i=\alpha$. The subnetwork of $f$ induced by $z|_i$ is denoted~$f^{i\alpha}$ and called {\bf immediate subnetwork} of $f$ induced by the hyperplane ``$x_i=\a$''. In other words, 
\[
\forall x\in\B^V,\qquad f^{i\alpha}(x_{-i})=f(x^{i\a})_{-i}.
\]

\subsection{Asynchronous state graph}

The {\bf asynchronous state graph} of $f$, denoted $\Gamma(f)$, is the directed graph with vertex set $\B^V$ and the following set of arcs:
\[
\{x\to x\o e_i\,|\,x\in\B^V,~i\in V,~f_i(x)\neq x_i\}
\]

\begin{remark}
Our interest for $\Gamma(f)$ lies in the fact that this state graph has been proposed by Thomas {\cite{Thomas73}} as a model for the dynamics of gene networks; see also {\cite{TA90}}. In this context, network components correspond to genes. At a given state $x$, the protein encoded by gene $i$ is ``present'' if $x_i=1$ and ``absent'' if $x_i=0$. The gene $i$ is ``on'' (transcripted) if $f_i(x)=1$ and ``off'' (not transcripted) if $f_i(x)=0$. And given an initial configuration $x$, the possible evolutions of the system are described by the set of paths of $\Gamma(f)$ starting from $x$. 
\end{remark}

The terminal strongly connected components of $\Gamma(f)$ are called {\bf attractors}. An attractor is {\bf cyclic} if it contains at least two points, and it is {\bf punctual} otherwise. Hence, $\{x\}$ is a punctual attractor of $\Gamma(f)$ if and only if $x$ is a fixed point of $f$, so both concepts are identical. 

\begin{proposition}\label{pro:subdynamics} 
Let $I$ be non-empty subset of $V$ and let $h$ be the subnetwork of $f$ induced by some point $z\in\B^{V\setminus I}$. The asynchronous state graph of $h$ is isomorphic to the asynchronous state graph of $f$ induced by the set of points $x\in\B^V$ such that $x_{-I}=z$ (the isomorphism is $x\mapsto x_{-I}$).   
\end{proposition}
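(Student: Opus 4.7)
The plan is to verify, by unwinding definitions, that the map $\varphi : x \mapsto x|_I$ (understood as the intended formulation, with $x_{-I}=z$ fixed) is a graph isomorphism between $\Gamma(h)$ and the subgraph of $\Gamma(f)$ induced by $H := \{x \in \B^V : x_{-I} = z\}$. The bijectivity of $\varphi$ is immediate: a point $y \in \B^I$ corresponds to the unique $x \in H$ with $x|_I = y$, namely the point obtained by gluing $y$ on $I$ and $z$ on $V \setminus I$.

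The core step is to match arcs. First, I would observe that any arc $x \to x \oplus e_i$ of $\Gamma(f)$ with $x \in H$ can remain in the induced subgraph only if $x \oplus e_i \in H$ as well; since $x_{-I} = z$, this forces $i \in I$. Conversely, any arc of $\Gamma(h)$ is of the form $y \to y \oplus e_i$ with $i \in I$. So it suffices to check, for $i \in I$ and $x \in H$ with $y = x|_I$, the equivalence
\[
f_i(x) \neq x_i \quad \Longleftrightarrow \quad h_i(y) \neq y_i .
\]
But this is exactly the defining property of the subnetwork: for $x \in H$, one has $h(x|_I) = f(x)|_I$, hence $h_i(y) = f_i(x)$, while $y_i = x_i$ because $i \in I$.

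The two previous observations give, for every $x \in H$, a bijection between the out-neighbours of $x$ in the induced subgraph of $\Gamma(f)$ and the out-neighbours of $\varphi(x)$ in $\Gamma(h)$, compatible with $\varphi$ in the sense that $x \oplus e_i$ is sent to $\varphi(x) \oplus e_i$. This is exactly what is needed to conclude that $\varphi$ is a graph isomorphism.

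There is no real obstacle here: the statement is essentially a tautology once the definitions of subnetwork and of $\Gamma$ have been stated. The only minor care needed is to make explicit that taking the induced subgraph of $\Gamma(f)$ on $H$ automatically discards arcs that would flip a component in $V \setminus I$, so that one does not have to argue about them separately.
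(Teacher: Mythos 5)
Your proof is correct and follows essentially the same route as the paper's: both arguments unwind the definitions of $\Gamma$ and of the subnetwork to show that for $x$ with $x_{-I}=z$ and $i\in I$ one has $f_i(x)\neq x_i$ if and only if $h_i(x|_I)\neq (x|_I)_i$, while arcs flipping a component outside $I$ are automatically excluded by taking the induced subgraph. You also correctly read the isomorphism as $x\mapsto x|_I$ (the restriction to $I$), which is what the paper's proof actually uses despite the $x_{-I}$ notation in the statement.
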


\begin{proof}
For all $x,y\in \B^V$ with $x_{-I}=y_{-I}=z$, and for all $i\in I$, we have $y=x\o e_i$ if and only if $x|_I=y|_I\o e_i$, thus $x\to y$ is an arc of $\Gamma(f)$ if and only if $x_{-I}\to y_{-I}$ is an arc of $\Gamma(h)$.
\end{proof}

\subsection{Criticality}

We say that $f$ is {\bf critical} for a property $\P$, if $f$ has the property $\P$ but no strict subnetworks of $f$ have this property. Let $\P_2$ be the property ``to have at least two fixed points'', and let $\P_0$ be the property ``to have no fixed point''. We say that $f$ is {\bf 2-critical} if $f$ is critical for the property $\P_2$, and we say that $f$ is {\bf 0-critical} if $f$ is critical for the property $\P_0$. Clearly, if $f$ is 2-critical, then there exists $x\in\B^V$ such that $x$ and $x\o 1$ are fixed points, and $f$ has no other fixed point (because if $x$ and $y$ are two fixed points and $x_i=y_i=\a$ then $x_{-i}$ and $y_{-i}$ are fixed points of $f^{i\a}$).

\begin{proposition}\label{pro:critical_1} Let $f$ be a network on $V$. 
\begin{enumerate}
\item
If the asynchronous state graph of $f$ has multiple attractors, then $f$ has a 2-critical subnetwork.   
\item
If $f$ is non-expansive and if the asynchronous state graph of $f$ has a cyclic attractor, then $f$ has no fixed point and thus has a 0-critical subnetwork.   
\end{enumerate}
\end{proposition}

\begin{proof}
Suppose that $\Gamma(f)$ has two distinct attractors $X,Y\subseteq \B^V$. Let $x\in X$ and $y\in Y$ be such that $d(x,y)$ is minimal. Let $I=\ONE(x\o y)$ so that $x_{-I}=y_{-I}=z$. Let $h$ be the subnetwork of $f$ induced by $z$. Suppose that $x|_I$ is not a fixed point of $h$. Then, there exists $i\in I$ with $x_i\neq h_i(x|_I)=f_i(x)$. Thus $\Gamma(f)$ has an arc $x\to x\o e_i$ and $x\o e_i\in X$ because $x\in X$. Since $x_i\neq y_i$, we have $d(x\o e_i,y)<d(x,y)$, a contradiction. Thus $x|_I$ is a fixed point of $h$, and we prove with similar arguments that $y|_I$ is a fixed point of $h$. Thus $h$ has  multiple fixed points. Thus $h$ has necessarily a 2-critical subnetwork $g$, and since $g$ is a subnetwork of $f$ the first point is proved.

\medskip
For the second point, suppose in addition that $f$ is non-expansive, that $Y$ is a cyclic attractor ({\em i.e.} $|Y|>1$) and that $X$ is punctual {\em i.e.} reduces to a fixed point $x$ of $f$. Since $y|_I$ is a fixed point of $h$, we have $y|_I=f(y)|_I$ and using the fact that $f$ is non expansive we get
\begin{multline*}
d(f(x),f(y))=d(x,f(y))=
d(x|_I,f(y)|_I)+d(x_{-I},f(y)_{-I})=\\[1mm]
d(x|_I,y|_I)+d(y_{-I},f(y)_{-I})=|I|+d(y_{-I},f(y)_{-I})\leq d(x,y)=|I|.
\end{multline*}
Thus $d(y_{-I},f(y)_{-I})=0$ so $y_{-I}=f(y)_{-I}$. Consequently, $y$ is a fixed point of $f$, and $Y$ cannot be cyclic, a contradiction. Consequently, if $f$ is non-expansive and if $Y$ is a cyclic attractor, then $X$ is also a cyclic attractor, so $f$ has no fixed point and thus it has necessarily a 0-critical subnetwork.
\end{proof}

\subsection{Interaction graphs}

Notions and notations concerning digraphs are consistent with \cite{BJG09}. In particular, cycles and paths are seen as digraphs and thus have no repeated vertices. A {\bf signed digraph} $G=(V,A)$ consists in a set of vertices $V$ and a set of (signed) arcs $A\subseteq V\times\{-1,1\}\times V$. An arc $(i,s,j)\in A$ is an arc from $i$ to $j$ of sign~$s$. We say that $G$ is {\bf simple} if for every vertices $i,j\in V$ there is at most one arc from $i$ to $j$. The (unsigned) digraph obtained by forgetting signs is denoted $|G|$: The vertex set of $|G|$ is $V$ and the arc set of $|G|$ is the set of couples $(i,j)$ such that $G$ has at least one arc from $i$ to~$j$. A signed digraph $G'=(V',A')$ is a subgraph of $G$ (notation $G'\subseteq G$) if $V'\subseteq V$ and $A'\subseteq A$. A {\bf cycle} of $G$ is a simple subgraph $C$ of $G$ such that $|C|$ is a directed cycle. A {\bf positive} (resp. {\bf negative}) {\bf cycle} of $G$ is a cycle of $G$ with an even (resp. odd) number of negative arcs. A cycle of $C$ of $G$ is {\bf chordless} if $|C|$ is an induced subgraph of $|G|$ ({\em i.e.} $|C|$ can be obtained from $|G|$ be removing vertices only). 

\medskip
Let $f$ be a network on $V$ and two components $i,j\in V$. The {\bf discrete derivative} of $f_i$ with respect to $j$ is the function $f_{ij}$ from $\B^V$ to $\{-1,0,1\}$ defined by 
\[
\forall x\in\B^V,\qquad f_{ij}(x)=f_i(x^{j1})-f_i(x^{j0}).
\]
Discrete derivatives are usually stored under the form of a matrix, the {\em Jacobian matrix}. However, for our purpose, it is more convenient to store them under the form of a signed digraph.  

\medskip
For all $x\in\B^V$, we call {\bf local interaction graph} of $f$ evaluated at point $x$, and we denote by $Gf(x)$, the signed digraph with vertex set $V$ such that, for all $i,j\in V$, there is a positive (resp. negative) arc from $j$ to $i$ if $f_{ij}(x)$ positive (resp. negative). Note that $Gf(x)$ is simple. The {\bf (global) interaction graph} of $f$ is the signed digraph denoted by $G(f)$ and defined by: The vertex set is $V$ and, for all vertices $i,j\in V$, there is a positive (resp. negative) arc from $j$ to $i$ if $f_{ij}(x)$ is positive (resp. negative) for at least one $x\in\B^V$. Thus each local interaction graph $Gf(x)$ is a subgraph of the global interaction graph $G(f)$. More precisely, $G(f)$ is obtained by taking the union of all the $Gf(x)$. 

\begin{proposition}\label{pro:subgraph}
Let $I$ be non-empty subset of $V$ and let $h$ be the subnetwork of $f$ induced by some point $z\in\B^{V\setminus I}$, and let $x\in\B^V$ with $x_{-I}=z$. Then:
\begin{enumerate}
\item
$Gh(x|_I)$ is an induced subgraph of $Gf(x)$;
\item
$G(h)$ is a subgraph of $G(f)$. 
\end{enumerate}
\end{proposition}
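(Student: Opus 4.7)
The plan is to reduce both items to a single identity between discrete derivatives of $h$ and $f$, and then just translate the definitions.

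The key step is to establish: for all $i,j\in I$ and all $x\in\B^V$ with $x_{-I}=z$,
\[
h_{ij}(x|_I)=f_{ij}(x).
\]
To see this, fix $\a\in\B$. Since $j\in I$, the point $(x^{j\a})\in\B^V$ still satisfies $(x^{j\a})_{-I}=z$, and $(x^{j\a})|_I=(x|_I)^{j\a}$. Applying the definition of $h$ as the subnetwork induced by $z$ gives $h_i((x|_I)^{j\a})=f_i(x^{j\a})$ for $\a\in\{0,1\}$. Subtracting the two values yields the claimed identity.

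For item 1, the identity shows that, for every pair $i,j\in I$, the sign of the arc from $j$ to $i$ in $Gh(x|_I)$ equals the sign of the arc from $j$ to $i$ in $Gf(x)$ (and both graphs contain such an arc under the same condition, $f_{ij}(x)\neq 0$). Since the vertex set of $Gh(x|_I)$ is exactly $I\subseteq V$ and its arcs are precisely those of $Gf(x)$ whose endpoints lie in $I$, it is an \emph{induced} subgraph of $Gf(x)$.

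For item 2, I argue that every arc of $G(h)$ is an arc of $G(f)$. Suppose $G(h)$ has an arc from $j$ to $i$ of sign $s\in\{-1,1\}$. By definition of the global interaction graph, there exists $y\in\B^I$ with $h_{ij}(y)=s$. Let $w\in\B^V$ be defined by $w|_I=y$ and $w_{-I}=z$. The identity above gives $f_{ij}(w)=h_{ij}(y)=s$, so $G(f)$ has an arc from $j$ to $i$ of sign $s$. Note that, in contrast with item 1, one cannot in general upgrade this to \emph{induced} subgraph: $G(f)$ may contain additional arcs between vertices of $I$ that are witnessed only by points $x$ with $x_{-I}\neq z$. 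There is no real obstacle here; the only care needed is to keep track of which coordinates move under the derivative ($j\in I$, hence inside the hyperplane defining $h$), which is precisely what allows the lift from $\B^I$ to $\B^V$ to go through.
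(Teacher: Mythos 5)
Your proof is correct and follows essentially the same route as the paper: both rest on the single identity $h_{ij}(x|_I)=f_{ij}(x)$ for $i,j\in I$, obtained by observing that $x^{j\alpha}$ stays in the hyperplane $x_{-I}=z$, from which item 1 is immediate and item 2 follows by lifting a witnessing point of $\B^I$ to $\B^V$. The extra remark on why $G(h)$ need not be an \emph{induced} subgraph of $G(f)$ is accurate but not required.
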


\begin{proof}
If $x\in\B^V$ and $x_{-I}=z$, then for all $i,j\in I$,   
\[
h_{ij}(x|_I)=h_i(x^{j1}|_I)-h_i(x^{j0}|_I)=f_i(x^{j1})-f_i(x^{j0})=f_{ij}(x).
\]
This proves {\em 1.} and {\em 2.} is an obvious consequence.
\end{proof}

\section{Some fixed point theorems}\label{sec:classical}

Robert proved in 1980 the following fundamental fixed point theorem \cite{R80,R86}. A short proof is given in \ref{appendix} (this proof uses an induction on subnetworks, a technic used in almost all proofs of this paper). 

\begin{theorem}[Robert 1980]\label{thm:robert}
If $G(f)$ has no cycle then $f$ has a unique fixed point.
\end{theorem}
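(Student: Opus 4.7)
The plan is to induct on $n=|V|$. The base case $n=0$ (or $n=1$) is immediate: when $n=0$ the empty function is trivially the unique fixed point, and when $n=1$ any $f:\B\to\B$ whose $G(f)$ has no cycle must have $f_1$ constant, which yields a unique fixed point.

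For the inductive step, I would exploit the fact that a digraph without cycles (a DAG) always has a source. Concretely, because $G(f)$ has no cycle, there exists a component $i\in V$ with no incoming arc, i.e.\ $f_{ij}(x)=0$ for every $j\in V$ and every $x\in\B^V$. This forces $f_i$ to be a constant function: call its value $\alpha\in\B$. Now form the immediate subnetwork $f^{i\alpha}$ on $V\setminus\{i\}$. By Proposition~\ref{pro:subgraph}, $G(f^{i\alpha})$ is a subgraph of $G(f)$, so in particular it still has no cycle. Since $f^{i\alpha}$ has strictly fewer components, the induction hypothesis gives a unique fixed point $y\in\B^{V\setminus\{i\}}$ of $f^{i\alpha}$.

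It remains to translate this back. I would define $x\in\B^V$ by $x_i=\alpha$ and $x_{-i}=y$. Then $f_i(x)=\alpha=x_i$ because $f_i$ is constantly $\alpha$, and for $j\ne i$, $f_j(x)=f^{i\alpha}_j(y)=y_j=x_j$ by the definition of $f^{i\alpha}$ and the fact that $y$ is a fixed point; hence $x$ is a fixed point of $f$. Conversely, any fixed point $x'$ of $f$ satisfies $x'_i=f_i(x')=\alpha$ (constancy of $f_i$), so $x'_{-i}$ is a fixed point of $f^{i\alpha}$; by the uniqueness guaranteed by the induction hypothesis, $x'_{-i}=y$, whence $x'=x$.

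I do not expect a real obstacle here; the only point to be a bit careful about is the direction of arcs in the convention used by the paper. In this convention an arc $j\to i$ exists precisely when $f_i$ depends on $x_j$, so I need a source (no incoming arcs) rather than a sink to make $f_i$ constant; this is what makes the ``fix $x_i=\alpha$'' operation reduce the problem to a strictly smaller, still acyclic subnetwork and allows the induction on subnetworks to go through.
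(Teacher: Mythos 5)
Your proof is correct and follows essentially the same route as the paper's own proof in Appendix~\ref{appendix}: induction on $|V|$, picking a vertex $i$ of $G(f)$ with no in-neighbor so that $f_i$ is a constant $\alpha$, and reducing to the acyclic immediate subnetwork $f^{i\alpha}$ via Proposition~\ref{pro:subgraph}. Your remark about needing a source rather than a sink under the paper's arc convention is exactly the right point of care.
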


\noindent
Robert also proved, in his french book~\cite{R95}, that if $G(f)$ has no cycle, then $\Gamma(f)$ has no cycle, so that every path of $\Gamma(f)$ leads to the unique fixed point of $f$ (strong convergence toward a unique fixed point).  

\medskip
The following theorem, proved by Aracena~\cite{A08} (see also \cite{ADG04}) in a slightly different setting, gives other very fundamental relationships between the interaction graph of $f$ and its fixed points. 

\begin{theorem}[Aracena 2008]\label{thm:ara}
Suppose that $G(f)$ is strongly connected (and contains at least one arc).
\begin{enumerate}
\item
If $G(f)$ has no negative cycle then $f$ has at least two fixed points.
\item
If $G(f)$ has no positive cycle then $f$ has no fixed point.
\end{enumerate}
\end{theorem}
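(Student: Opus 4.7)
My plan is to prove both parts using a common gauge-transformation trick, reducing Part 1 to the monotone case and tackling Part 2 by induction on $|V|$. The common tool is: for any $s\in\B^V$, the translated network $g(x)=f(x\oplus s)\oplus s$ satisfies, by direct discrete-derivative computation, $g_{ij}(x)=(-1)^{s_i+s_j}f_{ij}(x\oplus s)$, so $|G(g)|=|G(f)|$ and each arc $(j,\epsilon,i)$ of $G(f)$ becomes $(j,(-1)^{s_i+s_j}\epsilon,i)$ in $G(g)$. In particular, the sign of every cycle is preserved, and $x^*$ is a fixed point of $f$ iff $x^*\oplus s$ is a fixed point of $g$.

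For Part 1, the no-negative-cycle hypothesis together with strong connectivity says that $G(f)$ is Harary-balanced, so a spanning-tree argument produces $s\in\B^V$ such that every arc sign of $G(f)$ equals $(-1)^{s_i+s_j}$. The gauged network $g$ then has all positive arcs, i.e., $g$ is monotone. Strong connectivity of $G(g)$ with at least one arc forces each $g_i$ to be non-constant (a constant $g_i$ would leave $i$ without incoming arc; the trivial case $|V|=1$ reduces to a positive self-loop giving $g(x)=x$), and a monotone non-constant $g_i$ must satisfy $g_i(0)=0$ and $g_i(1)=1$. Hence $0$ and $1$ are distinct fixed points of $g$, yielding fixed points $s$ and $s\oplus 1$ of $f$.

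For Part 2, I would argue the contrapositive: if $f$ has a fixed point then $G(f)$ has a positive cycle. Gauging by a fixed point $x^*$, I reduce to deriving a contradiction from $f(0)=0$, $G(f)$ strongly connected, and no positive cycle. I propose induction on $|V|$; the base case $|V|=1$ is trivial because a negative self-loop forces $f(x)=\bar x$, incompatible with $f(0)=0$. For the inductive step, consider an immediate subnetwork $f^{i0}$ on $V\setminus\{i\}$: it satisfies $f^{i0}(0)=0$, and by Proposition~\ref{pro:subgraph} its interaction graph $G(f^{i0})$ sits inside $G(f)$ and hence still contains no positive cycle. If some non-trivial strongly connected component survives in $G(f^{i0})$, further fixing the remaining components to $0$ yields a smaller instance to which the inductive hypothesis applies, giving a contradiction. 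Otherwise $G(f^{i0})$ is acyclic and, by Robert's theorem, $0$ is its unique fixed point; the remaining task is to combine the arcs of $G(f)$ incident to $i$ with paths in $G(f^{i0})$ to exhibit a positive cycle through $i$ in $G(f)$, contradicting the hypothesis.

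The main obstacle is this splicing step in the inductive argument. Unlike Part 1, the ``no positive cycle'' condition admits no clean gauge-reduction to a canonical monotone-like form, so one cannot directly recycle the Knaster--Tarski flavor of argument. I expect the crux to be a careful choice of $i$ (for instance a vertex on a shortest cycle of $G(f)$) together with a delicate sign-tracking argument: the acyclicity of $G(f^{i0})$ forces the topological structure of paths between in-neighbors and out-neighbors of $i$, and the constraint $f(0)=0$ should force at least one of the resulting cycles through $i$ to be positive, contradicting the standing hypothesis.
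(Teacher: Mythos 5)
The paper does not actually prove this theorem: it is quoted from Aracena's work, and only Theorems~\ref{thm:robert}, \ref{thm:thomas} and \ref{thm:shihdong} are proved in the appendix. So there is no in-paper proof to compare against, and your proposal has to stand on its own. Your Part~1 does: the gauge identity $g_{ij}(x)=(-1)^{s_i+s_j}f_{ij}(x\oplus s)$ is correct, the balance (switching) argument is valid because in a strongly connected digraph with no negative cycle every closed walk is positive (closed walks decompose into directed cycles), hence all paths between two fixed vertices carry the same sign and a potential $s$ exists; after switching, all arcs positive gives a coordinatewise monotone $g$, strong connectivity with an arc forces every $g_i$ to be non-constant, and a non-constant monotone Boolean function satisfies $g_i(0)=0$, $g_i(1)=1$. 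That yields the two fixed points $s$ and $s\oplus 1$ of $f$. This part is complete and correct.

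Part~2, however, is not a proof but a plan with the decisive step missing, and you say so yourself: ``the remaining task is to combine the arcs of $G(f)$ incident to $i$ with paths in $G(f^{i0})$ to exhibit a positive cycle'' is precisely where all the content of the statement lives. Two concrete problems. First, there is no analogue of the balance reduction: ``no positive cycle'' does not give consistent path signs, because a closed walk decomposes into several negative cycles and the parity of their number is uncontrolled, so the whole gauge/potential machinery that carried Part~1 is unavailable --- you note this, but offer nothing in its place. Second, the inductive frame is shakier than you suggest: by Proposition~\ref{pro:subgraph}, $G(f^{i0})$ is only a subgraph of $G(f)$ restricted to $V\setminus\{i\}$, so arcs (not just strong connectivity) can disappear; after further fixing components to $0$ to isolate a strongly connected piece, more arcs can vanish, and in the terminal case where everything becomes acyclic, the path from an out-neighbour of $i$ back to an in-neighbour of $i$ that you need for splicing is guaranteed to exist in $G(f)$ but not in $G(f^{i0})$, and its sign is not controlled by $f(0)=0$ alone. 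So the ``delicate sign-tracking argument'' you defer to is the theorem, not a detail; as written, Part~2 has a genuine gap.
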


The following theorem can be deduce from Aracena theorem with an induction on strongly connected components of $G(f)$, see \ref{appendix}. It gives a nice ``proof by dichotomy'' of Robert' theorem: The {\em existence} of a fixed point is established under the absence of {\em negative cycle} while the {\em unicity} under the absence of {\em positive cycle}.   

\begin{theorem}\label{thm:thomas}~
\begin{enumerate}
\item
If $G(f)$ has no positive cycle then $f$ has at most one fixed point.
\item
If $G(f)$ has no negative cycle then $f$ has at least one fixed point.
\end{enumerate}
\end{theorem}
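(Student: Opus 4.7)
My plan is to induct on $|V|$, combining Aracena's theorem (Theorem~\ref{thm:ara}) with a reduction that peels off a sink strongly connected component of $G(f)$; both parts are proved in parallel by the same scheme. For the base case $|V|=1$, $G(f)$ is either empty (so $f$ is a constant with one fixed point), a positive loop (so $f$ is the identity, with two fixed points), or a negative loop (so $f$ is negation, with no fixed point), and both claims hold in each of these sub-cases.

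For the inductive step $|V|\geq 2$, I first dispose of the strongly connected case: since $|V|\geq 2$, a strongly connected $G(f)$ carries at least one arc, and Aracena's theorem yields the sharper conclusions ``no fixed point'' for part~1 and ``at least two fixed points'' for part~2. Otherwise I pick a sink strongly connected component $C\subsetneq V$ of $G(f)$. Because no arc of $G(f)$ leaves $C$, the coordinate function $f_i$ for $i\in V\setminus C$ does not depend on $x|_C$, so for every $z\in\B^C$ the subnetwork of $f$ induced by $z$ is one and the same network $g$ on $V\setminus C$, with interaction graph equal to the subgraph of $G(f)$ induced by $V\setminus C$. Hence $G(g)$ inherits the hypothesis and the induction hypothesis applies to $g$.

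Fixed points of $f$ are in bijection with pairs $(y,w)$, where $y$ is a fixed point of $g$ and $w$ is a fixed point of the subnetwork $h_y$ of $f$ on $C$ induced by $y$; by Proposition~\ref{pro:subgraph}, $G(h_y)$ is a subgraph of $G(f)$ on the vertex set $C$, which still has no positive (respectively, no negative) cycle, so induction applies to $h_y$ as well. Part~1 then follows because $g$ has at most one fixed point $y$ by induction, and for that $y$ (if it exists) the network $h_y$ has at most one fixed point, yielding at most one fixed point for $f$. Part~2 follows symmetrically: $g$ has a fixed point $y$ by induction and $h_y$ then has a fixed point, assembling to a fixed point of $f$.

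The main technical obstacle is the bookkeeping around the projection $g$: checking that $f_i$ is independent of $x|_C$ for $i\notin C$, so that $g$ is well-defined with $G(g)$ equal to the induced subgraph of $G(f)$ on $V\setminus C$, and that fixed points of $f$ decompose multiplicatively as pairs $(y,w)$. Once this is set up, the two parts of the theorem fall out of the single induction in parallel.
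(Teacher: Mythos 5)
Your proof is correct and follows essentially the same route as the paper's: the paper likewise combines Aracena's theorem with an induction that peels off an extremal strongly connected component and uses the product structure of fixed points, the only (cosmetic) differences being that it peels an \emph{initial} component (so the autonomous subnetwork is the component itself rather than its complement) and phrases the induction on the number of strongly connected components rather than on $|V|$. Note that, exactly as in the paper, your argument for part~1 in fact yields the stronger bound of at most \emph{one} fixed point.
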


\noindent
First point of Theorem~\ref{thm:thomas} can be seen as a Boolean version of first Thomas' rule, which asserts that the presence of a positive cycles in the interaction graph of a dynamical system is a necessary conditions for the presence of multiple stable states~\cite{T80} (see also \cite{KST07} and the references therein). 

\medskip
Second Thomas' rule asserts that the presence of a negative cycle is a necessary condition for the presence of cyclic attractors~\cite{T80,KST07}. Hence, the next theorem, proved in \cite{R10}, can be see as a Boolean version of second Thomas' rule.     

\begin{theorem}[Richard 2010]
If $G(f)$ has no negative cycle, then $\Gamma(f)$ has no cyclic attractors.
\end{theorem}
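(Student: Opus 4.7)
I would prove the contrapositive by induction on $|V|$: if $\Gamma(f)$ admits a cyclic attractor $X \subseteq \B^V$, then $G(f)$ contains a negative cycle. The base case $|V|=1$ is immediate: a cyclic attractor must be $\{0,1\}$ with arcs $0\to 1$ and $1\to 0$, which forces $f_{11}(0)=-1$, i.e.\ a negative self-loop.

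For the inductive step, first reduce to the essential case in which every component $i\in V$ is active on $X$, meaning that $x_i$ takes both values as $x$ ranges over $X$. If some $i$ is constant equal to $\alpha$ on $X$, then by Proposition~\ref{pro:subdynamics} applied to the immediate subnetwork $h=f^{i\alpha}$, the set $\{x_{-i} : x \in X\}$ is a cyclic attractor of $\Gamma(h)$; the inductive hypothesis produces a negative cycle in $G(h)$, and Proposition~\ref{pro:subgraph} transfers it to $G(f)$.

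Now argue by contradiction in the essential case: assume $G(f)$ has no negative cycle. Within every strongly connected component $S$ of $G(f)$ the arcs are consistently signed, so there is a $2$-coloring $c\colon S\to\B$ with $\mathrm{sign}(j\to i) = (-1)^{c(i)\oplus c(j)}$ for every arc $j\to i$ of $S$. Extend $c$ arbitrarily to $V$ and conjugate $f$ by the involution $\phi(x)_i = x_i \oplus c(i)$: the network $f' = \phi\circ f\circ\phi$ satisfies $\Gamma(f')\cong\Gamma(f)$ (so $\Gamma(f')$ still has a cyclic attractor $X'$) and every cycle of $G(f')$ is positive, because $f'_{ij}(x) = (-1)^{c(i)\oplus c(j)} f_{ij}(\phi(x))$. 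The problem therefore reduces to the following \emph{monotone lemma}: a network $g$ all of whose interaction cycles are positive cannot have a cyclic attractor in~$\Gamma(g)$. To attack the lemma, pick a simple cycle $x^0\to\cdots\to x^k=x^0$ of $\Gamma(g)$ inside the hypothetical cyclic attractor, a component $i$ flipped along it, and two consecutive flips of $i$ at times $t<t'$ of opposite direction. Then $g_i$ changes value from $\bar x^t_i$ to $x^t_i$ along the sub-walk $x^{t+1},\dots,x^{t'}$, which sits inside the hyperplane $\{x_i=\bar x^t_i\}$. Every change of $g_i$ along this sub-walk is witnessed by a non-zero derivative $g_{i j_s}(x^s)$, giving an arc $j_s\to i$ of $G(g)$ whose sign is $+1$ exactly when $j_s$ is down-flipped at step $s$.

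The main obstacle is the sign-tracking that closes this lemma: down-flips of neighbors of $i$ produce positive arcs and are compatible with monotone behavior, so a single change of $g_i$ does not by itself yield a negative arc. One must combine the arcs extracted in this way over all components flipped along the cycle, together with the parity constraint that every component is flipped an even number of times on the cycle (crucial use of the essential-case assumption, which guarantees both directions of flip are present for each $i$), to build a closed walk in $G(g)$ whose number of negative arcs is odd. Such a closed walk contains a negative simple cycle, contradicting positivity of all cycles in $G(g)$ and finishing the induction. The delicate combinatorial parity argument at this last step is where the real work lies; the rest of the plan consists of standard induction on subnetworks and the conjugation reduction to the monotone setting.
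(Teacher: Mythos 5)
First, a point of reference: the paper does not prove this theorem at all --- it is quoted from \cite{R10} --- so your attempt has to stand on its own. The routine parts of your plan are correct: the contrapositive induction on $|V|$, the base case, the reduction via Propositions~\ref{pro:subdynamics} and~\ref{pro:subgraph} to the case where every component takes both values on the attractor, and the switching identity $f'_{ij}(x)=(-1)^{c(i)\oplus c(j)}f_{ij}(\phi(x))$. But there is a genuine gap exactly where you say ``the real work lies'', and moreover the reduction preceding it is circular: your ``monotone lemma'' --- \emph{a network all of whose interaction cycles are positive has no cyclic attractor} --- is word for word the theorem itself, since ``no negative cycle'' and ``all cycles positive'' are the same hypothesis. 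So the conjugation step, as you have stated its conclusion, buys nothing. What the switching actually gives you is stronger and is the thing you should be using: inside each strongly connected component all \emph{arcs} become positive, i.e.\ $f'$ restricted to a strong component is a monotone map. Finally, the sign-tracking sketch (one arc $j_s\to i$ per change of $g_i$ along a cycle of the attractor, to be assembled into a closed walk with an odd number of negative arcs) is precisely the step you do not carry out, and nothing in the proposal indicates how the parity bookkeeping would close; as written this is a proof plan with its central lemma missing.

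Your strategy can, however, be completed without any sign-tracking. Let $I$ be an initial strong component of $G(f)$. Since no arc enters $I$, each $f_i$ with $i\in I$ depends only on $x|_I$, giving a network $g:\B^I\to\B^I$ whose interaction graph is the (strongly connected, negative-cycle-free) subgraph induced by $I$; after your switching, $g$ is monotone, i.e.\ every $g_i$ is non-decreasing for the componentwise order. For a monotone $g$, from any state $y$ one reaches $y\wedge g(y)$ by a path of $\Gamma(g)$ (perform the down-updates one at a time; monotonicity keeps each remaining one valid), hence some $y^*\leq g(y^*)$, and from there one climbs the chain $y^*\leq g(y^*)\leq g^2(y^*)\leq\cdots$ to a fixed point, again along a path of $\Gamma(g)$. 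So every attractor of $\Gamma(g)$ is a fixed point. The projection of your cyclic attractor $X$ onto $\B^I$ is a terminal strongly connected set of $\Gamma(g)$, hence a single fixed point $u$; thus every $x\in X$ satisfies $x|_I=u$, and Propositions~\ref{pro:subdynamics} and~\ref{pro:subgraph} let you pass to the subnetwork obtained by fixing $I$ to $u$ (whose interaction graph still has no negative cycle) and conclude by induction on $|V|$. With this replacement your outline becomes a complete and short proof; without it, the crucial step is absent.
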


\noindent
Note that this theorem generalizes the second point of Theorem~\ref{thm:thomas}: If $\Gamma(f)$ has no cyclic attractor, then all the attractors are fixed points, and since there always exists at least one attractor, $f$ has at least one fixed point.

\medskip
The next theorem is a ``local version'' of Robert's theorem. It has been conjectured and presented as a combinatorial analog of the Jacobian conjecture in {\cite{SH99}}. It has be proved by Shih and Dong in  {\cite{SD05}}. 

\begin{theorem}[Shih and Dong 2005]\label{thm:shihdong}
If $Gf(x)$ has no cycle for all $x\in\B^V$, then $f$ has a unique fixe point. 
\end{theorem}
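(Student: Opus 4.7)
The plan is to imitate the pattern used for Robert's theorem (Theorem~\ref{thm:robert}) and proceed by induction on $n=|V|$, exploiting Proposition~\ref{pro:subgraph} to push the hypothesis down to subnetworks. The base case $n=1$ is immediate: absence of a cycle in $Gf(x)$ rules out the self-loop at the unique vertex, so $f_{11}(x)=0$ for all $x$; hence $f_1$ is constant and $f$ has exactly one fixed point. For $n>1$, Proposition~\ref{pro:subgraph} yields that every local graph of a strict subnetwork $h$ of $f$ is an induced subgraph of some $Gf(x)$, hence still cycle-free. By the inductive hypothesis, every strict subnetwork of $f$ has a unique fixed point.

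For uniqueness of the fixed point of $f$ itself, I would argue that any two distinct fixed points $x,y$ of $f$ must be antipodal: if they agreed on some coordinate $i$ with $x_i=y_i=\alpha$, then $x_{-i}$ and $y_{-i}$ would be two distinct fixed points of the immediate subnetwork $f^{i\alpha}$, contradicting the inductive hypothesis. It therefore remains to exclude the antipodal case $f(x)=x$ and $f(x\oplus 1)=x\oplus 1$, in which $f_i(x)\neq f_i(x\oplus 1)$ for every $i$. Along any monotone Hamilton path from $x$ to $x\oplus 1$ in the hypercube, each component $f_i$ must change value at some step, witnessing a nonzero discrete derivative $f_{ij}(z)$ (and hence an arc $j\to i$ in $Gf(z)$) for some point $z$ on the path.

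For existence, I would assume for contradiction that $f$ has no fixed point; then the unique fixed point of each $f^{i\alpha}$ yields a point $z\in\B^V$ with $z_i=\alpha$ and $f(z)_{-i}=z_{-i}$ but $f_i(z)\neq\alpha$, again a rich family of nonzero derivatives. In both the uniqueness and existence sub-problems, the plan produces many arcs spread across many local interaction graphs $Gf(z)$, but the main obstacle --- and the genuine combinatorial content of Shih and Dong's result --- is to assemble those arcs into a single cycle sitting inside \emph{one} $Gf(z)$: producing a cycle only in the global interaction graph $G(f)$ would be too weak, as that conclusion is ruled out only under the strictly stronger hypothesis of Robert's theorem.
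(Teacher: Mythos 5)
Your proposal sets up the right induction and correctly reduces uniqueness to the antipodal case, but it stops exactly where the real work begins: as you yourself observe, the nonzero derivatives you collect live in \emph{different} local graphs $Gf(z)$, and you give no mechanism for assembling them into a cycle of a single $Gf(z)$ (nor for closing the existence sub-argument). That assembly problem is the whole difficulty of Shih--Dong's theorem, so what you have is an accurate description of the obstacle rather than a proof; this is a genuine gap.

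The idea you are missing --- and the one the paper uses --- is to strengthen the induction hypothesis from ``$f$ has a unique fixed point'' to ``the conjugate $\tilde f(x)=f(x)\oplus x$ is a bijection''. This pays off twice. First, existence and uniqueness collapse into one statement, since the fixed point is the unique preimage of $0$ under $\tilde f$, so your separate existence argument disappears. Second, and crucially, it lets you use acyclicity of a \emph{single} local graph instead of gluing arcs from many points: if $\tilde f$ is not injective, your coordinate-agreement argument, applied to collisions of $\tilde f$ rather than only to fixed points (which is precisely why the stronger hypothesis is needed), forces $\tilde f(x)=\tilde f(y)$ with $y=x\oplus 1$. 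Now $Gf(x)$ is acyclic, hence has a vertex $i$ of out-degree zero, i.e.\ $f(x^{i0})=f(x^{i1})$, whence $\tilde f(x^{i0})_{-i}=\tilde f(x^{i1})_{-i}$. Taking $\alpha=y_i$ gives
$\tilde f^{i\alpha}(x_{-i})=\tilde f(x^{i\alpha})_{-i}=\tilde f(x)_{-i}=\tilde f(y)_{-i}=\tilde f^{i\alpha}(y_{-i})$
with $x_{-i}\neq y_{-i}$, contradicting the bijectivity of the conjugate of the immediate subnetwork $f^{i\alpha}$ guaranteed by the induction hypothesis. No Hamilton-path or cycle-assembly argument is required.
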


\noindent
This theorem generalizes Robert's one: If $G(f)$ has no cycle, then it is clear that each local interaction graph $Gf(x)$ has no cycle (because $Gf(x)\subseteq G(f)$). The original proof of Shih and Dong is quite involved. A much more simple proof is given in \ref{appendix}. 

\medskip
In a similar way, Remy, Ruet and Thieffry~\cite{RRT08} proved a local version of the first point of Theorem~\ref{thm:thomas}. They thus got the uniqueness part of Shih-Dong's theorem under weaker conditions.  

\begin{theorem}[Remy, Ruet and Thieffry 2008]\label{thm:remy}
If $Gf(x)$ has no positive cycle for all $x\in\B^V$, then $f$ has at most one fixed point.  
\end{theorem}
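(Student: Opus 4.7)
The plan is to prove the contrapositive: if $f$ has two distinct fixed points $x,y$, then some $Gf(z)$ contains a positive cycle. I proceed by induction on $|V|$. The base case $|V|=1$ is trivial, since having two fixed points forces $f$ to be the identity on the single component, and the unique local graph then has a positive self-loop. For the inductive step, if $x_i=y_i=\alpha$ for some $i$, then $x_{-i}$ and $y_{-i}$ are two distinct fixed points of the immediate subnetwork $f^{i\alpha}$; the induction hypothesis yields a positive cycle in some $Gf^{i\alpha}(z_{-i})$, and Proposition \ref{pro:subgraph} realizes this cycle as an induced subgraph of $Gf(z)$ for any $z$ with $z_i=\alpha$ extending $z_{-i}$.

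The remaining case is the antipodal one, $y = x\oplus 1$. To simplify notation I would first perform a change of origin: the network $h(u)=f(u\oplus x)\oplus x$ has $0$ and $1$ as fixed points, and a direct computation of discrete derivatives gives $h_{ij}(u)=(-1)^{x_i+x_j}\,f_{ij}(u\oplus x)$. The factors $(-1)^{x_i+x_j}$ telescope along any closed walk $i_1\to i_2\to\dots\to i_k\to i_1$, so a cycle of $Gh(u)$ is positive if and only if the corresponding cycle of $Gf(u\oplus x)$ is positive. Thus we may assume $x=0$, $y=1$. By Proposition \ref{pro:critical}, I would also assume $f$ is 2-critical, so that each strict subnetwork of $f$ has a unique fixed point; concretely, $f^{i0}$ has unique fixed point $0$ and $f^{i1}$ has unique fixed point $1$ for every $i$.

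The main obstacle is to produce a positive cycle in a single $Gf(z)$ under these assumptions. A natural starting point is the monotone path $0=z^0,\,z^1=e_1,\,z^2=e_1+e_2,\,\dots,\,z^n=1$: integration gives $\sum_{k=0}^{n-1} f_{i,k+1}(z^k)=f_i(1)-f_i(0)=1$ for every $i$, so each vertex $i$ is the target of at least one positive arc in some $Gf(z^k)$. Choosing such an arc for each $i$ defines a functional map $\sigma\colon V\to V$ whose iteration traces a cycle in $V$, but the positive arcs realizing this cycle initially live in different local graphs. Consolidating them into a single $Gf(z)$ is the technical crux; I would attempt it by choosing the path adaptively rather than using the fixed monotone one, exploiting the 2-critical structure of $f$ and the coherence it imposes on the antipodal faces $z_i=0$ and $z_i=1$, in a manner parallel to the Shih-Dong induction given in the appendix. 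I expect this synchronisation step to be the delicate heart of the argument.
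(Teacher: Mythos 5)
Your reductions are all sound: the induction on a shared coordinate via Proposition~\ref{pro:subgraph}, the translation $h(u)=f(u\o x)\o x$ with the sign bookkeeping $h_{ij}(u)=(-1)^{x_i+x_j}f_{ij}(u\o x)$ (which indeed preserves the signs of cycles in the local graphs), and the passage to a 2-critical network. The integration identity along the monotone path, giving each $i$ a positive in-arc in \emph{some} $Gf(z^k)$, is also correct. But at that point you have only re-proved the first point of Theorem~\ref{thm:thomas} (the global Thomas rule: a positive cycle in $G(f)$), which the paper already establishes by a different route. The entire added content of the Remy--Ruet--Thieffry theorem is the localization of that positive cycle to a \emph{single} point $z$, and this is exactly the step you leave as ``the delicate heart of the argument'' without an argument. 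So the proposal is not a proof; it is a correct reduction to the hard case followed by an acknowledged gap at the hard case.

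The gap is not a routine one to fill. The witnesses $z^k$ for the positive in-arcs of distinct vertices are genuinely different points, and nothing in the hypotheses forces them to coincide or lets you average them; choosing the path ``adaptively'' runs into the problem that fixing the in-arc of $i_1$ at some $z$ may destroy the positive in-arc of $i_2$ at that same $z$. The Shih--Dong induction from the appendix does not transpose either: there the absence of \emph{any} cycle in $Gf(x)$ yields a vertex of out-degree zero, which is the pivot of the whole argument, whereas the absence of a \emph{positive} cycle yields no such vertex. (For what it is worth, the paper itself does not prove Theorem~\ref{thm:remy}; it cites \cite{RRT08}, whose proof is a separate, nontrivial combinatorial argument. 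Within this paper the only case where the localization comes for free is the non-expansive one, where Theorem~\ref{thm:circular-eosd} forces $Gf(x)=G(f)$ for all $x$.) To make your proposal a proof you would have to supply that synchronisation step in full; as written, it asserts the theorem rather than proving it.
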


In view of the previous theorem, it very natural think about a local version of the second point of Theorem~\ref{thm:thomas}. 
 
\begin{question}
Is it true that if $Gf(x)$ has no negative cycle for all $x\in\B^V$, then $f$ has at least one fixed point?  
\end{question}

The following theorem, proved in \cite{R11}, only gives a very partial answer to this question (see \cite{RR2012} for another very partial answer).  

\begin{theorem}[Richard 2011]\label{thm:non-exp}
If $f$ is non-expansive and if $Gf(x)$ has no negative cycle for all $x\in\B^V$, then $f$ has at least one fixed point.  
\end{theorem}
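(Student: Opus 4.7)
The plan is to argue by induction on $|V|$, reducing to the $0$-critical case via Proposition~\ref{pro:critical} and then deriving a contradiction at a global minimum of the potential $\phi(x)=\d{\f(x)}$. The base case $|V|=1$ is immediate: a non-expansive $f:\B\to\B$ without a fixed point must be the negation, whose sole local interaction graph carries a negative self-loop $f_{11}(x)=-1$, contradicting the hypothesis. For the inductive step, assume the result for networks on smaller ground sets and suppose for contradiction that $f$ on $V$ is non-expansive, has no negative cycle in any $Gf(x)$, and admits no fixed point. Non-expansiveness is inherited by subnetworks (the definition restricts Hamming distances), and Proposition~\ref{pro:subgraph} shows that the no-negative-local-cycle hypothesis passes to subnetworks as well. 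Proposition~\ref{pro:critical}(2) then produces a $0$-critical subnetwork of $f$, which the induction hypothesis forbids from being strict; hence $f$ itself is $0$-critical.

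I would then analyse the potential $\phi(x)=\d{\f(x)}$, which is at least $1$ everywhere since $f$ has no fixed point. Let $x^\ast$ globally minimize $\phi$ and fix any $i\in\ONE(\f(x^\ast))$. Non-expansiveness forces $f(x^\ast)\o f(x^\ast\o e_i)$ to have weight at most one, and the identity
\[
\f(x^\ast\o e_i)=\f(x^\ast)\o e_i\o\bigl(f(x^\ast)\o f(x^\ast\o e_i)\bigr)
\]
splits the situation into three mutually exclusive cases: if $f(x^\ast)=f(x^\ast\o e_i)$, then $\phi(x^\ast\o e_i)=\phi(x^\ast)-1$, violating minimality; if $f(x^\ast)\o f(x^\ast\o e_i)=e_i$, a direct calculation gives $f_{ii}(x^\ast)=-1$, a forbidden negative self-loop of $Gf(x^\ast)$; if $f(x^\ast)\o f(x^\ast\o e_i)=e_j$ for some $j\neq i$, then minimality forces $\f_j(x^\ast)=0$, and $Gf(x^\ast)$ carries an arc $i\to j$ whose sign equals $+1$ iff $x^\ast_i=x^\ast_j$. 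Only the third case survives, yielding a map $\sigma:\ONE(\f(x^\ast))\to V\setminus\ONE(\f(x^\ast))$ together with explicit signed arcs of $Gf(x^\ast)$; moreover non-expansiveness caps every out-degree of $Gf(x^\ast)$ at one, so $Gf(x^\ast)$ is a functional graph whose cycles are all positive by hypothesis.

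To close the argument, I would iterate the above analysis along the walk $x^{t+1}=x^t\o e_{\tau_t}$ with $\tau_{t+1}=\sigma^{(t+1)}(\tau_t)$, where $\sigma^{(t+1)}$ is the $\sigma$-map attached to the new basepoint $x^{t+1}$ (itself another global minimum of $\phi$, since case three preserves $\phi$). By finiteness of $\B^V$ the walk closes into a cycle in $\Gamma(f)$, and the sign rule above assigns each step a definite $\pm 1$. The principal obstacle — the technical heart of the proof — is to reconcile these pieces of data: the closed walk gathers signs from a succession of distinct local interaction graphs $Gf(x^t)$, whereas the no-negative-cycle hypothesis constrains only cycles sitting inside a single $Gf(x)$. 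Bridging this gap requires a careful parity argument, exploiting non-expansiveness to compare $Gf(x^t)$ with $Gf(x^{t+1})$ across a one-coordinate flip and using the companion arcs $\sigma(i)\to i$ whose signs are read off by applying the same case analysis at the shifted basepoint $x^\ast\o e_i$, so as ultimately to assemble from the walk a cycle inside one $Gf(x)$ whose product of signs equals $-1$, producing the desired contradiction.
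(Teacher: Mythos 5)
Your reduction to the $0$-critical case and the local analysis at a global minimizer $x^\ast$ of $\phi(x)=\d{\f(x)}$ are both correct: the three-way case split is sound, the sign computations check out, and the conclusion that $Gf(x^\ast)$ has out-degree at most one at every vertex, with each arc $i\to\sigma(i)$ landing outside $\ONE(\f(x^\ast))$ and $x^\ast\o e_i$ again a minimizer, is right. But the argument stops exactly where the theorem becomes hard. The closed walk $x^0\to x^1\to\cdots$ yields one signed arc $\tau_t\to\tau_{t+1}$ per step, each living in a \emph{different} local graph $Gf(x^t)$, and you acknowledge that turning this heterogeneous data into a negative cycle inside a \emph{single} $Gf(x)$ ``requires a careful parity argument'' which you do not supply. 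That missing step is the entire content of the theorem: everything you have actually established (each $Gf(x^t)$ is a union of paths and positive cycles, the walk closes) is a priori consistent with $f$ having no fixed point. Moreover the walk in component space need not be simple and the maps $\sigma^{(t)}$ need not agree across basepoints, so it is not even determined which arcs are supposed to assemble into the putative negative cycle, let alone that they can be transported into one local graph with the right signs. As it stands this is a plausible strategy, not a proof.

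For comparison, the paper does not prove this statement by a local argument at all: it is quoted from \cite{R11}, and is recovered inside the paper as a consequence of Theorem~\ref{thm:nonexp}. That route goes: a fixed-point-free $f$ has a $0$-critical subnetwork; a $0$-critical non-expansive network is negative-circular; and a negative-circular network $h$ satisfies $Gh(y)=G(h)$ for all $y$, so by Proposition~\ref{pro:subgraph} some $Gf(x)$ contains a negative cycle, a contradiction. The difficulty you defer is absorbed there into the proof of direction $\Leftarrow$ of point~2 of Theorem~\ref{thm:nonexp}, which is a long explicit reconstruction of $f$ as a specific negative-circular network via the block decomposition of the sequences $s(x)$, carried out over several pages of nested inductions. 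The size of that argument is a fair measure of how much work your ``careful parity argument'' would still have to do.
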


\begin{remark}
In all the theorems, Aracena one excepted, if the conditions are satisfied by $f$ then they are also satisfied by every subnetwork of $f$, in such a way that conclusions apply to $f$ and all its subnetworks. For instance, if $G(f)$ has no cycle, then the interaction graph $G(h)$ every subnetwork $h$ of $f$ has no cycle (since $G(h)\subseteq G(f)$), and by Robert's theorem, every subnetwork $h$ of $f$ has a unique fixed point. Such a remark is also valid for Theorem~\ref{thm:non-exp}, because if $f$ is non-expansive then all its subnetworks are non-expansive too. 
\end{remark}

\begin{remark}
Using the previous remark, we deduce from Theorem~\ref{thm:shihdong} (resp. Theorem \ref{thm:remy}) that if $Gf(x)$ has no cycle (resp. no positive cycle) for all $x\in\B^V$, then every subnetwork of $f$ has a unique (resp. at most one) fixed point, and thus, following Proposition~\ref{pro:critical_1}, $\Gamma(f)$ has a unique attractor (resp. at most one attractor). 
\end{remark}

\begin{remark}
Proceeding in a similar way, we deduce from Theorem~\ref{thm:non-exp} and Proposition~\ref{pro:critical_1} the following local version of second Thomas' rule for non-expansive networks: {\em If $f$ is non-expansive and if $Gf(x)$ has no negative cycle for all $x\in\B^V$, then $\Gamma(f)$ has no cyclic attractors.} 
\end{remark}

\section{A forbidden subnetwork theorem}\label{sec:main}

In this section, we introduce the class $\F$ of even- and odd-self dual networks, and we prove that it has the following property: Every subnetworks of $f$ (and $f$ itself in particular) has a unique fixed point if and only if $f$ has no subnetwork in $\F$.  

\medskip
We say that $f$ is {\bf self-dual} if 
\[
\forall x\in\B^V,\qquad f(x\o 1)=f(x)\o 1.
\]
Equivalently, $f$ is self-dual if $\f(x\o 1)=\f(x)$ for all $x\in\B^V$. 

\medskip
We say that $f$ is {\bf even} if the image set of $\f$ is the set of even points of~$\B^V$, that is,
\[
\f(\B^V)=\{x\in\B^V\,|\,\d{x}\text{ is even}\}
\]
and similarly, we say that $f$ is {\bf odd} if 
\[
\f(\B^V)=\{x\in\B^V\,|\,\d{x}\text{ is odd}\}.
\]
Thus, if $f$ is even, then there exists $x\in \B^V$ such that $\f(x)=0$, which is equivalent to say that $f(x)=x$. Hence, even networks have at least one fixed point. Obviously, odd networks have no fixed point. 

\medskip
We say that $f$ is {\bf even-self-dual} (resp. {\bf odd-self-dual}) if it is both even (resp. odd) and self dual. We will often implicitly use the following characterization: $f$ is even-self-dual (resp. odd-self-dual) if and only if 
\[
\forall z\in\B^V\text{ s.t. }\d{z}\text{ is even (resp. odd)},~\exists x\in\B^V\text{ s.t. }\f^{-1}(z)=\{x,x\o 1\}.
\]
It follows that if $f$ is even-self-dual then it has exactly two fixed points. 

\medskip
Our interest for even- or odd-self-dual networks lies in the following theorem, which is the main result of this paper. 

\begin{theorem}\label{main}
If $f$ has no even- or odd-self-dual subnetwork, then the conjugate of $f$ is a bijection.
\end{theorem}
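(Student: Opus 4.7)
My plan is to proceed by contrapositive, proving by induction on $n=|V|$ that if $\f$ fails to be a bijection, then $f$ admits an even- or odd-self-dual subnetwork. The base case $n=1$ is handled by direct inspection of the four networks $\B\to\B$: the identity yields $\f\equiv 0$ and is even-self-dual, the negation yields $\f\equiv 1$ and is odd-self-dual, while the two constant networks have bijective conjugates, so the implication is vacuous for them.

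For the inductive step ($n\geq 2$), I split on immediate subnetworks. If some $f^{i\a}$ has a non-bijective conjugate, I apply the induction hypothesis to $f^{i\a}$ to extract an even- or odd-self-dual subnetwork of $f^{i\a}$, which is also a subnetwork of $f$, and I am done. So I may assume every $\widetilde{f^{i\a}}$ is bijective while $\f$ is not, and the goal becomes to show that $f$ itself is even- or odd-self-dual.

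The crux is the following fiber identity, which I would isolate as a lemma: for every $i\in V$ and every $z\in\B^V$,
\[
|\f^{-1}(z)|+|\f^{-1}(z\o e_i)|=2.
\]
This is a direct unpacking of the bijectivity of $\widetilde{f^{i\a}}$. Indeed, the relation $\widetilde{f^{i\a}}(u)=\f(x)_{-i}$ (for $x$ the point with $x_i=\a$ and $x_{-i}=u$) shows that bijectivity amounts to: for each $V\setminus\{i\}$-part $z_{-i}$ and each $\a\in\B$, there is a unique $x$ with $x_i=\a$ and $\f(x)_{-i}=z_{-i}$, i.e.\ with $\f(x)\in\{z,z\o e_i\}$. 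Summing over $\a\in\B$ yields exactly two such points, giving the identity.

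With the identity in hand, the fiber-size map $\Phi(z):=|\f^{-1}(z)|$ satisfies $\Phi(z)+\Phi(z')=2$ on Hamming-adjacent points, and connectedness of $\B^V$ forces either $\Phi\equiv 1$ (so $\f$ is a bijection, excluded by assumption) or $\Phi$ takes values in $\{0,2\}$ alternating with the parity of $\d{z}$. In the alternating case, $\f(\B^V)$ is exactly the set of even points or exactly the set of odd points, so $f$ is even or odd. Moreover any size-$2$ fiber must be an antipodal pair: if $\f(x)=\f(y)$ with $x\neq y$ agreeing on some coordinate $i$ of value $\a$, then injectivity of $\widetilde{f^{i\a}}$ forces $x_{-i}=y_{-i}$ and hence $x=y$, a contradiction; so $y=x\o 1$, which is precisely the self-duality condition. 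Combining, $f$ is even-self-dual or odd-self-dual, completing the contrapositive. The main obstacle is discovering the fiber identity; once it is extracted, the bipartite parity structure on $\B^V$ closes the induction with essentially no further case analysis.
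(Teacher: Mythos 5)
Your proof is correct, and while its skeleton (induction on $|V|$, contrapositive, reduction to the case where every immediate subnetwork has a bijective conjugate) matches the paper's, the core of your argument is genuinely different and simpler. The paper first proves, for each point $z$ \emph{outside} the image of $\f$, that $|\f^{-1}(z\o e_i)|=2$; it then runs a counting argument on $|\f^{-1}(X)|$ to deduce $|\bar X|\geq|N(\bar X)|$ and $\bar X\cap N(\bar X)=\emptyset$, and finally invokes a separate combinatorial lemma (proved by its own induction on $|V|$) stating that such a set $\bar X$ must be the set of even or of odd points. Your fiber identity $|\f^{-1}(z)|+|\f^{-1}(z\o e_i)|=2$, valid for \emph{all} $z$ and read off directly from the bijectivity of $\f^{i0}$ and $\f^{i1}$, short-circuits all of this: the map $z\mapsto|\f^{-1}(z)|$ is then forced by connectedness and bipartiteness of the hypercube to be either constantly $1$ or to alternate between $0$ and $2$ with the parity of $\d{z}$, which is exactly the even/odd dichotomy. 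The self-duality step (a size-two fiber must be an antipodal pair, by injectivity of $\widetilde{f^{i\a}}$ whenever two preimages agree in some coordinate) is the same as the paper's. What your route buys is the elimination of the paper's Lemma~\ref{lem1} and its auxiliary induction; what you lose is only that stand-alone combinatorial statement about neighborhoods in the hypercube, which the paper does not reuse elsewhere.
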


The proof needs the following two lemmas.

\begin{lemma}\label{lem1}
Let $X$ be a non-empty subset of $\B^V$ and 
\[
N(X)=\{x\o e_i\,|\,x\in X,\,i\in V\}.
\]
If $X$ and $N(X)$ are disjoint and $|X|\geq |N(X)|$, then $X$ is either the set of even points of $\B^V$ or the set of odd points of~$\B^V$.
\end{lemma}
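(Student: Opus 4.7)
The plan is to view $X$ as an independent set in the Hamming graph on $\B^V$ and then, via a double-counting argument, promote the bound $|X|\ge |N(X)|$ to a strong regularity property that forces $X$ to be a parity class.

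First I would unpack the hypothesis: the disjointness $X\cap N(X)=\emptyset$ together with the definition of $N(X)$ says exactly that for every $x\in X$ the $|V|$ points $x\oplus e_i$ all lie in $N(X)$ and none in $X$. Counting pairs $(x,y)\in X\times N(X)$ at Hamming distance $1$ from the $X$-side gives exactly $|V|\cdot|X|$; counting them from the $N(X)$-side gives at most $|V|\cdot|N(X)|$, since each $y$ has only $|V|$ neighbors in total. Hence $|X|\le|N(X)|$, and combined with the hypothesis this forces $|X|=|N(X)|$ together with tightness: every $y\in N(X)$ has \emph{all} $|V|$ of its neighbors inside $X$.

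Next I would convert this tightness into a closure property of $X$: for every $x\in X$ and every pair of distinct $i,j\in V$, the point $x\oplus e_i$ lies in $N(X)$, so by the above its neighbor $x\oplus e_i\oplus e_j$ also lies in $X$. A straightforward induction then yields $x\oplus e_I\in X$ for every even-cardinality subset $I\subseteq V$. Fixing any $x_0\in X$, it follows that $X$ contains the entire parity class of $x_0$, which already has cardinality $2^{|V|-1}$. Conversely no point $y$ of the opposite parity can belong to $X$: for any $i\in V$ the point $y\oplus e_i$ has the same parity as $x_0$ and therefore belongs to $X$ by the previous step, while also lying in $N(X)$ by definition, contradicting $X\cap N(X)=\emptyset$. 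Thus $X$ is exactly one of the two parity classes of $\B^V$.

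I do not anticipate any serious obstacle: the proof rests on a single double-counting inequality whose equality case is routine, followed by an elementary closure argument under even XORs. The only point requiring a bit of care is the step where the tightness of the counting yields $N(N(X))\subseteq X$, since it is precisely there that the hypothesis $|X|\ge|N(X)|$ is used in its full strength.
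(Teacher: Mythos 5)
Your proof is correct, and it takes a genuinely different route from the paper's. The paper proceeds by induction on $|V|$: it fixes a coordinate $i$, slices $X$ into $X^0$ and $X^1$, verifies that each slice still satisfies the disjointness condition (via $N(X^\alpha)\subseteq N(X)^\alpha$), uses the cardinality hypothesis to see that at least one slice satisfies the induction hypothesis, deduces that this slice is a full parity class of $\B^{V\setminus\{i\}}$ of size $2^{|V|-1}$, bootstraps the hypothesis to the other slice, and finally notes that the two slices must be parity classes of opposite parity because they are disjoint. You instead argue globally with a single edge count in the hypercube: since every neighbour of a point of $X$ lies in $N(X)$, the count gives $|V|\,|X|\le |V|\,|N(X)|$, so the hypothesis $|X|\ge|N(X)|$ forces equality, and the equality case says that every point of $N(X)$ has all its neighbours in $X$, i.e.\ $N(N(X))\subseteq X$; this makes $X$ closed under $x\mapsto x\oplus e_i\oplus e_j$ for $i\neq j$ and hence under $x\mapsto x\oplus e_I$ for every even $|I|$, so $X$ contains a full parity class, and disjointness from $N(X)$ then excludes every point of the opposite parity. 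Your approach buys a shorter, induction-free argument that isolates a reusable fact --- the one-step vertex expansion $|X|\le|N(X)|$ of the hypercube together with its equality case --- whereas the paper's slicing induction is longer but matches the inductive structure used throughout the rest of the paper (Lemma~\ref{lem2}, Theorem~\ref{main}). The only caveat is the degenerate case $|V|=0$, where dividing the edge count by $|V|$ is illegitimate; this is harmless since the lemma is only invoked for nonempty $V$ (the paper's own induction likewise starts at $|V|=1$).
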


\begin{proof}
by induction on $|V|$. The case $|V|=1$ is obvious. So suppose that $|V|>1$. Let $X$ be a non-empty subset of $\B^V$ satisfying the conditions of the statement. Let $i\in V$ and $\a\in\B$. For all $Y\subseteq \B^V$, let us denote by $Y^\a$ be the subset of $\B^{V\setminus \{i\}}$ defined by $Y^\a=\{x_{-i}\,|\,x\in Y,x_i=\a\}$. 

\medskip
We first prove that $N(X^\a)\subseteq N(X)^\a$ and $X^\a\cap N(X^\a)=\emptyset$. Let $x\in\B^V$ with $x_i=\a$ be such that $x_{-i}\in N(X^\a)$. To prove that $N(X^\a)\subseteq N(X)^\a$, it is sufficient to prove that $x_{-i}\in N(X)^\a$. Since $x_{-i}\in N(X^\a)$, there exists $y\in\B^V$ with $y_i=\a$ and $j\in V$ with $j\neq i$ such that $y_{-i}\in X^\a$ and $x_{-i}=y_{-i}\o e_j$. So $x=y\o e_j$, and since $y_i=\a$, we have $y\in X$. Hence $x\in N(X)$ and since $x_i=\a$, we have $x_{-i}\in N(X)^\a$. We now prove that $X^\a\cap N(X^\a)=\emptyset$. Indeed, otherwise, there exists $x\in\B^V$ with $x_i=\a$ such that $x_{-i}\in X^\a\cap N(X^\a)$. Since $N(X^\a)\subseteq N(X)^\a$, we have $x_{-i}\in X^\a\cap N(X)^\a$, and since $x_i=\a$, we deduce that $x\in X\cap N(X)$, a contradiction.

\medskip
Since $N(X^\a)\subseteq N(X)^\a$, we have 
\[
|X|=|X^{0}|+|X^{1}|\geq |N(X)|=|N(X)^0|+|N(X)^1|\geq |N(X^{0})|+|N(X^{1})|.
\] 
So $|X^{0}|\geq |N(X^{0})|$ or $|X^{1}|\geq |N(X^{1})|$. Suppose that $|X^{0}|\geq |N(X^{0})|$, the other case being similar. Since $X^0\cap N(X^0)=\emptyset$, by induction hypothesis $X^{0}$ is either the set of even points of $\B^{V\setminus \{i\}}$ or the set of odd points of $\B^{V\setminus \{i\}}$. So in both cases, we have $|X^{0}|=|N(X^{0})|=2^{|V|-1}$. We deduce that $|X^{1}|\geq |N(X^{1})|$, and so, by induction hypothesis, $X^{1}$ is either the set of even points of $\B^{V\setminus \{i\}}$ or the set of odd points of $\B^{V\setminus \{i\}}$. But $X^{0}$ and $X^{1}$ are disjointed: For all $x\in\B^V$, if $x_{-i}\in X^{0}\cap X^{1}$, then $x^{i0}$ and $x^{i1}$ are two points of $X$, and $x^{i1}=x^{i0}\o e_i\in N(X)$, a contradiction. So if $X^{0}$ is the set of even (resp. odd) points of $\B^{V\setminus \{i\}}$, then $X^{1}$ is the set of odd (resp. even) points of $\B^{V\setminus \{i\}}$, and we deduce that $X$ is the set of even (resp. odd) points of $\B^V$.
\end{proof}

\begin{lemma}\label{lem2}
Suppose that the conjugate of every immediate subnetwork of a network $f$ is a bijection. If the conjugate of $f$ is not a bijection, then $f$ is even- or odd-self-dual.
\end{lemma}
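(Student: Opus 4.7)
The plan is to apply Lemma~\ref{lem1} to the complement of the image, $Y=\B^V\setminus\f(\B^V)$. Since $\f$ is not a bijection of the finite set $\B^V$, it fails to be surjective and $Y\neq\emptyset$. If the two hypotheses $Y\cap N(Y)=\emptyset$ and $|Y|\geq|N(Y)|$ can be checked, Lemma~\ref{lem1} will force $Y$ to be exactly one of the two parity classes of $\B^V$, so that $\f(\B^V)$ is the other one and $f$ is odd or even by definition. Self-duality will then drop out of the tightness of the counting argument used to verify the second hypothesis.

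For the disjointness, I fix $y\in Y$ and $i\in V$, and observe that for each $\a\in\B$ the bijectivity of $\widetilde{f^{i\a}}$ furnishes a unique $x\in\B^V$ with $x_i=\a$ and $\widetilde{f^{i\a}}(x_{-i})=y_{-i}$. By the very definition of the immediate subnetwork one has $\widetilde{f^{i\a}}(x_{-i})=\f(x)_{-i}$, so $\f(x)$ agrees with $y$ off coordinate $i$. Since $y\notin\f(\B^V)$, the $i$-th coordinates must disagree, whence $\f(x)=y\o e_i$. In particular $y\o e_i\in\f(\B^V)$, so $y\o e_i\notin Y$ and $Y\cap N(Y)=\emptyset$.

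Running the same construction for both $\a=0$ and $\a=1$ produces two distinct preimages of $y\o e_i$ (distinct because they differ at coordinate $i$), so every $w\in N(Y)$ satisfies $|\f^{-1}(w)|\geq 2$, while every $w\in\f(\B^V)\setminus N(Y)$ satisfies $|\f^{-1}(w)|\geq 1$. Summing gives
\[
2^{|V|}=\sum_{w\in\f(\B^V)}|\f^{-1}(w)|\geq|\f(\B^V)|+|N(Y)|=(2^{|V|}-|Y|)+|N(Y)|,
\]
which rearranges to $|N(Y)|\leq|Y|$, so Lemma~\ref{lem1} applies and $f$ is even or odd with $|Y|=2^{|V|-1}$.

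The above inequality is thus an equality, and every $w$ in the image has exactly two $\f$-preimages. To finish with self-duality, I argue that any two distinct $x\neq x'$ with $\f(x)=\f(x')$ must be antipodal: if $x_i=x'_i=\a$ for some $i$, then $\widetilde{f^{i\a}}(x_{-i})=\f(x)_{-i}=\f(x')_{-i}=\widetilde{f^{i\a}}(x'_{-i})$ contradicts the injectivity of $\widetilde{f^{i\a}}$. Hence $x'=x\o 1$, so $\f(x\o 1)=\f(x)$ holds identically and $f$ is self-dual. The delicate point is identifying the right set to feed into Lemma~\ref{lem1}: the image $\f(\B^V)$ itself does not obviously satisfy the disjointness hypothesis, whereas its complement $Y$ does, because the bijectivity of every immediate subnetwork forces each point missing from the image to have all of its Hamming neighbors inside the image.
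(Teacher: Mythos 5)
Your proof is correct and follows essentially the same route as the paper's: apply Lemma~\ref{lem1} to the complement $\bar X$ of the image of $\f$, using the bijectivity of the conjugates of the immediate subnetworks both to show $\bar X\cap N(\bar X)=\emptyset$ and to drive the preimage-counting inequality $|N(\bar X)|\leq|\bar X|$, and then extract self-duality from the resulting two-element fibres together with the antipodality argument. The only cosmetic difference is that the paper establishes $|\f^{-1}(y\o e_i)|=2$ exactly at the outset, whereas you first prove $\geq 2$ and recover exactness from the equality case of the count.
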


\begin{proof}
Suppose that $f:\B^V\to\B^V$ satisfies the conditions of the statement, and suppose that the conjugate $\f$ of $f$ is not a bijection. Let
\[
X=\f(\B^V),\qquad \bar X=\B^V\setminus X. 
\]
Since $\f$ is not a bijection, $\bar X$ is not empty. 

\medskip
Let us first prove that 
\[\tag{$*$}
\forall x\in \bar X,~\forall i\in V,\qquad |\f^{-1}(x\o e_i)|=2.
\]
Let $x\in \bar X$ and $i\in V$. By hypothesis, $\f^{i0}$ is a bijection, so there exists a unique point $y\in\B^V$ with $y_i=0$ such that $\f^{i0}(y_{-i})=x_{-i}$. Then, $\f(y)_{-i}=\f(y^{i0})_{-i}=\f^{i0}(y_{-i})=x_{-i}$. In other words $\f(y)\in\{x,x\o e_i\}$. Since $x\in \bar X$ we have $\f(y)\neq x$ and it follows that $\f(y)= x\o e_i$. Hence, we have proved that there exists a unique point $y\in\B^V$ such that $y_i=0$ and $\f(y)=x\o e_i$, and we prove with similar arguments that there exists a unique point $z\in\B^V$ such that $z_i=1$ and $\f(z)=x\o e_i$. This proves ($*$).

\medskip
We are now in position to prove that $f$ is even or odd. Let
\[
N(\bar X)=\{x\o e_i\,|\,x\in \bar X,\,i\in V\}.
\]
Following ($*$) we have $N(\bar X)\subseteq X$, and we deduce that 
\begin{multline*}
|\f^{-1}(X)|=|\f^{-1}(N(\bar X))|+|\f^{-1}(X\setminus N(\bar X))|\\[1mm]
\geq|\f^{-1}(N(\bar X))|+|X\setminus N(\bar X)|\\[1mm]=|\f^{-1}(N(\bar X))|+|X|-|N(\bar X)|.
\end{multline*}
Again following ($*$), $|\f^{-1}(N(\bar X))|=2|N(\bar X)|$ and we deduce that 
\begin{multline*}
|X|+|\bar X|=2^{|V|}=|\f^{-1}(X)|\geq 2|N(\bar X)|+|X|-|N(\bar X)|
=|N(\bar X)|+|X|.
\end{multline*}
Therefore, $|\bar X|\geq |N(\bar X)|$, and since $N(\bar X)\subseteq X=\B^{|V|}\setminus \bar X$, we have $\bar X\cap N(\bar X)=\emptyset$. So according to Lemma~1, $\bar X$ is either the set of even points of $\B^{|V|}$ or the set of odd points of $\B^{|V|}$. We deduce that in the first (second) case, $X$ is the set of odd (even) points of $\B^{|V|}$. Thus, $f$ is even or odd.

\medskip
It remains to prove that $f$ is self-dual. Let $x\in \B^V$. For all $i\in V$, since $\d{\f(x)}$ and $\d{\f(x)\o e_i}$ have not the same parity, and since $f$ is even or odd, we have $\f(x)\o e_i\in \bar X$. Thus, according to ($*$), the preimage of $(\f(x)\o e_i)\o e_i=\f(x)$ by $\f$ is of cardinality two. Consequently, there exists a point $y\in\B^{|V|}$, distinct from $x$, such that $\f(y)=\f(x)$. Let us proved that $x=y\o 1$. Indeed, if $x_i=y_i=0$ for some $i\in V$, then $\f^{i0}(x_{-i})=\f(x)_{-i}=\f(y)_{-i}=\f^{i0}(y_{-i})$. Since $x\neq y$, we deduce that $\f^{i0}$ is not a bijection, a contradiction. We show similarly that if $x_i=y_i=1$, then $\f^{i1}$ is not a bijection. So $x=y\o 1$. Consequently, $\f(x\o 1)=\f(x)$, and we deduce that $f$ is self-dual.
\end{proof}

\begin{proof}[Proof of Theorem~\ref{main}]
by induction on $|V|$. The case $|V|=1$ is obvious. So suppose that $|V|>1$ and suppose  that $f$ has no even- or odd-self-dual subnetwork. Under this condition, $f$ is neither even-self-dual nor odd-self-dual (since $f$ is a subnetwork of $f$), and every immediate subnetwork of $f$ has no even- or odd-self-dual subnetwork. So, by induction hypothesis, the dual of every strict subnetwork of $f$ is a bijection, and we deduce from Lemma~2 that the dual of $f$ is a bijection. 
\end{proof}

\begin{corollary}\label{cor1}
The conjugate of each subnetwork of $f$ is a bijection if and only if $f$ has no even- or odd-self-dual subnetworks.
\end{corollary}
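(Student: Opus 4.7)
The plan is to reduce both directions to Theorem~\ref{main} together with the (essentially definitional) fact that a subnetwork of a subnetwork of $f$ is itself a subnetwork of $f$.

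For the forward implication I would argue by contraposition. Suppose that $f$ has a subnetwork $h$ on some $W\subseteq V$ which is even- or odd-self-dual. Using the characterization recalled just before Theorem~\ref{main}, I would observe two things. If $h$ is odd-self-dual, then $\h(\B^W)$ is exactly the set of odd points of $\B^W$, a proper subset of $\B^W$, so $\h$ is not surjective. If $h$ is even-self-dual, then each even point $z\in\B^W$ has preimage $\h^{-1}(z)=\{x,x\o 1\}$ of size two, so $\h$ is not injective. Either way $\h$ fails to be a bijection, contradicting the hypothesis.

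For the converse I would fix an arbitrary subnetwork $h$ of $f$, say the one induced by $z\in\B^{V\setminus I}$, and first check the transitivity observation: any subnetwork $h'$ of $h$ induced by some $z'\in\B^{I\setminus J}$ coincides with the subnetwork of $f$ induced by the point of $\B^{V\setminus J}$ whose restriction to $V\setminus I$ equals $z$ and whose restriction to $I\setminus J$ equals $z'$. Hence, since $f$ has no even- or odd-self-dual subnetwork, neither does $h$. Theorem~\ref{main}, applied to $h$ in place of $f$, then gives that $\h$ is a bijection.

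I do not expect a real obstacle: the nontrivial content is entirely absorbed by Theorem~\ref{main}. The only points that require even a sentence of justification are the explicit non-bijectivity of $\h$ for an even- or odd-self-dual $h$, which is immediate from the preimage characterization, and the transitivity remark on subnetworks of subnetworks, which is immediate from the definition.
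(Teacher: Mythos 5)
Your proposal is correct and follows essentially the same route as the paper: the backward direction is Theorem~\ref{main} applied to each subnetwork $h$ via the transitivity of the subnetwork relation, and the forward direction is the observation that the conjugate of an even or odd network is not surjective (its image has cardinality $|\B^V|/2$). Your case split into non-surjectivity (odd) versus non-injectivity (even) is a harmless elaboration of the same point.
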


\begin{proof}
If $f$ has no even- or odd-self-dual subnetwork, then every subnetwork $h$ of $f$ has no even- or odd-self-dual subnetwork, and according to Theorem~\ref{main}, the conjugate of $h$ is a bijection.  Conversely, if the conjugate of each subnetwork of $f$ is a bijection, then $f$ has clearly no even- or odd-self-dual subnetwork (since if a network is even or odd, its conjugate sends $\B^V$ to a subset of $\B^V$ of cardinality $|\B^V|/2$).
\end{proof}

If $\f$ is a bijection then there is a unique point $x\in\B^V$ such that $\f(x)=0$, and this point is thus the unique fixed point of $f$. As an immediate consequence of this property and the previous corollary, we obtain the characterization mentioned at the beginning of the section. 

\begin{corollary}\label{cor2}
Each subnetwork of $f$ has a unique fixed point ($f$ in particular) if and only if $f$ has no even- or odd-self-dual subnetworks.
\end{corollary}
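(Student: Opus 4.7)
The plan is to deduce Corollary~\ref{cor2} almost immediately from Corollary~\ref{cor1} by exploiting the correspondence between fixed points of a network $h$ and preimages of $0$ under its conjugate: by definition $\h(x)=f(x)\o x$, so $h(x)=x$ is equivalent to $\h(x)=0$. Hence $h$ has a unique fixed point if and only if $\h^{-1}(0)$ is a singleton.

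For the ``if'' direction, I would assume $f$ has no even- or odd-self-dual subnetwork. Applying Corollary~\ref{cor1} gives that the conjugate $\h$ of every subnetwork $h$ of $f$ is a bijection, and in particular $\h^{-1}(0)$ is a singleton. By the correspondence above, every subnetwork of $f$ (and $f$ itself, since $f$ is a subnetwork of itself) has exactly one fixed point.

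For the ``only if'' direction, I would argue by contraposition: if $f$ has an even- or odd-self-dual subnetwork $h$, then $h$ fails to have a unique fixed point. Indeed, if $h$ is odd then $\h(\B^V)$ consists only of odd points, so $0\notin \h(\B^V)$ and $h$ has no fixed point at all; if $h$ is even-self-dual then, as already observed in the text just before Theorem~\ref{main}, $h$ has exactly two fixed points. Either case contradicts the hypothesis that every subnetwork of $f$ (in particular $h$) has a unique fixed point.

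The main obstacle has already been overcome in Theorem~\ref{main} and Corollary~\ref{cor1}; the present statement is a bookkeeping reformulation translating ``$\h$ is a bijection'' into the equivalent ``$h$ has a unique fixed point'', so no substantive new difficulty arises.
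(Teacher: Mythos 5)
Your proposal is correct and follows essentially the same route as the paper, which derives Corollary~\ref{cor2} as an immediate consequence of Corollary~\ref{cor1} together with the observation that a bijective conjugate has a unique preimage of $0$ (hence a unique fixed point), and uses the facts already recorded in Section~\ref{sec:main} that odd networks have no fixed point while even-self-dual networks have exactly two. No gaps.
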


\begin{remark}
As an immediate consequence of the two previous corollary, we get the following property, independently proved by Ruet in \cite{Ruet11}: {\em Each subnetwork of $f$ has a unique fixed point if and only if the conjugate of each subnetwork of $f$ is a bijection.} 
\end{remark}

\begin{example}\label{ex1} 	 
Consider the following network $f$ on $\{1,2,3\}$~\footnote{In all the examples, network components are integers, and if $V$ is a set of $n$ integers $i_1\!<\!i_2\!<\!\cdots\!<\!i_n$, then for all $x\in\B^V$ we write $x=(x_{i_1},x_{i_2},\dots,x_{i_n})$ or $x=x_{i_1}x_{i_2}\dots x_{i_n}$.}:
\[
\begin{array}{l}
f:\B^{\{1,2,3\}}\to\B^{\{1,2,3\}}
\end{array}
\qquad
\begin{array}{l}
f_1(x)=\overline{x_2}\land x_3\\[1mm]
f_2(x)=\overline{x_3}\land x_1\\[1mm]
f_3(x)=\overline{x_1}\land x_2.
\end{array}
\]
The table of $f$ and $\f$ are:
\[
\begin{array}{ccc}
x & f(x) & \f(x) \\\hline
000& 000 & 000\\ 
001& 100 & 101\\
010& 001 & 011\\
011& 001 & 010\\
100& 010 & 110\\
101& 100 & 001\\
110& 010 & 100\\
111& 000 & 111
\end{array}
\]
The six immediate subnetworks of $f$ are: 
\[
\begin{array}{l}
f^{10}:\B^{\{2,3\}}\to\B^{\{2,3\}}
\end{array}
\qquad
\begin{array}{l}
f^{10}_2(x)=\overline{x_3}\land 0=0\\[1mm]
f^{10}_3(x)=\overline{0}\land x_2=x_2
\end{array}
\]
\[
\begin{array}{l}
f^{11}:\B^{\{2,3\}}\to\B^{\{2,3\}}
\end{array}
\qquad
\begin{array}{l}
f^{11}_2(x)=\overline{x_3}\land 1=\overline{x_3}\\[1mm]
f^{11}_3(x)=\overline{1}\land x_2=0
\end{array}
\]
\[
\begin{array}{l}
f^{20}:\B^{\{1,3\}}\to\B^{\{1,3\}}
\end{array}
\qquad
\begin{array}{l}
f^{20}_1(x)=\overline{0}\land x_3=x_3\\[1mm]
f^{20}_3(x)=\overline{x_1}\land 0=0
\end{array}
\]
\[
\begin{array}{l}
f^{21}:\B^{\{1,3\}}\to\B^{\{1,3\}}
\end{array}
\qquad
\begin{array}{l}
f^{21}_1(x)=\overline{1}\land x_3=0\\[1mm]
f^{21}_3(x)=\overline{x_1}\land 1=\overline{x_1}
\end{array}
\]
\[
\begin{array}{l}
f^{30}:\B^{\{1,2\}}\to\B^{\{1,2\}}
\end{array}
\qquad
\begin{array}{l}
f^{30}_1(x)=\overline{x_2}\land 0=0\\[1mm]
f^{30}_2(x)=\overline{0}\land x_1=x_1
\end{array}
\]
\[
\begin{array}{l}
f^{31}:\B^{\{1,2\}}\to\B^{\{1,2\}}
\end{array}
\qquad
\begin{array}{l}
f^{31}_1(x)=\overline{x_2}\land 1=\overline{x_2}\\[1mm]
f^{31}_2(x)=\overline{1}\land x_1=0
\end{array}
\]
So each immediate subnetwork $f^{i\a}$ of $f$ has one component fixed to zero, so $f$ has no self-dual immediate subnetwork. Furthermore, each immediate subnetwork of $f^{i\a}$ is a constant ($0$), and thus is not self-dual. Furthermore, $f$ is not self-dual since $f(000)=f(111)=111$. Hence, $f$ has no self-dual subnetwork, and we deduce from Theorem~\ref{main} that the conjugate of $\tilde f$ of $f$ is a bijection. This can be easily verified on the table given above. 
\end{example}

\section{Generalization of Shih-Dong's theorem}\label{sec:shih}

In this section, we show, using Theorem~\ref{main}, that the condition ``$Gf(x)$ has no cycles for all $x$'' in Shih-Dong's theorem (Theorem~\ref{thm:shihdong}) can be weakened into a condition of the form ``$Gf(x)$ has short cycles for few points $x$''. The exact statement is given after the following useful proposition. 

\begin{proposition}\label{odd}
If $f$ is even or odd, then for every $x\in\B^V$ the out-degree of each vertex of $Gf(x)$ is odd. In particular, $Gf(x)$ has a cycle.
\end{proposition}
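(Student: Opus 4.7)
The plan is to translate the statement about out-degrees into a statement about the Hamming distance between $f(x^{j1})$ and $f(x^{j0})$, and then extract the parity of this distance from the evenness (or oddness) of $f$ via the identity $\f(y)=f(y)\o y$.

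First I would observe that the out-degree of a vertex $j$ in $Gf(x)$ is exactly the number of components $i\in V$ such that $f_i(x^{j1})\neq f_i(x^{j0})$, which equals $\d{f(x^{j1})\o f(x^{j0})}$. So it suffices to show that this integer is odd. The key computation is
\[
\f(x^{j1})\o \f(x^{j0}) \;=\; \bigl(f(x^{j1})\o x^{j1}\bigr)\o \bigl(f(x^{j0})\o x^{j0}\bigr) \;=\; \bigl(f(x^{j1})\o f(x^{j0})\bigr)\o e_j,
\]
since $x^{j1}\o x^{j0}=e_j$.

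Next I would use the hypothesis: if $f$ is even or odd, then every element of $\f(\B^V)$ has the same parity, so $\d{\f(x^{j1})}\equiv \d{\f(x^{j0})}\pmod 2$. Because the XOR of two points with equal parity has even weight, the left-hand side of the displayed identity has even weight. The right-hand side differs from $f(x^{j1})\o f(x^{j0})$ by the single bit $e_j$, hence flips the parity of the weight. Therefore $\d{f(x^{j1})\o f(x^{j0})}$ is odd, giving the odd out-degree claim.

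For the second assertion, every vertex of $Gf(x)$ has odd out-degree, hence out-degree at least $1$; so starting at any vertex and following outgoing arcs produces an infinite walk in the finite digraph $|Gf(x)|$, which must revisit a vertex and thereby exhibits a directed cycle. Choosing one arc of $Gf(x)$ above each arc of this directed cycle yields a cycle of $Gf(x)$ in the signed sense. There is no real obstacle here beyond keeping careful track of the parity shift induced by the $\o e_j$ term; everything else is a one-line bookkeeping argument.
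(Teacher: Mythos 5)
Your proof is correct and is essentially the paper's own argument: both reduce the out-degree of $j$ to $\d{f(x^{j1})\o f(x^{j0})}$, use the identity $\f(x^{j1})\o\f(x^{j0})=\bigl(f(x^{j1})\o f(x^{j0})\bigr)\o e_j$, and conclude from the equal parity of $\d{\f(x^{j1})}$ and $\d{\f(x^{j0})}$ that the $e_j$ term forces odd weight. The cycle conclusion via positive out-degrees is the same standard observation.
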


\begin{proof}
Let $j\in V$ and let $d$ be the out-degree of $j$ in $Gf(x)$. Since $d$ equals the number of
$i\in V$ such that $|f_{ij}(x)|=1$, and since 
\[
|f_{ij}(x)|=f_i(x^{j1})\o f_i(x^{j0})=f_i(x)\o f_i(x\o e_j),
\]
we have 
\[
\begin{array}{rcl}
d&=&\d{f(x)\o f(x\o e_j)}\\[1mm]
&=&\d{(x\o \f(x))\o ((x\o e_j)\o \f(x\o e_j))}\\[1mm]
&=&\d{\f(x)\o \f(x\o e_j)\o e_j}.
\end{array}
\]
So the parity of $d$ is the parity of $\d{\f(x)}+\d{\f(x\o e_i)}+1$. Hence, if $f$ is
even or odd, then $\d{\f(x)}$ and $\d{\f(x\o e_i)}$ have the same parity, so $\d{\f(x)}+\d{\f(x\o e_i)}$ is even, and it follows that $d$ is odd. 
\end{proof}

\begin{corollary}\label{cor:shih}
If, for every $1\leq k\leq |V|$, there exists at most $2^k-1$ points $x\in
\B^V$ such that $Gf(x)$ has a cycle of length at most $k$, then $f$ has a
unique fixed point.
\end{corollary}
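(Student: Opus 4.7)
The plan is to derive the corollary directly from the forbidden subnetwork theorem (Corollary~\ref{cor2}), Proposition~\ref{odd}, and the induced-subgraph property in Proposition~\ref{pro:subgraph}. I would argue by contrapositive: assume $f$ does not have a unique fixed point, and produce, for some $1 \leq k \leq |V|$, at least $2^k$ points $x \in \B^V$ at which $Gf(x)$ contains a cycle of length at most $k$.

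If $f$ fails to have a unique fixed point, then by Corollary~\ref{cor2} some subnetwork $h$ of $f$ is even-self-dual or odd-self-dual. Write $I \subseteq V$ for the coordinate set of $h$ and $z \in \B^{V \setminus I}$ for the point inducing $h$, and set $k = |I|$. Note $1 \leq k \leq |V|$.

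Since $h$ is even or odd, Proposition~\ref{odd} gives that for every $y \in \B^I$, every vertex of $Gh(y)$ has odd (hence nonzero) out-degree. A digraph on $k$ vertices in which every vertex has nonzero out-degree contains a directed cycle of length at most $k$: start anywhere and follow outgoing arcs until a vertex repeats, which must happen within $k$ steps. So $Gh(y)$ has a cycle of length at most $k$ for every $y \in \B^I$.

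Now for each $y \in \B^I$, let $x(y) \in \B^V$ be the unique point with $x(y)_{-I} = z$ and $x(y)|_I = y$. By Proposition~\ref{pro:subgraph}, $Gh(y)$ is an induced subgraph of $Gf(x(y))$, so the cycle of length $\leq k$ in $Gh(y)$ is also present in $Gf(x(y))$. The map $y \mapsto x(y)$ is injective, so this produces $2^k$ distinct points $x \in \B^V$ with $Gf(x)$ having a cycle of length at most $k$. This contradicts the hypothesis that there are at most $2^k - 1$ such points. The main work is really already done in Proposition~\ref{odd} and Theorem~\ref{main}; the only non-routine step is the counting argument showing that an even- or odd-self-dual subnetwork on $k$ coordinates forces a cycle of length $\leq k$ in $Gf(x)$ for a full $2^k$-cube of points, and that step is immediate once one notices that $k$ is exactly the dimension of the subcube on which $h$ lives.
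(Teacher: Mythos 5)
Your proof is correct, and it reaches the conclusion by a more direct route than the paper. The paper proves, by induction on $|V|$, that the hypothesis forces $f$ to have no even- or odd-self-dual subnetwork: the inductive step checks that each immediate subnetwork $f^{i\a}$ inherits the hypothesis (via Proposition~\ref{pro:subgraph}), and Proposition~\ref{odd} is invoked only at the top level, for $k=|V|$, to rule out $f$ itself being even- or odd-self-dual. You instead go straight to the offending subnetwork: if $f$ fails to have a unique fixed point, Corollary~\ref{cor2} (equivalently Theorem~\ref{main}) yields an even- or odd-self-dual subnetwork $h$ living on a subcube of dimension $k=|I|$, Proposition~\ref{odd} applied to $h$ gives a cycle of length at most $k$ in $Gh(y)$ for every one of the $2^k$ points $y$ of that subcube, and Proposition~\ref{pro:subgraph} transfers these cycles to $Gf(x(y))$ for $2^k$ distinct points $x(y)$ of $\B^V$, contradicting the bound $2^k-1$. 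The two arguments use exactly the same three ingredients, but yours dispenses with the induction and makes transparent where the threshold $2^k-1$ comes from: a forbidden subnetwork of dimension $k$ contaminates an entire $k$-dimensional subcube of local interaction graphs. The paper's inductive formulation has the mild advantage of isolating the reusable pattern (a hypothesis on local graphs that passes to immediate subnetworks propagates down the subnetwork lattice), which the author reuses verbatim for the analogous corollaries in Sections~\ref{sec:non-exp} and~\ref{sec:and}; your version would need the same subcube-counting argument restated there, which it handles equally well.
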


\begin{proof}
According to Theorem \ref{main}, it is sufficient to prove, by induction on
$|V|$, that if $f$ satisfies the conditions of the statement, then $f$
has no even- or odd-self-dual subnetwork. The case $|V|=1$ is
obvious, so suppose that $|V|>1$. Suppose also that $f$ satisfies the conditions of
the statement. Let $i\in V$ and $\a\in\B$. Since $Gf^{i\a}(x_{-i})$ is the subgraph of
$Gf(x^{i\a})$ for all $x\in\B^V$ (cf. Proposition~\ref{pro:subgraph}), $f^{i\a}$  satisfies the condition of the theorem. Thus, by induction hypothesis, $f^{i\a}$ has no even- or
odd-self-dual subnetwork. So $f$ has no even- or odd-self-dual
strict subnetwork. If $f$ is itself even- or odd-self-dual, then by
Proposition~\ref{odd}, $Gf(x)$ has a cycle for every
$x\in\B^{V}$, so $f$ does not satisfy that conditions of the
statement (for $k=|V|$). Therefore, $f$ has no even- or odd-self-dual
subnetwork. 
\end{proof}

\begin{example}[Continuation of Example~\ref{ex1}]\label{ex2}
Take again the 3-dimensional network $f$ defined by
\[
\begin{array}{l}
f_1(x)=\overline{x_2}\land x_3\\[1mm]
f_2(x)=\overline{x_3}\land x_1\\[1mm]
f_3(x)=\overline{x_1}\land x_2.
\end{array}
\]
We have seen that $f$ has no self-dual subnetwork. So it satisfies the
conditions of Theorem~\ref{main}, but not the conditions of
Shih-Dong's theorem. Indeed, $Gf(000)$ and $Gf(111)$ have a cycle \footnote{Arrows correspond to positive arcs and bars to negative arcs.}:
\[
\begin{array}{cccc}
\begin{array}{c}
Gf(000)\\[3mm]
\begin{array}{c}
\begin{tikzpicture}
\pgfmathparse{0.8}
\node[outer sep=1,inner sep=1,circle,draw,thick] (1) at (90:\pgfmathresult){\small$1$};
\node[outer sep=1,inner sep=1,circle,draw,thick] (2) at (-30:\pgfmathresult){\small$2$};
\node[outer sep=1,inner sep=1,circle,draw,thick] (3) at (210:\pgfmathresult){\small$3$};
\path[thick]
(1) edge[->] (2)
(2) edge[->] (3)
(3) edge[->] (1)
;
\end{tikzpicture}
\end{array}
\end{array}
&
\begin{array}{c}
Gf(001)\\[3mm]
\begin{array}{c}
\begin{tikzpicture}
\pgfmathparse{0.8}
\node[outer sep=1,inner sep=1,circle,draw,thick] (1) at (90:\pgfmathresult){\small$1$};
\node[outer sep=1,inner sep=1,circle,draw,thick] (2) at (-30:\pgfmathresult){\small$2$};
\node[outer sep=1,inner sep=1,circle,draw,thick] (3) at (210:\pgfmathresult){\small$3$};
\path[thick]
(2) edge[-|] (1)
(2) edge[->] (3)
(3) edge[->] (1)
;
\end{tikzpicture}
\end{array}
\end{array}
&
\begin{array}{c}
Gf(010)\\[3mm]
\begin{array}{c}
\begin{tikzpicture}
\pgfmathparse{0.8}
\node[outer sep=1,inner sep=1,circle,draw,thick] (1) at (90:\pgfmathresult){\small$1$};
\node[outer sep=1,inner sep=1,circle,draw,thick] (2) at (-30:\pgfmathresult){\small$2$};
\node[outer sep=1,inner sep=1,circle,draw,thick] (3) at (210:\pgfmathresult){\small$3$};
\path[thick]
(1) edge[->] (2)
(1) edge[-|] (3)
(2) edge[->] (3)
;
\end{tikzpicture}
\end{array}
\end{array}
&
\begin{array}{c}
Gf(011)\\[3mm]
\begin{array}{c}
\begin{tikzpicture}
\pgfmathparse{0.8}
\node[outer sep=1,inner sep=1,circle,draw,thick] (1) at (90:\pgfmathresult){\small$1$};
\node[outer sep=1,inner sep=1,circle,draw,thick] (2) at (-30:\pgfmathresult){\small$2$};
\node[outer sep=1,inner sep=1,circle,draw,thick] (3) at (210:\pgfmathresult){\small$3$};
\path[thick]
(1) edge[-|] (3)
(2) edge[-|] (1)
(2) edge[->] (3)
;
\end{tikzpicture}
\end{array}
\end{array}
\\[12mm]
\begin{array}{c}
Gf(100)\\[3mm]
\begin{array}{c}
\begin{tikzpicture}
\pgfmathparse{0.8}
\node[outer sep=1,inner sep=1,circle,draw,thick] (1) at (90:\pgfmathresult){\small$1$};
\node[outer sep=1,inner sep=1,circle,draw,thick] (2) at (-30:\pgfmathresult){\small$2$};
\node[outer sep=1,inner sep=1,circle,draw,thick] (3) at (210:\pgfmathresult){\small$3$};
\path[thick]
(1) edge[->] (2)
(3) edge[-|] (2)
(3) edge[->] (1)
;
\end{tikzpicture}
\end{array}
\end{array}
&
\begin{array}{c}
Gf(101)\\[3mm]
\begin{array}{c}
\begin{tikzpicture}
\pgfmathparse{0.8}
\node[outer sep=1,inner sep=1,circle,draw,thick] (1) at (90:\pgfmathresult){\small$1$};
\node[outer sep=1,inner sep=1,circle,draw,thick] (2) at (-30:\pgfmathresult){\small$2$};
\node[outer sep=1,inner sep=1,circle,draw,thick] (3) at (210:\pgfmathresult){\small$3$};
\path[thick]
(2) edge[-|] (1)
(3) edge[-|] (2)
(3) edge[->] (1)
;
\end{tikzpicture}
\end{array}
\end{array}
&
\begin{array}{c}
Gf(110)\\[3mm]
\begin{array}{c}
\begin{tikzpicture}
\pgfmathparse{0.8}
\node[outer sep=1,inner sep=1,circle,draw,thick] (1) at (90:\pgfmathresult){\small$1$};
\node[outer sep=1,inner sep=1,circle,draw,thick] (2) at (-30:\pgfmathresult){\small$2$};
\node[outer sep=1,inner sep=1,circle,draw,thick] (3) at (210:\pgfmathresult){\small$3$};
\path[thick]
(1) edge[->] (2)
(1) edge[-|] (3)
(3) edge[-|] (2)
;
\end{tikzpicture}
\end{array}
\end{array}
&
\begin{array}{c}
Gf(111)\\[3mm]
\begin{array}{c}
\begin{tikzpicture}
\pgfmathparse{0.8}
\node[outer sep=1,inner sep=1,circle,draw,thick] (1) at (90:\pgfmathresult){\small$1$};
\node[outer sep=1,inner sep=1,circle,draw,thick] (2) at (-30:\pgfmathresult){\small$2$};
\node[outer sep=1,inner sep=1,circle,draw,thick] (3) at (210:\pgfmathresult){\small$3$};
\path[thick]
(1) edge[-|] (3)
(3) edge[-|] (2)
(2) edge[-|] (1)
;
\end{tikzpicture}
\end{array}
\end{array}
\end{array}
\]
However, $f$ satisfies the condition of Corollary~\ref{cor:shih} (there is $0<2^1$ point $x$ such that $Gf(x)$ has a cycle of length~at most $1$, $0<2^2$ point $x$ such that $Gf(x)$ has a cycle of length at most $2$, and $2<2^3$ points $x$ such that $Gf(x)$ has a cycle of length at most $3$). From the local interactions graphs given above, we deduce that the global interaction graph $G(f)$ of the network is the following:
\[
\begin{tikzpicture} 
\pgfmathparse{1.9}
\node[outer sep=1.5,inner sep=1.5,circle,draw,thick] (1) at (90:1.2){$1$};
\node[outer sep=1.5,inner sep=1.5,circle,draw,thick] (2) at (-30:1.2){$2$};
\node[outer sep=1.5,inner sep=1.5,circle,draw,thick] (3) at (210:1.2){$3$};
\path[thick]
(1) edge[bend left=15,->] (2)
(2) edge[bend left=15,->] (3)
(3) edge[bend left=15,->] (1)
(1) edge[bend left=15,-|] (3)
(3) edge[bend left=15,-|] (2)
(2) edge[bend left=15,-|] (1)
;
\end{tikzpicture}
\]
\end{example}

In addition to Proposition~\ref{odd}, we have the following property on the  structure of the interactions of even- and odd-self-dual networks.  

\begin{proposition}
If $f$ is a critical even- or odd-self-dual network then $G(f)$ is strongly connected.  
\end{proposition}

\begin{proof}
Suppose that $f$ is critical even- or odd-self-dual. If $G(f)$ is not strongly connected, then it has an initial strongly connected component $I$ (no arc from $V\setminus I$ to $I$) strictly included in $V$. Let $h$ be the subnetwork of $f$ induced by some point $z\in\B^I$. Since $f$ is critical and since $h$ is a strict subnetwork, according to Theorem~\ref{main}, $\h$ is a bijection. Thus there exists $x,y\in\B^V$ with $x|_I=y|_I=z$, such that $\h(x_{-I})$ and $\h(y_{-I})$ have not the same parity. Since $\f(x)_{-I}=\h(x_{-I})$ and $\f(y)_{-I}=\h(y_{-I})$ we have    
\[
\d{\f(x)}=\d{\f(x)|_I}+\d{\h(x_{-I})},\qquad 
\d{\f(y)}=\d{\f(y)|_I}+\d{\h(y_{-I})}.
\] 
Since $G(f)$ has no arc from $V\setminus I$ to $I$, and since $x|_I=y|_I$ we have $f(x)|_I=f(y)|_I$ and thus $\f(x)|_I=\f(y)|_I$. Thus $\f(x)$ and $\f(y)$ have not the same parity, a contradiction.
\end{proof}

\section{Weak asynchronous convergence}\label{sec:asyn}

We say that the asynchronous state graph $\Gamma(f)$ describes a {\bf strong asynchronous convergence} toward a unique fixed $x$ if $\Gamma(f)$ is acyclic and admits $x$ as unique attractor. We say that $\Gamma(f)$ describes a {\bf weak asynchronous convergence} toward a unique fixed point $x$ if $\Gamma(f)$ admits $x$ as unique attractor (equivalently, $f$ has a unique fixed point $x$ and $\Gamma(f)$ has a path from any point $y$ to $x$). The following corollary shows that the absence of even- or odd-self-dual subnetwork implies a weak asynchronous convergence toward a unique fixed point. 

\begin{corollary}\label{corThomas} 
If $f$ has no even or odd self-dual subnetwork, then $f$ has a unique fixed point $x$, and for all $y\in\B^V$, the asynchronous state graph of $f$ contains a path from $y$ to $x$ of length $d(x,y)$.
\end{corollary}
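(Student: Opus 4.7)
The plan is to first invoke Corollary~\ref{cor2} to get the unique fixed point $x$, and then build the path from $y$ to $x$ by induction on $d(y,x)$. The base case $y=x$ is trivial (empty path). For the inductive step, it suffices to find some $i\in \ONE(y\o x)$ with $f_i(y)\neq y_i$: such an $i$ yields an arc $y\to y\o e_i$ in $\Gamma(f)$ and moves us strictly closer to $x$, so that $d(y\o e_i,x)=d(y,x)-1$ and the induction hypothesis applies to $y\o e_i$.

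The heart of the argument is thus the following claim: whenever $y\neq x$, the set $\ONE(\f(y))\cap \ONE(y\o x)$ is non-empty. I would prove this by contradiction. Suppose that $f_i(y)=y_i$ for every $i\in J:=\ONE(y\o x)$, and observe that $J$ is non-empty. Let $h$ be the subnetwork of $f$ on $J$ induced by $y|_{V\setminus J}$; note that $y|_{V\setminus J}=x|_{V\setminus J}$ since $J$ contains exactly the components where $y$ and $x$ differ. Then, by definition of subnetwork, $h(x|_J)=f(x)|_J=x|_J$, so $x|_J$ is a fixed point of $h$; and for every $i\in J$, $h(y|_J)_i=f(y)_i=y_i$, so $y|_J$ is also a fixed point of $h$. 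Since $x|_J\neq y|_J$, the subnetwork $h$ has at least two fixed points, contradicting Corollary~\ref{cor2} applied to the subnetwork $h$ of $f$ (which inherits the ``no even- or odd-self-dual subnetwork'' property).

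The main obstacle, if any, is only notational: making sure that the subnetwork $h$ is set up with the correct index set and inducing point so that both $x|_J$ and $y|_J$ are indeed fixed points of $h$. Once this is in place, the argument is a clean combination of induction on $d(y,x)$ together with the forbidden-subnetwork characterization.
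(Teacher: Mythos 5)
Your proof is correct, but it is organized differently from the paper's. The paper proves the corollary by induction on $|V|$: if $x_i=y_i=\a$ for some $i$, it passes to the immediate subnetwork $f^{i\a}$, applies the induction hypothesis there, and lifts the resulting path back to $\Gamma(f)$ via Proposition~\ref{pro:subdynamics}; the only remaining case is $y=x\o 1$, where it uses merely that $y$ is not a fixed point of $f$ to make one step and land in the previous case. You instead induct on the Hamming distance $d(y,x)$ and, at every step, apply the uniqueness of fixed points to the subnetwork induced on the disagreement set $J=\ONE(y\o x)$ to produce a component of $J$ that wants to flip. Your key claim is correct (including the degenerate case $J=V$, where $h=f$ and the contradiction is with the uniqueness of the fixed point of $f$ itself), and in fact the same ``restrict to the disagreement set'' argument appears in the paper in the proof of Proposition~2 on attractors. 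The trade-off: the paper's route leans on the structural Proposition~\ref{pro:subdynamics} and keeps the per-step reasoning minimal, while yours needs the full strength of Corollary~\ref{cor2} (uniqueness for \emph{all} subnetworks, not just $f$ and its immediate subnetworks) at each step but avoids any explicit transfer of paths between state graphs and yields the shortest-path property directly from the induction on distance.
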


\begin{remark}
By definition, if $x\to y$ is an arc of the asynchronous state graph, then $d(x,y)=1$. Hence, path from a point $x$ to a point $y$ cannot be of length strictly less than $d(x,y)$; a path from $x$ to $y$ of length $d(x,y)$ can thus be seen as a {\em shortest} or {\em straight} path.
\end{remark}

\begin{proof}[Proof of Corollary~\ref{corThomas}]
By induction on $|V|$. The case $|V|=1$ is obvious, so suppose that $|V|>1$ and that $f$ has no even or odd self-dual subnetwork. By Theorem~\ref{main}, $f$ has a unique fixed point $x$. Let $y\in\B^V$. Suppose first that there exists $i\in V$ such that $x_i=y_i=0$. Then $x_{-i}$ is the unique fixed point of $f^{i0}$. So, by induction hypothesis, $\Gamma(f^{i0})$ has a path from $y_{-i}$ to $x_{-i}$ of length $d(x_{-i},y_{-i})$. Since $x_i=y_i=0$, we deduce from Proposition~\ref{pro:subdynamics} that $\Gamma(f)$ has a path from $y$ to $x$ of length $d(x_{-i},y_{-i})=d(x,y)$. The case $x_i=y_i=1$ is similar. So, finally, suppose that $y=x\o 1$. Since $y$ is not a fixed point, there exists $i\in V$ such that $f_i(y)\neq y_i$. Then, $\Gamma(f)$ has an arc from $y$ to $z=y\o e_i$. So $z_i=x_i$, and as previously, we deduce that $\Gamma(f)$ has a path from $z$ to $x$ of length $d(x,z)$. This path together with the arc $y\to z$ forms a path from $y$ to $x$ of length $d(x,z)+1=d(x,y)$.
\end{proof}

\begin{remark}
According to Proposition~\ref{pro:subdynamics}, the asynchronous state graph $\Gamma(h)$ of each subnetwork $h$ of $f$ is a subgraph of $\Gamma(f)$ induced by some subcube of $\B^V$. Hence, one can see $\Gamma(h)$ as a ``{\em dynamical module}'' of $\Gamma(f)$. An interpretation of the previous corollary is then that {\em the asynchronous state graphs of even- and odd-self-dual networks are ``dynamical modules'' that are necessary for the ``emergence'' of ``complex'' asynchronous behaviors}, because in their absence the dynamics is ``simple'': weak asynchronous convergence toward a unique fixed point. 
\end{remark}

\begin{example}[Continuation of Example~\ref{ex1}]\label{ex3}
Take again the $3$-dimensional network $f$ defined in
Example~\ref{ex1}, which has no self-dual subnetwork. 
\[
\begin{array}{l}
f_1(x)=\overline{x_2}\land x_3\\[1mm]
f_2(x)=\overline{x_3}\land x_1\\[1mm]
f_3(x)=\overline{x_1}\land x_2.
\end{array}
\]
The asynchronous
state graph $\Gamma(f)$ of $f$ is the following:
\[
\begin{array}{ccc}
x & f(x) & \f(x) \\\hline
000& 000 & 000\\ 
001& 100 & 101\\
010& 001 & 011\\
011& 001 & 010\\
100& 010 & 110\\
101& 100 & 001\\
110& 010 & 100\\
111& 000 & 111
\end{array}
\qquad\quad
\begin{array}{c}
\begin{tikzpicture}
\node (000) at (-1.5,-1.5){$000$};
\node (001) at (-0.5,-0.5){$001$};
\node (010) at (-1.5,0.5){$010$};
\node (011) at (-0.5,1.5){$011$};
\node (100) at (0.5,-1.5){$100$};
\node (101) at (1.5,-0.5){$101$};
\node (110) at (0.5,0.5){$110$};
\node (111) at (1.5,1.5){$111$};
\path[thick,->]
(001) edge[line width=1.7] (101)
(001) edge (000)
(010) edge (000)
(010) edge[line width=1.7] (011)
(011) edge[line width=1.7] (001)
(100) edge (000)
(100) edge[line width=1.7] (110)
(101) edge[line width=1.7] (100)
(110) edge[line width=1.7] (010)
(111) edge (011)
(111) edge (101)
(111) edge (110)
;
\end{tikzpicture}
\end{array}
\]
In agreement with Corollary~\ref{corThomas}, there exists, from any initial point, a shortest path leading to the unique fixed point of $f$ (the point $000$): the asynchronous state graph describes a weak asynchronous convergence (by shortest paths) toward a unique fixed point. However, $\Gamma(f)$ has a cycle (of length 6), so every path does not lead to the unique fixed point: the condition ``has no even or odd self-dual subnetworks'' does no ensure a strong asynchronous convergence toward a unique fixed point.
\end{example}

\section{Characterization by forbidden subnetworks}\label{sec:sub}

In this section, we are interested in characterizing networks properties by forbidden subnetworks, such as the characterization given by Corollary~\ref{cor2}. We see a network property $\P$ as a set of networks, and given a set of networks $\F$, we say that $\F$ is a {\bf set of forbidden subnetworks for} $\P$ if 
\[
f\in\P\iff\Sub(f)\cap \F=\emptyset,
\]
where $\Sub(f)$ denotes the set of subnetworks of $f$. Thus, if $\F$ is a set of forbidden subnetworks for $\P$ then $\F\cap\P=\emptyset$ and $\P$ is {\bf closed} for the subnetwork relation {\em i.e.} if $f\in\P$ then $\Sub(f)\subseteq \P$. The negation (or complement) of $\P$ is denoted $\neg\P$.

\begin{proposition}\label{pro:critical_2}	
Let $\P$ be a set of networks closed for the subnetwork relation. There exists a unique smallest set $\F$ of forbidden subnetworks for $\P$. This set $\F$ is the set of networks critical for $\neg\P$. 
\end{proposition}

\begin{proof}
If $f\not\in\P$, then $f$ necessarily contains a subnetwork $h\not\in\P$ such that $\Sub(h)\setminus h\subseteq \P$ {\em i.e.} a subnetwork critical for $\neg\P$. Conversely, if $f\in\P$ then $\Sub(f)\subseteq \P$ and since networks critical for $\neg\P$ are in $\neg\P$, $f$ has no subnetworks critical for $\neg\P$. This proves that the set of networks critical for $\neg\P$ is a set of forbidden subnetworks for~$\P$. 

\medskip
Now, suppose that $\F$ is a set of forbidden subnetworks for $\P$, let $f$ be any network critical for $\neg\P$, and let us prove that $f\in\F$. Since every strict subnetwork of $f$ is in $\P$, $f$ has no strict subnetworks in $\F$. So if $f$ is not in $\F$ then $\Sub(f)\cap\F=\emptyset$ and we deduce that $f\in\P$, a contradiction. Thus $f\in\F$, so $\F$ contains all the networks critical for $\neg\P$. 
\end{proof}

Let $\P_{=1}$ be the set of networks $f$ such that each subnetwork of $f$ has a unique fixed point, and let $\F_{=1}$ be the smallest set of forbidden subnetworks for $\P_{=1}$. Let $\esd$ and $\osd$ be the set of {\em critical} even- and odd-self-dual networks, respectively, and let $\eosd=\esd\cup\osd$. 

\begin{remark}
A lot of even- or odd-self-dual networks are not critical. For instance, the network $f$ on $\{1,2,3\}$ defined by $f_1(x)=x_1\o x_2\o x_3$ and $f_2(x)=f_3(x)=x_1$ is even-self-dual, but it contains two even-self-dual strict subnetworks and two odd-self-dual strict subnetworks.
\end{remark}

\begin{corollary}\label{cor:P=1}
$\F_{=1}=\eosd$.
\end{corollary}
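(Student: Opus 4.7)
The plan is to deduce the corollary from Proposition~\ref{pro:critical}, Corollary~\ref{cor2}, and the transitivity of the subnetwork relation. First I would check that $\P_{=1}$ is closed under taking subnetworks: if $f\in\P_{=1}$ and $h$ is a subnetwork of $f$, then every subnetwork of $h$ is also a subnetwork of $f$ (because subnetworks are obtained by successively fixing components), hence has a unique fixed point; so $h\in\P_{=1}$. Therefore Proposition~\ref{pro:critical} applies and gives that $\F_{=1}$ is exactly the set of networks critical for $\neg\P_{=1}$.

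Next I would rewrite $\neg\P_{=1}$ in useful form. By Corollary~\ref{cor2}, $f\in\P_{=1}$ iff $f$ has no even- or odd-self-dual subnetwork, hence $f\in\neg\P_{=1}$ iff $f$ admits some even- or odd-self-dual subnetwork. So being critical for $\neg\P_{=1}$ means: $f$ has some even- or odd-self-dual subnetwork, but no strict subnetwork of $f$ has one.

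It remains to identify this family with $\eosd$. For the inclusion $\eosd\subseteq\F_{=1}$, let $f\in\eosd$. Then $f$ itself is an even- or odd-self-dual subnetwork of $f$, so $f\in\neg\P_{=1}$. If some strict subnetwork $h$ of $f$ contained an even- or odd-self-dual subnetwork $h'$, then $h'$ would also be a strict subnetwork of $f$ (since the vertex set of $h'$ is contained in that of $h$, which is a strict subset of $V$), contradicting the criticality of $f$ as an even- or odd-self-dual network. Hence $f$ is critical for $\neg\P_{=1}$. For the converse $\F_{=1}\subseteq\eosd$, let $f$ be critical for $\neg\P_{=1}$, and pick an even- or odd-self-dual subnetwork $g$ of $f$. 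If $g$ were a strict subnetwork of $f$, then $g\in\neg\P_{=1}$ (it contains itself as an even- or odd-self-dual subnetwork), contradicting criticality; so $g=f$, i.e.\ $f$ is even- or odd-self-dual. Moreover, any strict subnetwork of $f$ that were even- or odd-self-dual would itself belong to $\neg\P_{=1}$, again contradicting criticality; so $f$ has no even- or odd-self-dual strict subnetwork, and $f\in\eosd$.

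There is no real obstacle here: the corollary is essentially a bookkeeping exercise combining Proposition~\ref{pro:critical} with Corollary~\ref{cor2}. The only mild care needed is the transitivity argument used twice above, namely that a subnetwork of a strict subnetwork of $f$ is again a strict subnetwork of $f$, which bridges ``critical for being even- or odd-self-dual'' on the one side with ``critical for containing an even- or odd-self-dual subnetwork'' on the other.
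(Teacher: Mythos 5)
Your proof is correct and follows essentially the same route as the paper: both directions combine Proposition~\ref{pro:critical} (identifying $\F_{=1}$ with the networks critical for $\neg\P_{=1}$), Corollary~\ref{cor2} (rephrasing $\neg\P_{=1}$ as ``has an even- or odd-self-dual subnetwork''), and the transitivity of the subnetwork relation. The only cosmetic difference is that you identify the critical networks for $\neg\P_{=1}$ directly with $\eosd$, whereas the paper first observes that $\eosd$ is a set of forbidden subnetworks for $\P_{=1}$ and then invokes minimality; the content is the same.
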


\begin{proof}
If $f\in\eosd$, then not strict subnetwork of $f$ is in $\eosd$ and according to  Theorem~\ref{main}, each strict subnetwork of $f$ is in $\P_{=1}$. Since $f\not\in\P_{=1}$ (because $f$ has zero or two fixed points), $f$ is  critical for $\neg\P_{=1}$, and it follows from the previous proposition that $f\in\F_{=1}$. Thus $\eosd\subseteq\F_{=1}$. Now, by Theorem~\ref{main}, $\eosd$ is a set of forbidden subnetworks for $\P$, and we deduce from the previous proposition that $\F_{=1}\subseteq\eosd$. 
\end{proof}

Let $\P_{\leq 1}$ (resp. $\P_{\geq 1}$) be the set of networks $f$ such that each subnetwork of $f$ has at most (resp. at least) one fixed point; and let $\F_{\leq 1}$ and $\F_{\geq 1}$ be the smallest sets of forbidden subnetworks for $\P_{\leq 1}$ and $\P_{\geq 1}$, respectively. In the light of the ``proof by dichotomy'' of Robert's theorem (given by Theorem~\ref{thm:thomas}) it is tempting to try to deduce that $\eosd$ is the smallest set of forbidden subnetworks for $\P_{=1}$ from the forbidden sets $\F_{\leq 1}$ and $\F_{\geq 1}$. But this is not so simple. Indeed, $\F_{\leq 1}\cup\F_{\geq 1}$ is clearly a set of forbidden subnetworks for $\P_{\leq 1}\cap \P_{\geq 1}=\P_{=1}$, thus $\eosd\subseteq \F_{\leq 1}\cup\F_{\geq 1}$, but the inclusion is strict: A lot of networks critical for $\neg\P_{\leq 1}$ or $\neg\P_{\geq 1}$ are not critical for $\neg\P_{=1}$ (because any network that is critical for $\neg\P_{\leq 1}$ (resp. $\neg\P_{\geq 1}$) and that contains a subnetworks with no (resp. multiple) fixed point is not critical for $\neg\P_{=1}$). Examples are given below. 

\medskip
However, in Section~\ref{sec:non-exp}, we will see that, if we consider the class of non-expansive networks, then $\esd=\F_{\leq 1}$ and $\osd=\F_{\geq 1}$, so that the the equality $\eosd=\F_{\leq 1}\cup\F_{\geq 1}$ holds. Also, in Section \ref{sec:and}, we will se that $\esd=\F_{\leq 1}$ for another class of networks (the conjunctive networks), and we will leave the equality $\osd=\F_{\geq 1}$ has an open problem for this class. 

\begin{remark}
$f$ is critical for $\neg\P_{\leq 1}$ if and only if $f$ has at least two fixed points, and every strict subnetwork of $f$ has at most one fixed points. In other words, {\em $\F_{\leq 1}$ is the set of 2-critical networks}. And similarly, {\em $\F_{\geq 1}$ is the set of 0-critical networks}. 
\end{remark}

Among network properties closed for subnetworks, $\P_{=1}$, $\P_{\leq 1}$ and $\P_{\geq 1}$ are not ``very strong", and this is why it is interesting to characterize them in terms of forbidden subnetworks. By opposition, closed property as $\P_{>1}$ (every subnetwork has at least two fixed points) or $\P_{<1}$ (every subnetwork has no fixed points) are not interesting. To see this, consider the two one-dimensional constant networks $\Zero(x)=0$ and $\One(x)=1$. Clearly $\Zero$ and $\One$ have a unique fixed point and are thus critical for $\P_{>1}$ or $\P_{<1}$. Consequently, $\Zero$ and $\One$ are in the smallest forbidden set of subnetworks for $\P_{>1}$ and $\P_{<1}$. But it is easy to see that networks without $\Zero$ or $\One$ as subnetwork are (exactly) networks $f$ such that $\f$ is a constant, and restrict our attention to this type of networks is not interesting. Actually, even if only $\Zero$ or only $\One$ is forbidden, the resulting networks are too particular to be interesting. In other words: {\em Interesting closed properties must be satisfied by $\Zero$ and $\One$}. An interesting property different from $\P_{=1}$, $\P_{\leq 1}$ and $\P_{\geq 1}$, is for example ``each subnetwork has an asynchronous state graph which describes a strong convergence toward a unique fixed point''. Hence, it would be interesting to characterize the set of forbidden subnetworks for this property. 

\begin{example} 
The following network $f$ is 2-critical ({\em i.e.} in $\F_{\leq 1}$) but not even (double arrows indicate cycles of length two):
\[
\begin{array}{ccc}
x & f(x) & \f(x)\\\hline
000 & 000 & 000\\
001 & 110 & 111\\
010 & 101 & 111\\
011 & 100 & 111\\
100 & 011 & 111\\
101 & 010 & 111\\
110 & 001 & 111\\
111 & 111 & 000
\end{array}
\qquad\qquad
\begin{array}{c}
\Gamma(f)\\[3mm]
\begin{tikzpicture}
\node (000) at (-1.5,-1.5){$000$};
\node (001) at (-0.5,-0.5){$001$};
\node (010) at (-1.5,0.5){$010$};
\node (011) at (-0.5,1.5){$011$};
\node (100) at (0.5,-1.5){$100$};
\node (101) at (1.5,-0.5){$101$};
\node (110) at (0.5,0.5){$110$};
\node (111) at (1.5,1.5){$111$};
\path[thick,->]
(001) edge[<->] (101)
(001) edge (000)
(010) edge (000)
(010) edge[<->] (011)
(011) edge[<->] (001)
(100) edge (000)
(110) edge[<->] (100)
(100) edge[<->] (101)
(110) edge[<->] (010)
(011) edge (111)
(101) edge (111)
(110) edge (111)
;
\end{tikzpicture}
\end{array}
\]
The following network $f$ is 0-critical ({\em i.e.} in $\F_{\geq 1}$) but not odd:
\[
\begin{array}{ccc}
x & f(x) & \f(x)\\\hline
000 & 110 & 110\\
001 & 101 & 100\\
010 & 011 & 001\\
011 & 001 & 010\\
100 & 110 & 010\\
101 & 100 & 001\\
110 & 010 & 100\\
111 & 001 & 110
\end{array}
\qquad\qquad
\begin{array}{c}
\Gamma(f)\\[3mm]
\begin{tikzpicture}
\node (000) at (-1.5,-1.5){$000$};
\node (001) at (-0.5,-0.5){$001$};
\node (010) at (-1.5,0.5){$010$};
\node (011) at (-0.5,1.5){$011$};
\node (100) at (0.5,-1.5){$100$};
\node (101) at (1.5,-0.5){$101$};
\node (110) at (0.5,0.5){$110$};
\node (111) at (1.5,1.5){$111$};
\path[thick,->]
(000) edge (100)
(000) edge (010)
(001) edge (101)
(010) edge (011)
(011) edge (001)
(100) edge (110)
(101) edge (100)
(110) edge (010)
(111) edge (011)
(111) edge (101)
;
\end{tikzpicture}
\end{array}
\]
\end{example}

\section{Circular networks and non expansive networks}\label{sec:cycle}


A {\bf positive-circular} (resp. {\bf negative-circular}) network is a network $f$ such that $G(f)$ is a positive (resp. negative) cycle. Positive- and negative-circular networks have been widely studied  ({\it e.g.} \cite{RMC03,DNS12}) because that are the ``simplest non simple networks'' in the sense that they are the most simple networks (from a structural point of view) that do not describe a convergence toward a unique fixed point.    

\medskip
In this section, we show that positive-circular (resp negative-circular) networks are even-self-dual (resp. odd-self-dual), and we prove that the converse holds for non-expansive networks (cf. Theorem~\ref{thm:circular-eosd} below). In this way, even- and odd-self-dual network may be seen as generalization of circular networks. 

\medskip
Suppose that $f$ is a circular network. Let $\s$ be the permutation of $V$ that maps every vertex $i$ to the vertex $\sigma(i)$ preceding $i$ in $G(f)$. For each $x\in\B^V$, let us denote by $\s x$ the point of $\B^V$ such that $(\s x)_i=x_{\s(i)}$ for all $i\in V$. Let $s\in\B^V$ be the such that for all $i\in V$, $s_i=0$ if the arc from $\sigma(i)$ to $i$ is positive and $s_i=1$ otherwise. Then, for all $x\in\B^V$, we have 
\[
f(x)=\s x\o s.   
\]
We call $\s$ the {\bf permutation of} $f$ and $s$ the {\bf constant of} $f$. Since $G(f)$ only depends on $f$, since the couple $(\s,s)$ only depends on $G(f)$ and since $f$ only depends on this couple, these three objects share the same information. In particular the sign of $G(f)$ is ``contained'' in $s$: It is positive if $\d{s}$ is even, and negative if $\d{s}$ is odd.  

\begin{theorem}\label{thm:circular-eosd}
~
\begin{enumerate}
\item
$f$ is positive-circular if and only if $f$ is even-self-dual and non-expansive.
\item
$f$ is negative-circular if and only if $f$ is odd-self-dual and non-expansive.
\end{enumerate}
\end{theorem}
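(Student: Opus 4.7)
The forward direction is a direct verification. Writing $f(x)=\sigma x\oplus s$ with $\sigma$ the cyclic permutation of $V$ associated to $f$ and $\d{s}$ even (resp.\ odd), I would compute $f(x)\oplus f(y)=\sigma(x\oplus y)$, which immediately shows $f$ is an isometry (hence non-expansive); the identity $\sigma 1=1$ gives self-duality $f(x\oplus 1)=f(x)\oplus 1$; and evenness follows from the observation that $x\mapsto \sigma x\oplus x$ has kernel $\{0,1\}$ and its values on each $\sigma$-cycle XOR to $0$, so, since $\sigma$ is a single cycle, the image of this map is exactly the parity-zero hyperplane. XOR-ing with an even (resp.\ odd) $s$ then identifies $\f(\B^V)$ with the set of even (resp.\ odd) points of $\B^V$.

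For the backward direction, assume $f$ is even-self-dual (the odd case is entirely analogous) and non-expansive. Non-expansivity yields out-degree at most $1$ at each vertex of every $Gf(x)$, while Proposition~\ref{odd} (applied to even $f$) yields out-degree at least $1$, so out-degree is exactly $1$: for every $x\in\B^V$ and $j\in V$ there is a unique $\mu(x,j)\in V$ with $f(x\oplus e_j)=f(x)\oplus e_{\mu(x,j)}$.

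The heart of the argument is a \emph{path-permutation} property coming from self-duality. Fix any $x\in\B^V$ and any ordering $j_1,\dots,j_n$ of $V$, and set $\mu_t=\mu(x\oplus e_{j_1}\oplus\cdots\oplus e_{j_{t-1}},j_t)$ for $t=1,\dots,n$. The identity $f(x\oplus 1)=f(x)\oplus 1$ forces $\sum_{t=1}^{n} e_{\mu_t}=1=\sum_{i\in V} e_i$, so each index of $V$ must appear an odd number of times in $(\mu_1,\dots,\mu_n)$; with exactly $n$ terms in total, each index appears exactly once, making $(\mu_1,\dots,\mu_n)$ a permutation of $V$. I would then deduce injectivity of $f$ as follows: if $f(x)=f(y)$ with $x\neq y$, out-degree $1$ rules out $d(x,y)=1$ and self-duality rules out $y=x\oplus 1$, so $2\leq d(x,y)\leq n-1$; choosing a Hamming path from $x$ to $x\oplus 1$ of length $n$ that first flips the bits of $\ONE(x\oplus y)$ (in any order) so as to reach $y$ at step $d=d(x,y)$, the induced image walk $f(x),f(x_1),\dots,f(x_d)=f(y)=f(x)$ is closed and therefore $\sum_{t=1}^{d} e_{\mu_t}=0$; but path-permutation says the $\mu_t$ are distinct, contradicting $d\geq 1$.

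Once $f$ is injective, it is a bijection of the finite set $\B^V$, and the standard sum-of-distances argument $\sum_{x\neq y} d(f(x),f(y))=\sum_{x\neq y} d(x,y)$ (applied to the permutation $f$) upgrades non-expansivity to isometry. Isometries of the Boolean hypercube have the form $f(x)=\sigma x\oplus s$ for a permutation $\sigma$ of $V$ and $s\in\B^V$; so $G(f)$ is the signed cycle digraph of $\sigma$ with signs determined by $s$. The map $\f(x)=\sigma x\oplus x\oplus s$ then has image of dimension $n-c$, where $c$ is the number of cycles of $\sigma$; matching this image to the even points (a hyperplane of dimension $n-1$) forces $c=1$, so $\sigma$ is a single cycle on $V$, and forces $\d{s}$ to be even, so $G(f)$ is a positive cycle and $f$ is positive-circular. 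I expect the main obstacle to be the injectivity step, where the path-permutation observation combined with the closed image walk is what delivers the contradiction.
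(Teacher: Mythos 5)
Your proof is correct, and its converse direction takes a genuinely different route from the paper's. The forward direction is essentially the paper's computation, except that you get evenness/oddness by a linear-algebra argument over $\mathbb{F}_2$ (the linear part $x\mapsto \s x\o x$ has kernel $\{0,1\}$ when $\s$ is a single cycle, so its image is exactly the even-weight hyperplane, and translating by $s$ fixes the parity), whereas the paper exhibits an explicit preimage of each even (resp.\ odd) point by a backward recursion along the cycle; your version is shorter. For the converse, the paper stays entirely inside the local interaction graphs: it shows every vertex of every $Gf(x)$ has in-degree (and out-degree) exactly one, hence each $Gf(x)$ is a disjoint union of cycles, then that $Gf(x)=G(f)$ for all $x$, and finally pins down the sign via the fixed-point count. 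You instead prove injectivity of $f$ directly --- the observation that self-duality forces the $n$ image steps along any Hamming path from $x$ to $x\o 1$ to form a permutation of $V$, combined with the closed image walk through a putative collision $y$, is a neat and correct argument, related in spirit to the orbit argument in the paper's claim {\bf(1)} but applied to input-side walks rather than iterates of $f$ --- and then upgrade non-expansiveness to isometry via the sum-of-distances identity for bijections. The one external ingredient is the classification of isometries of the Hamming cube as $x\mapsto\s x\o s$; this is classical ($\mathrm{Aut}(Q_n)\cong\mathbb{Z}_2^n\rtimes S_n$) and admits a short self-contained proof (normalize $f(0)=0$, note $f$ permutes the weight-one points, induct on weight), but you should either cite it or include that proof, since the paper derives the equivalent structural fact ($Gf(x)=G(f)$ is a union of cycles) by hand. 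Your closing dimension count ($|\f(\B^V)|=2^{n-c}$ must equal $2^{n-1}$, forcing $c=1$, with the parity of $\d{s}$ then determining the sign of the cycle) is clean; what the paper's longer route buys is a set of intermediate claims about $Gf(x)$ that it reuses later, while your route is more direct.
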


We will use the following lemma several times.

\begin{lemma}\label{lem:Bf}
Let $f$ be networks on $V$ and $I\subseteq V$. Let $f'$ be the network on~$V$ defined by $f'(x)=f(x)\o e_I$ for all $x\in\B^V$. We have the following properties.
\begin{enumerate}
\item
If $f$ is non-expansive, then $f'$ is non-expansive.
\item
If $f$ is self-dual, then $f'$ is self-dual.
\item
If $f$ is even or odd, then $f'$ is even or odd.
\item
$|Gf'(x)|=|Gf(x)|$ for all $x\in\B^V$. 
\end{enumerate}
\end{lemma}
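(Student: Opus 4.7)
The plan is to treat each of the four parts by direct computation, exploiting the single structural fact that $f'$ differs from $f$ by the fixed XOR shift $e_I$ on the output side; i.e.\ $f'(x) = f(x) \o e_I$. None of the four items should present a genuine obstacle, but part~3 is the only one where a short parity calculation is needed.

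For part 1, I would simply observe that
\[
d(f'(x),f'(y)) = \d{(f(x)\o e_I)\o(f(y)\o e_I)} = \d{f(x)\o f(y)} = d(f(x),f(y)),
\]
since the two copies of $e_I$ cancel under $\o$. Thus $f'$ satisfies exactly the same distance inequalities as $f$, and in particular inherits non-expansiveness. For part~2, if $f(x\o 1) = f(x)\o 1$, then by commutativity and associativity of $\o$,
\[
f'(x\o 1) = f(x\o 1)\o e_I = f(x)\o 1 \o e_I = (f(x)\o e_I)\o 1 = f'(x)\o 1,
\]
so $f'$ is self-dual.

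For part 3, the key step is to compute the conjugate: $\tilde{f'}(x) = f'(x)\o x = \tilde f(x)\o e_I$. Hence the image set of $\tilde{f'}$ is the translate $\tilde f(\B^V)\o e_I = \{y\o e_I : y\in \tilde f(\B^V)\}$. The elementary parity identity
\[
\d{y\o e_I} \equiv \d{y} + |I|\pmod{2},
\]
proved by splitting $V$ into $I$ and $V\setminus I$ and noting that flipping the bits on $I$ changes $\d{y}$ by $|I| - 2|\ONE(y)\cap I|$, then shows that $\tilde f'(\B^V)$ consists of points of a single parity: the same parity as $\tilde f(\B^V)$ when $|I|$ is even, and the opposite parity when $|I|$ is odd. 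In either case $f'$ is even or odd.

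For part 4, I would compute the discrete derivative component by component. For $j\in V$ and $i\in V$,
\[
f'_{ij}(x) = f_i(x^{j1})\o (e_I)_i - \bigl(f_i(x^{j0})\o (e_I)_i\bigr).
\]
If $i\notin I$ this equals $f_{ij}(x)$, and if $i\in I$ it equals $-f_{ij}(x)$, since $a\o 1 = 1-a$ for $a\in\B$. In either case $|f'_{ij}(x)| = |f_{ij}(x)|$, so the same unsigned arcs are present; hence $|Gf'(x)| = |Gf(x)|$. The whole proof reduces to these four short computations.
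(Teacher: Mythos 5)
Your proof is correct and follows essentially the same route as the paper: each of the four parts is handled by the same direct computation (cancellation of $e_I$ under $\o$ for distance and self-duality, the observation that $\tilde f'(x)=\f(x)\o e_I$ plus the parity shift by $|I|$ for evenness/oddness, and the invariance of $|f_{ij}(x)|$ for the local graphs). The only cosmetic difference is that in part 3 you argue via a set-theoretic translation of the image where the paper checks both containments element-wise, and in part 4 you split into the cases $i\in I$ and $i\notin I$; both yield identical conclusions.
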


\begin{proof}
Suppose that $f$ is non-expansive, and let $x,y\in\B^V$. Then 
\[
d(f'(x),f'(y))=d(f(x)\o e_I,f(y)\o e_I)=d(f(x),f(y))\leq d(x,y).
\]
thus $f'$ is non-expansive. 

\medskip
If $f$ is self-dual then 
\[
f'(x\o 1)=f(x\o 1)\o e_I=f(x)\o 1\o e_I=f'(x)\o 1
\]
thus $f'$ is self-dual. 

\medskip
Suppose that $f$ is even. For all $x\in\B^V$, we have $\tilde f'(x)=f'(x)\o x=f(x)\o e_I\o x=\f(x)\o e_I$, and since $\f(x)$ is even, we deduce that $\tilde f'(x)$ and $|I|$ have the same parity. Thus all the points of $\tilde f'(\B^V)$ have the parity of $|I|$. Suppose that $|I|$ is even (resp. odd), and let $z\in\B^V$ be an even (resp. odd). Then $z\o e_I$ is even, thus there exists $x\in\B^V$ such that $\f(x)=z\o e_I$, so $\tilde f'(x)=f'(x)\o x=\f(x)\o e_I=z$. Thus every even (resp. odd) point of $\B^V$ is in $\tilde f'(\B^V)$. Thus $f'$ is even if $|I|$ is even, and $f'$ is odd otherwise. The proof is similar if $f$ is odd.  

\medskip
For all $i,j\in V$ and $x\in\B^V$, 
\begin{multline*}
|f'_{ij}(x)|=f'_i(x)\o f'_i(x\o e_j)
=f_i(x)\o e_I\o f_i(x\o e_j)\o e_I\\[1mm]
=f_i(x)\o f_i(x\o e_j)
=|f_{ij}(x)|
\end{multline*}
and the last point follows. 
\end{proof}

\begin{proof}[Proof of Theorem~\ref{thm:circular-eosd}]
{\it (Direction $\Rightarrow$)} Let $f$ be a circular with permutation $\s$ and constant $s$. For all $x,y\in\B^V$, we have 
\[
d(f(x),f(y))=\d{\s x\o s\o\s y\o s}=\d{\s x\o\s y}=\d{x\o y}=d(x,y).
\]
thus $f$ is non expansive. Also,  
\[
f(x\o 1)=\s(x\o 1)\o s=\s x\o 1\o s=f(x)\o 1
\] 
thus $f$ is self-dual. We now prove that $f$ is even (resp. odd) if $G(f)$ is positive
(resp. negative). We have $\f(x)=x\o \s x\o s$ so the parity
of $\f(x)$ is the parity of
$\d{x}+\d{\s x}+\d{s}$. Since
$\d{x}=\d{\s x}$, we deduce that the parity of $\f(x)$
is the parity of $\d{s}$. So if $G(f)$ is positive (resp. negative) then the
image of $\f$ only contains even (resp. odd) points. It remains to prove
that if $G(f)$ is positive (resp. negative) then each even (resp. odd) point is in
the image of $\f$. Suppose that $G(f)$ is positive (resp. negative), and let
$z$ be an even (resp. odd) point of $\B^V$. Let $n=|V|$ and let $i_1,i_2,\dots i_n$ be the vertices of $G(f)$ given in the order, so that $\s(i_1)=i_n$ and $\s(i_{k+1})=i_k$ for $1\leq k<n$. Let $x$ be the point of $\B^V$ whose components $x_{i_k}$ are recursively defined as follows, with $k$ decreasing from $n$ to $1$:
\[
x_{i_n}=z_{i_1},\qquad x_{i_k}=z_{i_{k+1}}\o s_{i_{k+1}}\o x_{i_{k+1}}\qquad 1\leq k<n. 
\]
Let us prove that $\f(x)=z$. For every $1<k\leq n$, we have
\begin{multline*}
\f_{i_k}(x)=f_{i_k}(x)\o x_{i_k}=x_{\sigma(i_k)}\o s_{i_k}\o x_{i_k}\\[1mm]=x_{i_{k-1}}\o s_{i_k}\o x_{i_k}=(z_{i_k}\o s_{i_k}\o x_{i_k})\o s_{i_k}\o x_{i_k}=z_{i_k}.
\end{multline*}
It remains to prove that $\f_{i_1}(x)=z_{i_n}$. By the definition of $x$, we have 
\[
\begin{array}{rcl}
x_{i_1}&=&(z_{i_2}\o s_{i_2})\o x_{i_2}\\[1mm]
&=&(z_{i_2}\o s_{i_2})\o (z_{i_3}\o s_{i_3})\o x_{i_3}\\[1mm]
&\vdots&\\[1mm]
&=&(z_{i_2}\o s_{i_2})\o (z_{i_3}\o s_{i_3})\o\cdots\o (z_{i_n}\o s_{i_n})\o z_{i_1}\\[1mm]
&=&(z_{i_2}\o z_{i_3}\o\cdots\o z_{i_n}\o z_{i_1})\o(s_{i_2}\o s_{i_3}\o\cdots\o s_{i_n}).
\end{array}
\]
So $\d{z}$ and $\d{x_{i_1}\o s_{i_2}\o s_{i_3}\o\cdots\o s_{i_n}}$ have the same parity, and
since $\d{z}$ and $\d{s}$ have the same parity, we deduce that $x_{i_1}=s_{i_1}$. Thus
\[
\f_{i_1}(x)=f_{i_1}(x)\o x_{i_1}=x_{i_n}\o s_{i_1}\o x_{i_1}=z_{i_1}\o s_{i_1}\o s_{i_1}=z_{i_1}
\]
and it follows that $\f(x)=z$. So $f$ is even (resp. odd).

\medskip
{\it (Direction $\Leftarrow$)} We first prove the following property:
\begin{enumerate}
\item[{\bf (1)}] Suppose that $f$ is odd-self-dual and non-expansive, and suppose that $f(x)=x\o e_i$ for some $x\in\B^V$ and $i\in V$. Then the in-degree of $i$ in $Gf(x)$ is at most~one. 
\end{enumerate}
Let $x^1=x$, and for all $k\in\mathbb{N}$, let $x^{k+1}=f(x^k)$. Let $n=|V|$, and for all $1\leq p\leq n$, let us say that a sequence $i_1,i_2,\dots,i_p$ is {\em good} if it is a sequence of $p$ distinct vertices in $V$ such that 
\[
f(x^k)=x^k\o e_{i_k}\qquad 1\leq k\leq p.
\]
Let us prove the following property: 
\begin{enumerate}
\item[($*$)] For all $1\leq p\leq n$, there exists a good sequence $i_1,i_2,\dots, i_p$. 
\end{enumerate}
Since $f(x)=x\o e_i$, this is true for $p=1$. So suppose that $1<p\leq n$ and that there exists a good sequence $i_1,i_2,\dots, i_{p-1}$. Since $x^p=f(x^{p-1})=x^{p-1}\o e_{i_{p-1}}$ we have $d(x^p,x^{p-1})=1$, and since $f$ is non-expansive, we deduce that  
\[
d(f(x^p),x^p)=d(f(x^p),f(x^{p-1}))\leq d(x^p,x^{p-1})=1.
\]
Since $f$ is odd, $f$ has no fixed point, thus $d(f(x^p),x^p)=1$, {\em i.e.} there exists an element of $V$, that we denote by $i_p$, such that $f(x^p)=x^p\o e_{i_p}$. To complete the induction step, it remains to prove that $i_p\neq i_1,i_2\dots,i_{p-1}$. Suppose, for a contradiction, that $i_p=i_k$ with $1\leq k<p$. Then $\f(x^p)=\f(x^k)=e_{i_k}$. Since $f$ is self dual $\f(x^p\o 1)=e_{i_k}$. Thus $x^p$, $x^k$ and $x^p\o 1$ are elements of $\f^{-1}(e_{i_k})$. Since $f$ is odd-self-dual, $\f^{-1}(e_{i_k})$ contains exactly two elements. Thus $x^p=x^k$ or $x^p\o 1=x^k$, and this is not possible since
\[
\begin{array}{rcl}
x^p&=&x^{p-1}\o e_{i_{p-1}}\\[1mm]
&=&x^{p-2}\o e_{i_{p-2}}\o e_{i_{p-1}}\\[1mm]
&\vdots&\\[1mm]
&=&x^k\o e_{i_k}\o e_{i_{k+1}}\o\cdots\o e_{i_{p-2}}\o e_{i_{p-1}}.
\end{array}
\]
This prove the induction step and ($*$) follows. So let $i_1,i_2,\dots,i_n$ be a good sequence. Since $x=x^1$ and $f(x)=x\o e_i$, we have $i=i_1$. To prove {\bf (1)}, we will prove that if $Gf(x)$ has an arc from $i_k$ to $i$, then $k=n$. So let $1\leq k\leq n$, and suppose that $Gf(x)$ contains an arc from $i_k$ to $i$. Since $f$ is non-expansive, 
\[
f(x\o e_{i_k})=f(x)\o e_i=x\o e_i\o e_i=x=(x\o e_{i_{k}})\o e_{i_{k}} 
\]
Thus $\f(x\o e_{i_k})=\f(x^k)=e_{i_k}$. Since $f$ is self-dual $\f(x\o e_{i_k}\o 1)=e_{i_k}$. Thus $x\o e_{i_k}$, $x^k$ and $x\o e_{i_k}\o 1$ are elements of $\f^{-1}(e_{i_k})$, and as previously, we deduce that $x^k=x\o e_{i_k}$ or $x^k=x\o e_{i_k}\o 1$. Thus $k>1$, and since 
\[
x^k=x\o e_{i_1}\o e_{i_2}\o\cdots \o e_{i_{k-1}}
\]
we have $x^k\neq x\o e_{i_k}$. Thus $x^k=x\o e_{i_k}\o 1$ so $e_{i_k}\o 1=e_{i_1}\o e_{i_2}\o\cdots \o e_{i_{k-1}}$. If $k<n$ then $(e_{i_k}\o 1)_{i_n}=1$ and $(e_{i_1}\o e_{i_2}\o\cdots \o e_{i_{k-1}})_{i_n}=0$, a contradiction. Thus $k=n$ and {\bf (1)} is proved.    
\begin{enumerate}
\item[{\bf (2)}] Suppose that $f$ even-self-dual and non-expansive and suppose that $f(x)=x$ for some $x\in\B^V$. Then $Gf(x)$ is a disjoint union of cycles.
\end{enumerate}
Let $i\in V$. If $f(x\o e_i)=x$ then $\f(x\o e_i)=e_i$ and this is not possible since $f$  is even. Since $f$ is non-expansive, we deduce that there exists $j\in V$ such that $f(x\o e_i)=f(x)\o e_j$. Then $j$ is the unique out-neighbor of $i$ in $Gf(x)$. Thus we have prove the following:
\begin{enumerate}
\item[($*$)] Each vertex of $Gf(x)$ has exactly one out-neighbor. 
\end{enumerate}
Let $i\in V$, and let $h$ be the network on $V$ defined by $f'(y)=f(y)\o e_i$ for all $y\in\B^V$. Since $f(x)=x$, we have $f'(x)=x\o e_i$, thus according to Lemma~\ref{lem:Bf}, $f'$ is odd-self-dual and non-expansive. So according to {\bf (1)}, $i$ has at most one in-neighbor in $Gf'(x)$, and by Lemma~\ref{lem:Bf}, $i$ has at most one in-neighbor in $Gf(x)$. Thus each vertex of $Gf(x)$ has at most one in-neighbor, and using ($*$) we deduce that each vertex of $Gf(x)$ has exactly one in-neighbor. Consequently, $Gf(x)$ is a disjoint union of cycles. This proves {\bf (2)}. 
\begin{enumerate}
\item[{\bf (3)}] Suppose that $f$ is even- or odd-self-dual and non-expansive. Then $Gf(x)$ is a disjoint union of cycles for all $x\in\B^V$.
\end{enumerate}
Let $x\in\B^V$, and let $f'$ be the network on $V$ defined by $f'(y)=f(y)\o\f(x)$ for all $y\in\B^V$. Then $f'(x)=f(x)\o \f(x)=x\o \f(x)\o\f(x)=x$ and we deduce from Lemma~\ref{lem:Bf} that $f'$ is even-self-dual and non-expansive. Thus, following~{\bf(2)}, $Gf'(x)$ is a disjoint union of cycles, and we deduce from Lemma~\ref{lem:Bf} that $Gf(x)$ is a disjoint union of cycles. This proves {\bf (3)}. 
\begin{enumerate}
\item[{\bf (4)}] 
Suppose that $f$ is even- or odd-self-dual and non-expansive. Then $Gf(x)=G(f)$ for all $x\in\B^V$.
\end{enumerate}
Let $x\in\B^V$ and $i,k,l\in V$. Suppose that 
\[
f_{lk}(x)=s\neq 0
\qquad\text{and}\qquad
f_{lk}(x\o e_i)\neq s.
\]
Since $f_{lk}(x)=f_{lk}(x\o e_k)$, we have $k\neq i$, and since, by {\bf (3)}, each vertex of $Gf(x)$ has a unique in-neighbor, we have $f_l(x)=f_l(x\o e_i)$. Suppose that $x_k=0$. Then 
\[
f_{lk}(x\o e_i)=f_l(x\o e_i\o e_k)-f_l(x\o e_i)=f_l(x\o e_i\o e_k)-f_l(x)\neq s
\]
and $f_{lk}(x)=f_l(x\o e_k)-f_l(x)=s$. Thus $f_l(x\o e_i\o e_k)\neq f_l(x\o e_k)$, that is, $f_{li}(x\o e_k)\neq 0$. Thus $Gf(x\o e_k)$ contains both an arc from $k$ to $l$ and from $i$ to $l$. Since $i\neq k$, $l$ has at least two in-neighbor in $Gf(x\o e_k)$, and this contradicts {\bf (3)}. If $x_k=1$, we obtain a contradiction with similar arguments. Thus:
\[
\forall x\in\B^V,~\forall i,k,l\in V, \qquad f_{lk}(x)\neq 0~\Rightarrow 
f_{lk}(x)=f_{lk}(x\o e_i)
\]
We deduce that $Gf(x)$ is a subgraph of $Gf(x\o e_i)$ and that $Gf(x\o e_i)$ is a subgraph of $Gf((x\o e_i)\o e_i)=Gf(x)$. Thus $G(x)=G(x\o e_i)$ for all $x\in\B^V$ and $i\in V$, and as an immediate consequence, $Gf(x)=Gf(y)$ for all $x,y\in\B^V$. This proves {\bf (4)}. 
\begin{enumerate}
\item[{\bf (5)}] If $f$ is even-self-dual and non-expansive, then $G(f)$ is a positive cycle.
\end{enumerate}
Indeed, following {\bf (3)} and {\bf (4)}, $G(f)$ is a disjoint union of cycles and since $f$ if even-self-dual, $f$ has exactly $2$ fixed points. Thus $G(f)$ has only positive cycles (otherwise $f$ would have no fixed point, according to Theorem~\ref{thm:ara}). And since if $G(f)$ is a union of $p\geq 1$ disjoint positive cycle then $f$ has $2^p$ fixed points, we deduce that $G(f)$ is a positive cycle. 
\begin{enumerate}
\item[{\bf (6)}] If $f$ is odd-self-dual and non-expansive, then $G(f)$ is a negative cycle.
\end{enumerate}
Let $i\in V$ and let $f'$ be the network on $V$ defined by $f'(x)=f(x)\o e_i$ for all $x\in\B^V$. By Lemma~\ref{lem:Bf}, $f'$ is even-self-dual and non-expansive. Thus according to {\bf (5)}, $G(f')$ is a cycle. From Lemma~\ref{lem:Bf}, we deduce that $G(f)$ is a cycle too. Since $f$ is odd, it has no fixed point, and we deduce that $G(f)$ is a negative cycle.
\end{proof}

As an immediate consequence of this theorem and Corollary~\ref{cor2} we obtain the following:

\begin{corollary}\label{cor:nonexp1}
If $f$ is non-expansive, then every subnetwork of $f$ has a unique fixed point if and only if $f$ has no circular subnetwork. 
\end{corollary}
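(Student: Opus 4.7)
The plan is to reduce the statement directly to Corollary~\ref{cor2} via Theorem~\ref{thm:circular-eosd}. Corollary~\ref{cor2} already tells us that every subnetwork of $f$ has a unique fixed point if and only if $f$ has no even- or odd-self-dual subnetwork, so it suffices to show that, under the non-expansiveness hypothesis, ``having no even- or odd-self-dual subnetwork'' coincides with ``having no circular subnetwork.''

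The auxiliary fact I would record first is that the subnetwork relation preserves non-expansiveness: if $f$ is non-expansive and $h$ is the subnetwork of $f$ induced by some $z \in \B^{V \setminus I}$, then for any $x,y \in \B^I$ the extensions $\bar x, \bar y \in \B^V$ with $\bar x_{-I}=\bar y_{-I}=z$ satisfy $d(h(x),h(y)) = d(f(\bar x)|_I, f(\bar y)|_I) \leq d(f(\bar x), f(\bar y)) \leq d(\bar x, \bar y) = d(x,y)$. This is essentially just the observation made in the remark following Theorem~\ref{thm:non-exp}, and no real work is needed.

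With this in hand, the two directions are immediate. For ($\Rightarrow$), if $h$ is a circular subnetwork of $f$, then Theorem~\ref{thm:circular-eosd} says $h$ is either even-self-dual or odd-self-dual, and Corollary~\ref{cor2} then rules out the hypothesis. For ($\Leftarrow$), suppose $f$ has no circular subnetwork and let $h$ be any subnetwork of $f$. Since $f$ is non-expansive, so is $h$ by the preservation fact above; if $h$ were even-self-dual (resp.\ odd-self-dual), then Theorem~\ref{thm:circular-eosd} would force $h$ to be positive-circular (resp.\ negative-circular), contradicting the hypothesis. Hence $f$ has no even- or odd-self-dual subnetwork, and Corollary~\ref{cor2} gives the desired conclusion. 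There is no real obstacle here; the corollary is a clean synthesis of the two preceding results.
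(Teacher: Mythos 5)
Your proof is correct and follows exactly the paper's route: the author also obtains this corollary as an immediate consequence of Theorem~\ref{thm:circular-eosd} combined with Corollary~\ref{cor2}, using the (earlier-remarked) fact that subnetworks of non-expansive networks are non-expansive. Nothing is missing.
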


\begin{remark}
It is easy to check that critical even-self-dual (resp. odd-self-dual) network with at most three components are circular. Below is an example of critical even-self-dual network with four components which is not circular.   
\end{remark}
	
\begin{example}
The following network $f$ on $\{1,2,3,4\}$ is a critical even-self-dual network which is not circular  (and which is expansive, since $d(f(0),f(e_i))\geq 2$ for $i=1,2,3,4$). Note that the subnetwork $f^{40}$ is the three-dimensional network considered in Examples~\ref{ex1}, \ref{ex2} and \ref{ex3}. 
\[
\begin{array}{l}
f_1(x)=(\overline{x_2}\land x_3\land \overline{x_4})\lor
((\overline{x_2}\lor x_3)\land x_4)\\[1mm]
f_2(x)=(\overline{x_3}\land x_1\land \overline{x_4})\lor
((\overline{x_3}\lor x_1)\land x_4)\\[1mm]
f_3(x)=(\overline{x_1}\land x_2\land \overline{x_4})\lor
((\overline{x_1}\lor x_2)\land x_4)\\[1mm]
f_4(x)=(x_2\land x_3\land \overline{x_1})\lor
((x_2\lor x_3)\land x_1)
\end{array}
\]
\[
\begin{array}{ccc}
\begin{array}{ccc}
x & f(x) & \f(x)\\\hline
0000 & 0000 & 0000\\
0010 & 1000 & 1010\\
0100 & 0010 & 0110\\
0110 & 0011 & 0101\\
1000 & 0100 & 1100\\
1010 & 1001 & 0011\\
1100 & 0101 & 1001\\
1110 & 0001 & 1111\\
0001 & 1110 & 1111\\
0011 & 1010 & 1001\\
0101 & 0110 & 0011\\
0111 & 1011 & 1100\\
1001 & 1100 & 0101\\
1011 & 1101 & 0110\\
1101 & 0111 & 1010\\
1111 & 1111 & 0000
\end{array}
&\qquad&
\begin{array}{c}
\begin{tikzpicture} 
\node[outer sep=1.5,inner sep=1.5] (0) at (90:2){$G(f)$};
\node[outer sep=1.5,inner sep=1.5,circle,draw,thick] (1) at (135:1.5){$1$};
\node[outer sep=1.5,inner sep=1.5,circle,draw,thick] (2) at (45:1.5){$2$};
\node[outer sep=1.5,inner sep=1.5,circle,draw,thick] (3) at (-45:1.5){$3$};
\node[outer sep=1.5,inner sep=1.5,circle,draw,thick] (4) at (-135:1.5){$4$};
\path[thick]
(2) edge[bend left=00,-|] (1)
(3) edge[bend left=15,->] (1)
(4) edge[bend left=30,->] (1)
(1) edge[bend left=30,->] (2)
(3) edge[bend left=00,-|] (2)
(4) edge[bend left=15,->] (2)
(1) edge[bend left=15,-|] (3)
(2) edge[bend left=30,->] (3)
(4) edge[bend left=00,->] (3)
(1) edge[bend left=00,-|] (4)
(2) edge[bend left=15,->] (4)
(3) edge[bend left=30,->] (4)
;
\end{tikzpicture}
\end{array}
\end{array}
\]
\end{example}

\section{Non-expansive networks}\label{sec:non-exp}

As we have seen in the preceding section, a positive-circular (resp. negative-circular) network $f$ is non-expansive, and it is easy to see that such a network is also 2-critical (resp. 0-critical). The following theorem, the main result of this section, asserts that the converse is true. 

\begin{theorem}\label{thm:nonexp}~
\begin{enumerate}
\item
$f$ is positive-circular if and only if $f$ is 2-critical and non-expansive. 
\item
$f$ is negative-circular if and only if $f$ is 0-critical and non-expansive. 
\end{enumerate}
\end{theorem}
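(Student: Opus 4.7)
The plan is to prove part~1 in detail; part~2 follows by an analogous argument, swapping ``positive/even/2-critical'' with ``negative/odd/0-critical'' throughout.

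The forward direction ($\Rightarrow$) is short. If $f$ is positive-circular, Theorem~\ref{thm:circular-eosd} gives that $f$ is even-self-dual and non-expansive, so $f$ has exactly two fixed points. Any strict subnetwork $h$ is obtained by fixing at least one component, hence $G(h)$ is a subgraph of $G(f)$ by Proposition~\ref{pro:subgraph}, and since $G(f)$ is a single cycle, $G(h)$ is acyclic. Robert's theorem (Theorem~\ref{thm:robert}) then yields a unique fixed point for $h$, so $f$ is 2-critical.

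For the backward direction ($\Leftarrow$), suppose $f$ is 2-critical and non-expansive. By the remark following the definition of 2-criticality, $f$ has exactly two fixed points, of the form $a$ and $a\o 1$. Since $f$ does not satisfy ``every subnetwork has a unique fixed point'', Corollary~\ref{cor:nonexp1} produces a circular subnetwork $h$ of $f$. If $h=f$, then $f$ is circular and has two fixed points, so $f$ must be positive-circular (a negative-circular $f$ would have none) and we are done. If $h$ is a strict positive-circular subnetwork, then $h$ itself has two fixed points, contradicting the 2-criticality of $f$. The remaining case --- $h$ a strict negative-circular subnetwork, with zero fixed points --- is the main obstacle, and must be shown to be impossible.

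To rule out this last case, I would translate $f$ by $a$ (the conjugation $x\mapsto f(x\o a)\o a$ preserves non-expansion, fixed-point count, and all criticality notions) so that the two fixed points become $0$ and $1$. Non-expansion applied against both antipodal fixed points then forces $d(f(y),0)=\d{y}$, i.e.\ $f$ is weight-preserving. Writing $h$ on $I\subsetneq V$ induced by $z\in\B^{I^c}$ in the explicit form $h(w)=\s w\o s$ with $\s$ a single cycle on $I$ and $\d{s}$ odd, the absence of a fixed point of $h$ forces $z\neq a_{-I}$ and $z\neq a_{-I}\o 1$, which already rules out $|I^c|=1$ and so requires $|I^c|\ge 2$; moreover weight-preservation gives the identity $\d{f(y)_{-I}}=\d{z}-\d{s}+2|\s(y|_I)\wedge s|$ for every $y$ in the subcube $\{y:y_{-I}=z\}$. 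My plan is to use these tight arithmetic constraints, together with non-expansion evaluated at $y|_I=a|_I$ and at $y|_I=\s^{-1}(a|_I\o s)$, to extract from $f$ another strict subnetwork with two fixed points --- equivalently, by the already-proved forward direction, a strict \emph{positive}-circular subnetwork of $f$ --- which directly contradicts 2-criticality. Carrying out this extraction cleanly for arbitrary $|I|$ and $|I^c|$ is the hard part, and it is most naturally done by induction on $|V|$, proving parts~1 and~2 in tandem so that the two halves of the theorem can be invoked against one another at the inductive step.
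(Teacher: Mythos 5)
Your forward direction and your reduction of part~1's backward direction via Corollary~\ref{cor:nonexp1} are sound, but the proof is not complete: the one case you yourself flag as ``the main obstacle'' --- a 2-critical non-expansive $f$ containing a \emph{strict} negative-circular subnetwork --- is where all the content lies, and you leave it as a plan rather than an argument. For what it is worth, this case closes more directly than your sketch suggests. After translating so that the two fixed points are $0$ and $1$, non-expansion against both antipodal fixed points gives $\d{f(y)}=\d{y}$ for all $y$. Now let $h$ be any strict subnetwork induced by $z\in\B^{V\setminus I}$ with $h(1)=h(0)\o 1$ (true for any self-dual $h$, in particular a circular one), and let $z^0,z^1$ be the points of the corresponding subcube with $z^0|_I=0$, $z^1|_I=1$. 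Non-expansion forces $d(f(z^0),f(z^1))\leq |I|$, hence $f(z^0)_{-I}=f(z^1)_{-I}$, and the weight-preservation count then yields $\d{h(0)}=0$; so $h$ has the fixed point $0$ and cannot be negative-circular. This is essentially the paper's route (phrased there as: no strict subnetwork of the translated network is self-dual, hence by Theorem~\ref{main} the network itself is even-self-dual, hence positive-circular by Theorem~\ref{thm:circular-eosd}).

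The serious flaw is the opening claim that part~2 ``follows by an analogous argument.'' It does not. Every step of your part~1 argument leans on the existence of two antipodal fixed points: the translation normalizing them to $0$ and $1$, the identity $\d{f(y)}=\d{y}$, and the extraction of a subnetwork with two fixed points to contradict 2-criticality. A 0-critical network has \emph{no} fixed point, so none of this is available; and symmetrically, a strict positive-circular subnetwork of a 0-critical network yields no immediate contradiction, since 0-criticality says nothing about strict subnetworks having several fixed points. The paper explicitly warns that the two halves have very different proofs: for part~2 it starts from a point with $f(x)=x\o e_i$, shows the trajectory $x^{k+1}=f(x^k)$ is periodic of period $2n$ and sweeps from $0$ to $1$ and back, reads off from it a candidate negative-circular network $h$, and proves $f=h$ pointwise by a delicate multi-level induction on the block structure of each point. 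Without an argument of that nature, part~2 remains unproved.
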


\noindent
Even if the two points of this theorem seem similar (symmetrical), their proofs are very different. The proof of the first is rather direct and uses Theorem~\ref{main} and a part of Theorem~\ref{thm:circular-eosd} (non-expensive even-self-dual networks are positive-circular). The proof of the second points is independent of previous results. It consists in visiting each point of $\B^V$ in a very special order. In both cases, the following lemma will be useful.

\begin{lemma}\label{lem:Hf}
Let $f$ be networks on $V$ and $I\subseteq V$. Let $f'$ be the network on~$V$ defined by $f'(x)=f(x\o e_I)\o e_I$ for all $x\in\B^V$. We have the following properties.
\begin{enumerate}
\item
If $f$ is non-expansive, then $f'$ is non-expansive.
\item
If $f$ is 2-critical, then $f'$ is 2-critical.
\item
If $f$ is 0-critical, then $f'$ is 0-critical.
\item
$|G(f')|=|G(f)|$. 
\end{enumerate}
\end{lemma}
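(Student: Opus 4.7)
The plan is to handle the four points separately, noting that the transformation $f \mapsto f'$ is an involution ($f'' = f$), which will streamline the bookkeeping, and that points 1 and 4 are straightforward computations while points 2 and 3 require an analysis of how the transformation interacts with taking subnetworks.

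For \emph{point 1}, I would simply compute
\[
d(f'(x),f'(y)) = d(f(x\o e_I)\o e_I,\, f(y\o e_I)\o e_I) = d(f(x\o e_I),f(y\o e_I)),
\]
which is at most $d(x\o e_I,\,y\o e_I)=d(x,y)$ by non-expansiveness of $f$. For \emph{point 4}, I would compute the absolute derivatives directly: since $(e_I)_i \o (e_I)_i = 0$, the $e_I$-shifts cancel inside $|f'_{ij}(x)| = f'_i(x)\o f'_i(x\o e_j)$, yielding $|f'_{ij}(x)| = |f_{ij}(x\o e_I)|$. Taking the union over $x\in\B^V$ (a bijection $x\mapsto x\o e_I$ of $\B^V$ to itself), we conclude $|G(f')|=|G(f)|$.

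For \emph{points 2 and 3}, the key observation is a bijection between strict subnetworks of $f'$ and those of $f$ preserving the number of fixed points. First, note that $x$ is a fixed point of $f'$ iff $f(x\o e_I)=x\o e_I$, so $x\mapsto x\o e_I$ is a bijection between the fixed points of $f'$ and those of $f$; in particular $f$ and $f'$ have the same number of fixed points. Next, let $h'$ be the subnetwork of $f'$ induced by $z\in\B^{V\setminus J}$ for some non-empty $J\subsetneq V$. Writing $I'=I\cap J$ and $z'=z\o(e_I)_{-J}$, a direct calculation using $x\o e_I = (x|_J\o e_{I'})\cup(z\o(e_I)_{-J})$ (viewed as concatenation) shows
\[
h'(u) = h(u\o e_{I'})\o e_{I'}\qquad \forall u\in\B^J,
\]
where $h$ is the subnetwork of $f$ induced by $z'$. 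Thus $h'$ is obtained from $h$ by the same transformation (with $I$ replaced by $I'$), and in particular $h$ and $h'$ have the same number of fixed points.

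With this correspondence in hand, points 2 and 3 become immediate. If $f$ is 2-critical, then $f'$ has two fixed points, and every strict subnetwork $h'$ of $f'$ corresponds to a strict subnetwork $h$ of $f$ (the map $z\mapsto z'$ is a bijection on inducing points), which has at most one fixed point; so $h'$ does too, and $f'$ is 2-critical. The same argument, replacing ``at most one'' with ``at least one'' and ``two fixed points'' with ``no fixed point'', establishes point 3. The main bookkeeping obstacle is verifying the formula $h'(u)=h(u\o e_{I'})\o e_{I'}$, which just amounts to carefully restricting the identity $f'(x)=f(x\o e_I)\o e_I$ to the hyperplane $x_{-J}=z$.
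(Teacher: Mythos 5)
Your proposal is correct and follows essentially the same route as the paper: a direct distance computation for point 1, cancellation of the $e_I$-shift in the discrete derivatives for point 4, and for points 2 and 3 the observation that the subnetwork of $f'$ induced by $z$ is obtained from the subnetwork of $f$ induced by $z\o e_{I\setminus J}$ by the same kind of transformation (with $I$ replaced by $I\cap J$), hence has the same number of fixed points. The only cosmetic difference is the direction of the correspondence (you go from subnetworks of $f'$ to subnetworks of $f$, the paper goes the other way), which is immaterial since the transformation is an involution.
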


\begin{proof}
Suppose that $f$ is non-expansive, and let $x,y\in\B^V$. Then 
\begin{multline*}
d(f'(x),f'(y))=d(f(x\o e_I)\o e_I,f(y\o e_I)\o e_I)\\[1mm]
=d(f(x\o e_I),f(y\o e_I))\leq d(x\o e_I,y\o e_I)=d(x,y).
\end{multline*}
thus $f'$ is non-expansive. 

\medskip
Let $J$ be a non-empty subset of $V$ and let $h$ be the subnetwork of $f$ induced by $z\in\B^{V\setminus J}$. Let $h'$ be the network on $J$ defined by $h'(y)=h(y\o e_{I\cap J})\o e_{I\cap J}$ for all $y\in\B^J$. Let $x\in\B^V$ be such that $x_{-J}=z\o e_{I\setminus J}$. We have 
\[
h'(x|_J)=g(x|_J\o e_{I\cap J})\o e_{I\cap J}=h((x\o e_I)|_J)\o e_{I\cap J}
\]
Since 
\[
(x\o e_I)_{-J}=x_{-J}\o e_{I\setminus I}=z\o e_{I\setminus J}\o e_{I\setminus I}=z
\]
we have 
\[
h((x\o e_I)|_J)\o e_{I\cap J}=f(x\o e_I)|_J\o e_{I\cap J}=(f(x\o e_I)\o e_I)|_J=f'(x)|_J.
\]
Thus $h'(x|_J)=f'(x)|_J$ for all $x\in\B^V$ be such that $x_{-J}=z\o e_{I\setminus J}$, {\em i.e.} $h'$ is the subnetwork of $f'$ induced by $z\o e_{I\setminus J}$. Since it is clear that $h$ and $h'$ have the same number of fixed points, {\em 2.} and {\em 3.} are proved. 

\medskip
For all $i,j\in V$ and $x\in\B^V$, 
\begin{multline*}
|f'_{ij}(x)|=f'_i(x)\o f'_i(x\o e_j)
=f_i(x\o e_I)\o e_I\o f_i(x\o e_I\o e_j)\o e_I\\[1mm]
=f_i(x\o e_I)\o f_i(x\o e_I\o e_j)
=|f_{ij}(x\o e_I)|
\end{multline*}
thus $|Gf'(x)|=|Gf(x\o e_I)|$ and {\em 4.} follows.
\end{proof}

\begin{proof}[Proof of Theorem~\ref{thm:nonexp}]
{\it (Direction $\Rightarrow$ of 1. and 2.)} Suppose that $f$ is positive-circular (resp. negative-circular). According to Theorem~\ref{thm:circular-eosd}, $f$ is non-expansive, and according to the same theorem, it is even-self-dual (resp. odd-self-dual), thus it has two (resp. no) fixed points. If $h$ is a strict subnetwork of $f$, then $G(h)$ is a strict subgraph of $G(f)$, thus it is acyclic, and by Robert's theorem, $h$ has a unique fixed point. Thus $f$ is 2-critical (resp. 0-critical).  

\medskip
{\it (Direction $\Leftarrow$ of 1.)}
We first need the following property:
\begin{enumerate}
\item[{\bf (1)}]
Suppose that $f$ is non-expansive. If $f(0)=0$ and $f(1)=1$ then $\d{x}=\d{f(x)}$ for all $x\in\B^V$. 
\end{enumerate}
Indeed, under these hypothesis, 
\[
\d{f(x)}=d(0,f(x))=d(f(0),f(x))\leq d(0,x)=\d{x}
\]
and
\[
|V|-\d{f(x)}=d(1,f(x))=d(f(1),f(x))\leq d(1,x)=|V|-\d{x}.
\]
thus $\d{f(x)}\geq \d{x}$ and it follows that $\d{f(x)}=\d{x}$. 

\begin{enumerate}
\item[{\bf (2)}]
Suppose that $f$ is non-expansive. Suppose also that $f(0)=0$ and $f(1)=1$. Let $I$ be a non-empty subset of $V$. Let $z\in\B^{V\setminus I}$ and let $h$ be the subnetwork of $f$ induced by $z$. If $h(1)=h(0)\o 1$ then $h(0)=0$. 
\end{enumerate}
Let $z^0$ and $z^1$ denotes the points of $\B^V$ such that 
\[
z^0|_I=0,\qquad z^1|_I=1,\qquad z^0_{-I}=z^1_{-I}=z.
\]
So $h(0)=f(z^0)|_I$ and $h(1)=f(z^1)|_I$. Suppose that $h(1)=h(0)\o 1$. Then 
\[
\begin{array}{rclcl}
d(f(z^0),f(z^1))&=&d(f(z^0)_{-I},f(z^1)_{-I})+d(f(z^0)|_I,f(z^1)|_I)\\[1mm]
&=&d(f(z^0)_{-I},f(z^1)_{-I})+d(h(0),h(1))\\[1mm]
&=&d(f(z^0)_{-I},f(z^1)_{-I})+d(h(0),h(0)\o 1)\\[1mm]
&=&d(f(z^0)_{-I},f(z^1)_{-I})+|I|.
\end{array}
\]
Since $f$ is non-expansive
\[
d(f(z^0),f(z^1))=d(f(z^0)_{-I},f(z^1)_{-I})+|I|\leq d(z^0,z^1)=|I|
\]
thus $d(f(z^0)_{-I},f(z^1)_{-I})=0$, that is $f(z^0)_{-I}=f(z^1)_{-I}=y$ for some $y\in\B^{V\setminus I}$. Since $f(0)=0$ and $f(1)=1$, it follows from {\bf(1)} that 
\[
\d{z}=\d{z^0}=\d{f(z^0)}=\d{f(z^0)|_I}+\d{f(z^0)_{-I}}=\d{h(0)}+\d{y}
\]
and 
\begin{multline*}
|I|+\d{z}=\d{z^1}=\d{f(z^1)}=\d{f(z^1)|_I}+\d{f(z^1)_{-I}}\\[1mm]=\d{h(1)}+\d{y}=\d{h(0)\o 1}+\d{y}=|I|-\d{h(0)}+\d{y}.
\end{multline*}
Thus 
\[
2\d{z}=\d{z^0}+\d{z^1}-|I|=2\d{y}.
\]
Hence $\d{z}=\d{y}$ and since $\d{z}=\d{h(0)}+\d{y}$, and it follows that $\d{h(0)}=0$. This prove {\bf (2)}.

\medskip
We are now in position to prove that 2-critical non-expansive networks are positive-circular. Suppose that $f$ is 2-critical and non-expansive. Let $x$ be a fixed point of $f$. Let $f'$ be the network on $V$ defined by $f'(y)=f(y\o x)\o x$ for all $y\in\B^V$. Then  $f'(0)=f(x)\o x=x\o x=0$. Furthermore, by Lemma~\ref{lem:Hf}, $f'$ is 2-critical (so $f'(1)=1$) and $f'$ is non-expansive. Suppose that $f'$ has a self-dual strict subnetwork $h$. Then following~{\bf (2)}, we have $h(0)=0$ and thus $h(1)=1$, so $f'$ is not 2-critical, a contradiction. We deduce that $f'$ has no self-dual strict subnetwork, and since it has two fixed points, we deduce from Theorem~\ref{main} that $f'$ is even-self-dual. Thus, according to Theorem~\ref{thm:circular-eosd}, $G(f')$ is a  positive cycle. It follows from Lemma~\ref{lem:Hf} that $G(f)$ is a cycle, and since $f$ has two fixed points, $G(f)$ is a positive cycle. 

\medskip
{\it (Direction $\Leftarrow$ of point 2.)}
We begin with the following fact. 
\begin{enumerate}
\item[{\bf (3)}] 
If $f$ is non-expansive and 0-critical, then for all $i\in V$ there exists $x,y\in \B^V$ with $x_i\neq y_i$ such that $\f(x)=\f(y)=e_i$. 
\end{enumerate}
Let $i\in V$ and $\a\in\B$. Since $f$ is 0-critical, the immediate subnetwork $f^{i\a}$ has at least one fixed point. Thus there exists $x\in\B^V$ with $x_i=\a$ such that $f(x)_{-i}=f^{i\a}(x_{-i})=x_{-i}$. Hence, $f(x)=x$ or $f(x)=x\o e_i$, and since $f$ has no fixed point, we deduce that $f(x)=x\o e_i$. Thus $\f(x)=e_i$, and {\bf(3)} follows.
\begin{enumerate}
\item[{\bf (4)}] 
If $f$ is non-expansive and 0-critical then for all $i\in V$ and $x,y\in\B^V$:
\[
\f(x)=\f(y)=e_i~\mathrm{and}~x_i\neq y_i\quad\Rightarrow\quad x=y\o 1.
\]
\end{enumerate}
Suppose that $\f(x)=\f(y)=e_i$ and $x_i\neq y_i$. Suppose that there exists $j$ such that $x_j=y_j=\a$. Then $f^{j\a}(x_{-j})=x_{-j}\o e_i$ and $f^{j\a}(y_{-j})=y_{-j}\o e_i$. Since $f$ is 0-critical, $f^{j\a}$ has a fixed point $z$. If $x_i=z_i$ then 
\[
d(f^{j\a}(x_{-j}),f^{j\a}(z))=d(x_{-j}\o e_i,z)=d(x_{-j},z)+1,
\]
a contradiction with the fact that $f^{j\a}$ is non-expansive. Otherwise $y_i=z_i$~so  
\[
d(f^{j\a}(y_{-j}),f^{j\a}(z))=d(y_{-j}\o e_i,z)=d(y_{-j},z)+1
\]
and we obtain the same contradiction. Consequently, there is no $j$ such that $x_j=y_j$. So $x=y\o 1$ and {\bf(4)} is proved. 
\begin{enumerate}
\item[{\bf (5)}] 
Suppose that every 0-critical non-expansive network $f$ such that $f(0)=e_i$ for some $i\in V$ is negative circular. Then every 0-critical non-expansive network is negative circular. 
\end{enumerate}
Indeed, let $f$ be 0-critical and non-expansive. By {\bf (3)} there exists $i\in V$ and $x\in \B^V$ such that $f(x)=x\o e_i$. Let $f'$ be the network on $V$ defined by $f'(y)=f(y\o x)\o x$ for all $y\in\B^V$. By Lemma~\ref{lem:Hf}, $f'$ is 0-critical and non-expansive. Furthermore, $f'(0)=f(x)\o x=x\o e_i\o x=e_i$. Thus, by hypothesis, $G(f')$ is a negative cycle. It follows from Lemma~\ref{lem:Hf} that $G(f)$ is a cycle, and since $f$ has no fixed points, $G(f)$ is a negative cycle. This proves {\bf(5)}.

\medskip
So according to {\bf(5)}, we can assume, without loss of generality, the following hypothesis:
\begin{enumerate}
\item[{\bf(H)}] $f(0)=e_i$ for some $i\in V$. 
\end{enumerate}
Also, in the all following, we use the following notations: 
\begin{enumerate}
\item[] 
$n=|V|$, ~$x^1=0$~ and ~$x^{k+1}=f(x^k)$~ for all $k\in\mathbb{N}$. 
\end{enumerate}
We first prove the following property (using arguments similar to the ones introduced in claim {\bf (1)} of the proof of Theorem~\ref{thm:circular-eosd}).
\begin{enumerate}
\item[{\bf (6)}]
For all $k\geq 1$, there exists $i_k\in V$ such that $f(x^k)=x^k\o e_{i_k}$, and the resulting sequence $i_1i_2i_3\dots$ is a periodic sequence of period $n$. 
\end{enumerate}
We prove this by induction on $k$. The case $k=1$ is given by the the hypothesis {\bf(H)}, so suppose that $k>1$. Then $x^p=f(x^{p-1})=x^{p-1}\o e_{i_{p-1}}$ thus $d(x^p,x^{p-1})=1$, and since $f$ is non-expansive, we deduce that  
\[
d(f(x^p),x^p)=d(f(x^p),f(x^{p-1}))\leq d(x^p,x^{p-1})=1.
\]
Since $f$ has no fixed point $d(f(x^p),x^p)=1$ so there exists $i_k\in V$ such that $f(x^p)=x^p\o e_{i_p}$. We now prove that $i_1i_2i_3\dots$ is a periodic sequence of period~$n$. Let $k\geq 1$. Suppose that there exists $l\geq 1$ such that $i_k=i_{k+l}$, and let $l$ be minimal for this property. Then $\f(x^k)=\f(x^{k+l})=e_{i_k}$. Since 
\[
x^{k+l}=x^k\o e_{i_k}\o e_{i_{k+1}}\o\cdots \o e_{i_{k+l-1}}
\]
and since $i_k\neq i_{k+p}$ for all $1\leq p<l$ we have $x^{k+l}_{i_k}\neq x^k_{i_k}$ thus following {\bf (4)}, $x^{k+l}=x^k\o 1$. Consequently, $l=n$. Thus, the sequence $i_1i_2i_3\dots$ has period $n$ and {\bf (6)} is proved.  

\begin{enumerate}
\item[{\bf (7)}]
As an immediate consequence of {\bf (6)}, we have 
\[
\begin{array}{lclcl}
x^1&=&0\\[1mm]
x^2&=&e_{i_1}\\[1mm]
x^3&=&e_{i_1}\o e_{i_2}\\[1mm]
&\vdots&\\[1mm]
x^k&=&e_{i_1}\o e_{i_2}\o e_{i_3}\o\cdots\o e_{i_{k-1}}\\[1mm]
&\vdots&\\[1mm]
x^{n+1}&=&e_{i_1}\o e_{i_2}\o e_{i_3}\o\cdots\o e_{i_{k-1}}\o\cdots\o e_{i_n}&=&1
\end{array}
\]
and
\[
\begin{array}{lclcl}
x^{n+1}&=&1\\[1mm]
x^{n+2}&=&1\o e_{i_1}\\[1mm]
x^{n+3}&=&1\o e_{i_1}\o e_{i_2}\\[1mm]
&\vdots&\\[1mm]
x^{n+k}&=&1\o e_{i_1}\o e_{i_2}\o e_{i_3}\o\cdots\o e_{i_{k-1}}\\[1mm]
&\vdots&\\[1mm]
x^{2n+1}&=&1\o e_{i_1}\o e_{i_2}\o e_{i_3}\o\cdots\o e_{i_{k-1}}\o\cdots\o e_{i_n}&=&0.
\end{array}
\]
\end{enumerate}

\medskip%
Let $h$ be the negative-circular network on $V$ such that $G(h)$ is the negative cycle with a negative arc from $i_{n}$ to $i_{n+1}=i_1$ and a positive arc from $i_{k}$ to $i_{k+1}$ for all $1\leq k<n$. In this way, for all $x\in\B^V$, 
\[
h_{i_1}(x)=x_{i_n}\o 1,\qquad h_{i_k}(x)=x_{i_{k-1}}\quad 1<k\leq n.
\]
We will prove that $h=f$, using several times the following easy tow next properties.

\begin{enumerate}
\item[{\bf (8)}] 
For all $x\in\B^V$ and $1\leq k<l\leq n$,
\[
\left.
\begin{array}{l}
f(x\o e_{i_k})=h(x\o e_{i_k})\\[1mm]
f(x\o e_{i_l})=h(x\o e_{i_l})
\end{array}
\right\}
\quad
\Rightarrow
\quad
\begin{array}{l}
f(x)=h(x)\text{ or}\\[1mm]
\qquad f(x)=h(x)\o e_{i_{k+1}}\o e_{i_{l+1}}.
\end{array}
\]
\end{enumerate} 
Since $f$ is non expansive, 
\[
\begin{array}{lllll}
d(f(x),f(x\o e_{i_k}))&=&d(f(x),h(x\o e_{i_k}))&\leq& 1\\[1mm]
d(f(x),f(x\o e_{i_l}))&=&d(f(x),h(x\o e_{i_l}))&\leq& 1\\[1mm]
\end{array} 
\]
Also $h(x\o e_{i_k})=h(x)\o e_{i_{k+1}}$ and 
$h(x\o e_{i_l})=h(x)\o e_{i_{l+1}}$. From $k\neq l$ it comes that $d(h(x\o e_{i_k}),h(x\o e_{i_l}))=2$ and thus  
\[
\begin{array}{lll}
d(f(x),h(x)\o e_{i_{k+1}})&=&1\\[1mm]
d(f(x),h(x)\o e_{i_{k+l}})&=&1
\end{array} 
\]
Hence, there exists $p,q$ such that 
\[
\begin{array}{lllll}
f(x)&=&h(x)\o e_{i_{k+1}}\o e_{i_p}\\[1mm]
f(x)&=&h(x)\o e_{i_{l+1}}\o e_{i_q}
\end{array} 
\]
Thus if $f(x)\neq h(x)$ then $i_{p}=i_{l+1}$ and $i_q=i_{k+1}$. This proves {\bf (8)}.  

\begin{enumerate}
\item[{\bf (9)}] 
For all $x\in\B^V$ and $1\leq k<l<p\leq n$, 
\[
\left.
\begin{array}{l}
f(x\o e_{i_k})=h(x\o e_{i_k})\\[1mm]
f(x\o e_{i_l})=h(x\o e_{i_l})\\[1mm]
f(x\o e_{i_p})=h(x\o e_{i_p})
\end{array}
\right\}
\quad
\Rightarrow
\quad
f(x)=h(x).
\]
\end{enumerate} 
Indeed, if $f(x)\neq h(x)$, then according to {\bf (8)},
\[
\begin{array}{lll} 
f(x)&=&h(x)\o e_{i_{k+1}}\o e_{i_{l+1}}\\[1mm]
f(x)&=&h(x)\o e_{i_{k+1}}\o e_{i_{p+1}}
\end{array}
\]
thus $i_{l+1}=i_{p+1}$, a contradiction. This proves {\bf (9)}. 

\begin{enumerate}
\item[{\bf (10)}]
If $x\in\B^V$ and $x_{i_1}>x_{i_n}$ then $f(x)=h(x)$.
\end{enumerate}
Let $x\in\B^V$ be such that $x_{i_1}=1$ and $x_{i_n}=0$. Consider the sequence $s(x)=x_{i_1}x_{i_2}\dots x_{i_n}$, and decompose this sequence into maximal subsequences with only $1$ or only $0$, in the following way:
\[
s(x)=\underbrace{11\cdots11}_{s(x)^1}\underbrace{00\cdots00}_{s(x)^2}\underbrace{11\cdots11}_{s(x)^3}\underbrace{00\cdots00}_{s(x)^4}11\cdots/\!/\cdots11\underbrace{00\cdots 00}_{s(x)^{t(x)}}.
\]
Clearly, $t(x)$ is even and $t(x)\geq 2$ (since $x_{i_1}>x_{i_n}$). For each $1\leq p\leq t(x)$, let $|s(x)^p|$ denote the length of $s(x)^p$.  

\begin{enumerate}
\item[{\bf (a)}]
Suppose that $t(x)=2$. Then $s(x)$ has the following form: 
\[
s(x)=x_{i_1}x_{i_2}\dots x_{i_n}=\underbrace{11\cdots11}_{s(x)^1}\underbrace{00\cdots00}_{s(x)^2}.
\]
Let $k$ be such that $x_{i_k}$ is the first element of $s(x)^2$ (or equivalently, the first zero of $s(x)$). Then $x=e_{i_1}\o e_{i_2}\o\cdots\o e_{i_{k-1}}$, so $h(x)=x\o e_{i_k}$. Following {\bf (7)} we have $x=x^k$ and $f(x^k)=x^k\o e_{i_k}$ thus $f(x)=h(x)$. 
\end{enumerate}
\begin{enumerate}
\item[{\bf (b)}]
Suppose that $t(x)=4$. Then $s(x)$ has the following form: 
\[
s(x)=x_{i_1}x_{i_2}\dots x_{i_n}=\underbrace{11\cdots11}_{s(x)^1}\underbrace{00\cdots00}_{s(x)^2}
\underbrace{11\cdots11}_{s(x)^3}\underbrace{00\cdots00}_{s(x)^4}.
\]
We show that $f(x)=h(x)$ by induction on $|s(x)^2|$ and then on $|s(x)^3|$. Let $x_{i_k}$ be the first element of $s(x)^2$, let $x_{i_l}$ be the first element of~$s(x)^3$, and let $x_{i_p}$ be the last element of $s(x)^3$, so that:
\[
s(x)^2=x_{i_k}x_{i_{k+1}}\cdots x_{i_{l-1}}\qquad 
s(x)^3=x_{i_l}x_{i_{l+1}}\cdots x_{i_p}.
\] 
\begin{enumerate}
\item[$\bullet$]
Suppose that $|s(x)^2|=1$. Assume first that $|s(x)^3|=1$. In this situation, $s(x)^2=x_{i_k}$, $s(x)^3=x_{i_{k+1}}$ and 
\[
x=e_{i_1}\o e_{i_2}\o \cdots \o e_{i_{k-1}}\o e_{i_{k+1}}
\]
so that 
\[
h(x)=x\o e_{i_k}\o e_{i_{k+1}}\o e_{i_{k+2}}.
\] 
Also $t(x\o e_{i_k})=2$ and $t(x\o e_{i_{k+1}})=2$, and from {\bf (a)} it follows that $f(x\o e_{i_k})=h(x\o e_{i_k})$ and $f(x\o e_{i_{k+1}})=h(x\o e_{i_{k+1}})$. Consequently, according to {\bf (8)}, we have $f(x)=h(x)$ or $f(x)=h(x)\o e_{i_{k+1}}\o e_{i_{k+2}}$. In the second case, 
\[
\begin{array}{rcl}
f(x)&=&h(x)\o e_{i_{k+1}}\o e_{i_{k+2}}\\[1mm]
&=&x\o e_{i_k}\o e_{i_{k+1}}\o e_{i_{k+2}}\o e_{i_{k+1}}\o e_{i_{k+2}}\\[1mm]
&=&x\o e_{i_k}.
\end{array}
\]
Thus $\f(x)=e_{i_k}$. Following {\bf (7)}, $\f(x^{n+k})=e_{i_k}$ and we deduce from {\bf (4)} that $x^{n+k}=x\o 1$, which is a contradiction since by,~{\bf (7)}, 
\[
x^{n+k}=1\o e_{i_1}\o e_{i_2}\o e_{i_3}\o\cdots\o e_{i_{k-1}}=1\o x\o e_{i_{k+1}}.
\]
Consequently, $f(x)=h(x)$. This proves the base case of {\bf (b1)}. For the induction step, assume that $|s(x)^3|>1$. Then $s(x)^2=x_{i_k}$, $s(x)^3=x_{i_{k+1}}\cdots x_{i_p}$ and 
\[
x=e_{i_1}\o e_{i_2}\o \cdots \o e_{i_{k-1}}\o e_{i_{k+1}}\o e_{i_{k+2}}\o\cdots \o e_{i_p}
\]
so that 
\[
h(x)=x\o e_{i_k}\o e_{i_{k+1}}\o e_{i_{p+1}}.
\]
Also $t(x\o e_{i_k})=2$ and we deduce from {\bf (a)} that $f(x\o e_{i_k})=h(x\o e_{i_k})$. In addition, $t(x\o e_{i_p})=4$ and 
\[
|s(x\o e_{i_p})^2|=1<|s(x\o e_{i_p})^3|=|s(x)^3|-1.
\]
Thus, by induction hypothesis, $f(x\o e_{i_p})=h(x\o e_{i_p})$. Hence, according to {\bf (8)} we have $f(x)=h(x)$ or $f(x)=h(x)\o e_{i_{k+1}}\o e_{i_{p+1}}$. In the second case, 
\[
\begin{array}{rcl}
f(x)&=&h(x)\o e_{i_{k+1}}\o e_{i_{p+2}}\\[1mm]
&=&x\o e_{i_k}\o e_{i_{k+1}}\o e_{i_{p+1}}\o e_{i_{k+1}}\o e_{i_{p+1}}\\[1mm]
&=&x\o e_{i_k}.
\end{array}
\]
Thus $\f(x)=e_{i_k}$. Following {\bf (7)}, $\f(x^{n+k})=e_{i_k}$ and we deduce from {\bf (4)} that $x^{n+k}=x\o 1$, which is a contradiction since, by~{\bf (7)}, 
\[
\begin{array}{rcl}
x^{n+k}&=&1\o e_{i_1}\o e_{i_2}\o e_{i_3}\o\cdots\o e_{i_{k-1}}\\[1mm]
&=&1\o x\o e_{i_{k+1}}\o e_{i_{k+2}}\o\cdots \o e_{i_p}.
\end{array}
\]
Consequently, $f(x)=h(x)$.
\item[$\bullet$]
Suppose that $|s(x)^2|>1$. Then $t(x\o e_{i_k})=t(x\o e_{i_{l-1}})=4$, and $|s(x\o e_{i_k})^2|=|s(x\o e_{i_{l-1}})^2|<|s(x)^2|$. Thus, by induction hypothesis, 
\[
\begin{array}{lcl}
f(x\o e_{i_k})&=&h(x\o e_{i_k})\\[1mm]
f(x\o e_{i_{l-1}})&=&h(x\o e_{i_{l-1}})
\end{array}
\]
Suppose that $|s(x)^3|=1$ so that $s(x)^3=x_{i_l}$. Then $t(x\o e_{i_l})=2$ and we deduce from {\bf (a)} that $f(x\o e_{i_l})=h(x\o e_{i_l})$ and from {\bf (9)} it comes that $f(x)=h(x)$. Now, suppose that $|s(x)^3|>1$. Then $t(x\o e_{i_p})=4$, $|s(x\o e_{i_p})^2|=|s(x)^2|$ and $|s(x\o e_{i_p})^3|=|s(x)^3|-1$, thus, by induction hypothesis, 
\[
\begin{array}{lcl}
f(x\o e_{i_p})&=&h(x\o e_{i_p})
\end{array}
\]
and according to {\bf (9)}, $f(x)=h(x)$. 
\end{enumerate}
\end{enumerate}

\begin{enumerate}
\item[{\bf (c)}]
Suppose that $t(x)\geq 4$. We prove that $f(x)=h(x)$ by induction on $t(x)$ and then on $|s(x)^2|+|s(x)^4|$. The base case $t(x)=4$ is given by {\bf (b)}. So assume that $t(x)\geq 6$. We use the following notations:
\[
\begin{array}{rcl}
s(x)^2&=&x_{i_k}x_{i_{k+1}}\cdots x_{i_{q}}\\[1mm]
s(x)^4&=&x_{i_l}x_{i_{l+1}}\cdots x_{i_p}\\[1mm]
s(x)^{t(x)-1}&=&x_{i_r}x_{i_{r+1}}\cdots x_{i_s}\\[1mm]
\end{array}
\] 
\begin{enumerate}
\item[$\bullet$]
Suppose that $|s(x)^2|+|s(x)^4|=2$. Then $t(x\o e_{i_k})=t(x\o e_{i_l})=t(x)-2$. Thus, by induction hypothesis.  
\[
\begin{array}{lcl}
f(x\o e_{i_k})&=&h(x\o e_{i_k})\\[1mm]
f(x\o e_{i_l})&=&h(x\o e_{i_l})
\end{array}
\]
We prove that $f(x)=h(x)$ by induction on $|s(x)^{t(x)-1}|$. If $|s(x)^{t(x)-1}|=1$ then $t(x\o e_{i_r})=t(x)-2$ and by induction hypothesis, 
\[
\begin{array}{lcl}
f(x\o e_{i_r})&=&h(x\o e_{i_r}).
\end{array}
\]
Thus according to {\bf (9)}, $f(x)=h(x)$. If $|s(x)^{t(x)-1}|>1$ then $t(x\o e_{i_s})=t(x)$, $|s(x\o e_{i_s})^2|+|s(x\o e_{i_s})^4|=2$ and $|s(x\o e_{i_s})^{t(x)-1}|<|s(x)^{t(x)-1}|$, thus, by induction hypothesis, 
\[
\begin{array}{lcl}
f(x\o e_{i_s})&=&h(x\o e_{i_s})
\end{array}
\]
and according to {\bf (9)}, $f(x)=h(x)$. 
\item[$\bullet$]
Suppose that $|s(x)^2|+|s(x)^4|>2$. Then either $|s(x)^2|\geq 2$ or $|s(x)^4|\geq 2$. Suppose that $|s(x)^2|\geq 2$, the other case being similar. Then $t(x\o e_{i_k})=t(x\o e_{i_q})=t(x)$ and $|s(x\o e_{i_k})^2|=|s(x\o e_{i_q})^2|<|s(x)^2|$ and $|s(x\o e_{i_k})^4|=|s(x\o e_{i_q})^4|=|s(x)^4|$, and so, by induction hypothesis, 
\[
\begin{array}{lcl}
f(x\o e_{i_k})&=&h(x\o e_{i_k})\\[1mm]
f(x\o e_{i_q})&=&h(x\o e_{i_q}).
\end{array}
\]
If $|s(x)^4|=1$ then $t(x\o e_{i_l})=t(x)-2$ thus, by induction hypothesis, $f(x\o e_{i_l})=h(x\o e_{i_l})$; otherwise, $t(x\o e_{i_l})=t(x)$ and $|s(x\o e_{i_l})^2|=|s(x)^2|$ and $|s(x\o e_{i_l})^4|<|s(x)^4|$, and so, by induction hypothesis, we have again  
\[
\begin{array}{lcl}
f(x\o e_{i_l})&=&h(x\o e_{i_l}).
\end{array}
\]
Thus, according to {\bf (9)}, $f(x)=h(x)$. This ends the proof of {\bf (10)}.\end{enumerate}
\end{enumerate}

\medskip
With similar arguments, we get:
\begin{enumerate}
\item[{\bf (11)}]
If $x\in\B^V$ and $x_{i_1}<x_{i_n}$ then $f(x)=h(x)$.
\end{enumerate}

\medskip
Hence, to complete the proof, it remains to prove that if $x_{i_1}=x_{i_n}$ then $f(x)=h(x)$. Assume that $x_{i_1}=x_{i_n}=0$. We proceed by induction on $\d{x}$. If $\d{x}=0$ then $f(x)=h(x)$ according to {\bf (7)}. Otherwise, there exists $1<k<n$ such that $x_{i_k}=1$. Since $\d{x\o e_{i_k}}=\d{x}-1$, by induction hypothesis,
\[
\begin{array}{lcl}
f(x\o e_{i_k})&=&h(x\o e_{i_k}).
\end{array}
\]  
Now since $(x\o e_{i_1})_{i_1}>(x\o e_{i_1})_{i_n}$ and $(x\o e_{i_n})_{i_1}<(x\o e_{i_n})_{i_n}$, according to {\bf (10)} and {\bf (11)} we have
\[
\begin{array}{lcl}
f(x\o e_{i_1})&=&h(x\o e_{i_1})\\[1mm]
f(x\o e_{i_n})&=&h(x\o e_{i_n})
\end{array}
\]
and we deduce from {\bf (9)} that $f(x)=h(x)$. If $x_{i_1}=x_{i_n}=1$, we prove with similar arguments that $f(x)=h(x)$. Thus $f=h$.
\end{proof}

As a consequence of this theorem and the fact that a network with multiple fixed points (resp. without fixed point) has a 2-critical (resp. 0-critical) subnetwork, we obtain the following ``dichotomization'' of Corollary~\ref{cor:nonexp1}.   

\begin{corollary} Suppose that $f$ is non-expansive.
\begin{enumerate}
\item 
Each subnetwork of $f$ has at most one fixed point if and only if $f$ has no positive-circular subnetwork. 
\item 
Each subnetwork of $f$ has at least one fixed point if and only if $f$ has no negative-circular subnetwork.  
\end{enumerate}
\end{corollary}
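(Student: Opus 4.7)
The plan is to deduce this corollary directly from Theorem~\ref{thm:nonexp}, using as the only auxiliary observations that (i) subnetworks of a non-expansive network are non-expansive (already noted in the excerpt), and (ii) a network with several fixed points (respectively no fixed point) always has a 2-critical (respectively 0-critical) subnetwork — which is immediate by picking, among the offending subnetworks, one of minimum dimension. Both parts of the statement will then be two short contrapositive arguments.

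For the forward direction of (1), I would argue the contrapositive. If $f$ contains a positive-circular subnetwork $h$, then by Theorem~\ref{thm:circular-eosd} $h$ is even-self-dual, so $h$ has exactly two fixed points. This contradicts the assumption that every subnetwork of $f$ has at most one fixed point. For the converse direction, assume some subnetwork $g$ of $f$ has at least two fixed points; choose, among the subnetworks of $f$ with this property, one of minimum dimension, call it $h$. Then $h$ is 2-critical. Since $h$ is a subnetwork of $f$, it is non-expansive. By Theorem~\ref{thm:nonexp}(1), $h$ is positive-circular, producing a positive-circular subnetwork of $f$ and yielding the desired contradiction.

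Part (2) is entirely analogous with negative cycles in place of positive ones. In one direction, a negative-circular subnetwork $h$ of $f$ is odd-self-dual by Theorem~\ref{thm:circular-eosd} and hence has no fixed point, contradicting the existence-of-fixed-point hypothesis on every subnetwork. In the other direction, if some subnetwork of $f$ has no fixed point, take a subnetwork $h$ of $f$ of minimum dimension with no fixed point; then $h$ is 0-critical and non-expansive, so Theorem~\ref{thm:nonexp}(2) furnishes a negative-circular subnetwork of $f$, contradicting the hypothesis.

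The main obstacle is essentially absent at this stage, because all the substance has been absorbed into Theorem~\ref{thm:nonexp}; the only thing that remains is the elementary ``minimum counterexample'' step giving existence of a 2-critical or 0-critical subnetwork, which the text flags explicitly just before stating the corollary. The deduction therefore reduces to writing out both contrapositives cleanly.
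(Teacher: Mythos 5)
Your proposal is correct and follows essentially the same route as the paper: the author likewise deduces the corollary from Theorem~\ref{thm:nonexp} together with the observation that a network with multiple fixed points (resp.\ no fixed point) has a 2-critical (resp.\ 0-critical) subnetwork, obtained exactly by your minimum-dimension argument. The only cosmetic difference is that for the easy direction you invoke Theorem~\ref{thm:circular-eosd} (even-/odd-self-duality of circular networks) where one could equally cite the forward implication of Theorem~\ref{thm:nonexp} itself; both are already established and interchangeable here.
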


It is easy to see that, if the maximal in-degree of the global interaction graph $G(f)$ of a network $f$ is at most one, then $Gf(x)=G(f)$ for all $x\in\B^V$. Thus, in particular, if $f$ is circular then $Gf(x)=G(f)$ for all $x\in\B^V$. Proceeding as in Section~\ref{sec:shih} with this property instead of  Proposition~\ref{odd}, we obtain the following corollary. Note that the second point generalizes Theorem~\ref{thm:non-exp}. 

\begin{corollary} Suppose that $f$ is non-expansive.
\begin{enumerate}
\item 
If, for every $1\leq k\leq |V|$, there exists at most $2^k-1$ points $x$ such that $Gf(x)$ has a chordless positive cycle of length $k$, then $f$ has at most one fixed points. 
\item 
If, for every $1\leq k\leq |V|$, there exists at most $2^k-1$ points $x$ such that $Gf(x)$ has a chordless negative cycle of length $k$, then $f$ has at least one fixed points. 
\end{enumerate}
\end{corollary}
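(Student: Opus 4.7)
The plan is to prove both statements by induction on $|V|$, exactly in the spirit of the proof of Corollary~\ref{cor:shih}, but using the preceding dichotomization corollary in place of the uniqueness corollary. More precisely, I will actually prove the stronger statement that \emph{every subnetwork of $f$} has at most (resp.\ at least) one fixed point, by showing that $f$ has no positive-circular (resp.\ negative-circular) subnetwork; the conclusion then follows from the preceding corollary applied to $f$ itself.

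For the induction step, the first task is to verify that the counting hypothesis passes to every immediate subnetwork. Fix $i\in V$, $\alpha\in\B$ and set $h=f^{i\alpha}$. By Proposition~\ref{pro:subgraph}, for every $y\in\B^{V\setminus\{i\}}$ the graph $Gh(y)$ is an \emph{induced} subgraph of $Gf(x)$, where $x$ is the point with $x_i=\alpha$ and $x_{-i}=y$. Consequently, any chordless positive (resp.\ negative) cycle $C$ of length $k$ in $Gh(y)$, being an induced subgraph of $|Gh(y)|$, is also an induced subgraph of $|Gf(x)|$, hence a chordless cycle of length $k$ and of the same sign in $Gf(x)$. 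Distinct $y$'s extend to distinct $x$'s (all with $i$-coordinate $\alpha$), so the bound $2^k-1$ on the number of points $x\in\B^V$ carrying such a cycle transfers unchanged to the number of points $y\in\B^{V\setminus\{i\}}$ carrying such a cycle in $Gh(y)$. Since non-expansiveness also passes to $h$, the induction hypothesis applies and gives, via the preceding corollary, that $h$ has no positive-circular (resp.\ negative-circular) subnetwork.

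Thus $f$ has no such \emph{strict} subnetwork, and it only remains to rule out that $f$ itself is positive-circular (resp.\ negative-circular). If it were, then $G(f)$ would be a chordless positive (resp.\ negative) cycle of length $n=|V|$; since the interaction graph of a circular network has maximum in-degree one, the remark preceding the corollary gives $Gf(x)=G(f)$ for all $x\in\B^V$. But then all $2^n$ points $x\in\B^V$ would carry a chordless positive (resp.\ negative) cycle of length $n$ in $Gf(x)$, contradicting the hypothesis (which permits at most $2^n-1$ for $k=n$). Hence $f$ is not circular of the forbidden sign, so $f$ has no positive-circular (resp.\ negative-circular) subnetwork at all, and the preceding corollary yields the desired conclusion.

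The main obstacle is purely a bookkeeping one: one must check carefully that the \emph{chordless} character of a cycle is preserved when one goes from $Gh(y)$ to $Gf(x)$, which rests on the fact that an induced subgraph of an induced subgraph is still an induced subgraph. Beyond that, no new idea is needed; the argument is the direct ``dichotomized'' transposition of the proof of Corollary~\ref{cor:shih}, replacing the role of Proposition~\ref{odd} (every even- or odd-self-dual network has a cycle in each $Gf(x)$) by the stronger structural fact that a circular network satisfies $Gf(x)=G(f)$ uniformly in $x$.
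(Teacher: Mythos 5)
Your proof is correct and follows essentially the same route the paper indicates: the induction scheme of Corollary~\ref{cor:shih}, with the counting hypothesis transferred to immediate subnetworks via Proposition~\ref{pro:subgraph} (chordlessness and sign being preserved under taking induced subgraphs), and with the fact that a circular network satisfies $Gf(x)=G(f)$ for all $x$ replacing Proposition~\ref{odd} to rule out $f$ itself being circular. The final appeal to the dichotomized corollary for non-expansive networks is exactly what the paper intends.
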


\section{Conjonctive networks}\label{sec:and}

A network $f$ on $V$ is an {\bf and-net} (or {\bf conjunctive network}) if $G(f)$ is simple and if, for every $i\in V$, $f_i$ is the conjunction of the positive and negative inputs of $i$ in $G(f)$, that is: For all $x\in \B^V$, $f_i(x)=1$ if and only if $G(f)$ has no positive arc $j\to i$ with $x_j=0$ and no negative arc $j\to i$ with $x_j=1$. Note that every subnetwork of an and-net is an and-net. Note also that for the class of and-nets, $f$ and $G(f)$ share the same informations.

\medskip
In this section, we first prove that every 2-critical and-net is positive circular (but we were not able to prove that every 0-critical and-net is negative circular). Then, we show that, for and-nets, the presence of even-self-dual (resp. odd-self-dual) subnetworks can be checked in a very simple way by looking at the chordless positive (resp. negative) cycles of $G(f)$.

\begin{proposition}\label{pro:and1} 
~
\begin{enumerate}
\item
$f$ if positive-circular if and only if $f$ is an even-self-dual and-net.
\item
$f$ if negative-circular if and only if $f$ is an odd-self-dual and-net.
\end{enumerate}
\end{proposition}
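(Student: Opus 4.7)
The plan is to prove both equivalences in two directions. For the easy direction, suppose $f$ is positive-circular (resp.\ negative-circular). Then $G(f)$ is a positive (resp.\ negative) cycle, so each vertex $i$ has exactly one input $\sigma(i)$, and $f_i(x) = x_{\sigma(i)} \oplus s_i$ equals either $x_{\sigma(i)}$ or $\overline{x_{\sigma(i)}}$ according to the sign of the arc. A single-literal conjunction is just that literal, so $f$ is an and-net, and by Theorem~\ref{thm:circular-eosd} it is even-self-dual (resp.\ odd-self-dual).

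For the converse, suppose $f$ is an even- or odd-self-dual and-net. The first step is to show that every vertex $i$ of $G(f)$ has in-degree exactly $1$. An isolated $i$ would force $f_i \equiv 1$ (the empty conjunction), violating self-duality. If $i$ has at least two inputs, a short case analysis (two positive inputs, two negative inputs, or one of each) produces some $x \in \B^V$ such that the conjunction defining $f_i$ is false at both $x$ and $x \oplus 1$, again contradicting self-duality. Hence each $i$ has a unique input $\sigma(i)$, and one writes $f_i(x) = x_{\sigma(i)} \oplus s_i$ for some $\sigma : V \to V$ and $s \in \B^V$.

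The second step shows that $\sigma$ is a permutation. If some $j$ has out-degree $0$ in $G(f)$, then no $f_i$ depends on $x_j$, so $f(x) = f(x \oplus e_j)$, which gives $\f(x \oplus e_j) = \f(x) \oplus e_j$ for every $x$; the image of $\f$ would then contain points of both parities, contradicting that $f$ is even or odd. Since $\sum_j |\sigma^{-1}(j)| = |V|$ with each summand at least $1$, $\sigma$ is a permutation, so $G(f)$ decomposes into $p \geq 1$ disjoint signed cycles and $f$ splits into a product of circular subnetworks $f^{(1)}, \dots, f^{(p)}$ on the vertex blocks $V_1, \dots, V_p$. By Theorem~\ref{thm:circular-eosd}, each $\f^{(k)}$ has image of size $2^{|V_k|-1}$, so the image of $\f$ has size $2^{|V|-p}$. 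Since $f$ is even or odd, this image has size $2^{|V|-1}$, forcing $p = 1$. The unique cycle is positive in the even-self-dual case (since an even network has a fixed point, which a negative cycle would forbid) and negative in the odd-self-dual case.

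The main obstacle is the in-degree-one argument: one has to verify by case analysis that any and-net whose component $f_i$ depends on more than one variable cannot satisfy the pointwise identity $f_i(x \oplus 1) = f_i(x) \oplus 1$. Once this is established, the out-degree argument reduces to a one-line parity observation about $\f$, and a direct size count on the product of factor images forces $p = 1$.
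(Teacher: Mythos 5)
Your proof is correct and follows the same overall skeleton as the paper's (forward direction via Theorem~\ref{thm:circular-eosd} plus the trivial observation that in-degree one makes $f$ an and-net; converse by forcing every vertex of $G(f)$ to have in- and out-degree one, hence a disjoint union of cycles, and then ruling out more than one cycle). Two of the sub-steps are argued differently, though, and both of your alternatives are valid. For out-degree $\geq 1$, the paper invokes Proposition~\ref{odd} (every vertex of $Gf(x)$ has odd out-degree when $f$ is even or odd), whereas you argue directly that a vertex $j$ of out-degree $0$ gives $\tilde f(x\oplus e_j)=\tilde f(x)\oplus e_j$, putting points of both parities in the image of $\tilde f$ — a nice one-line parity argument. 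For the reduction to a single cycle, the paper notes that flipping all coordinates of one cycle's vertex set $I$ leaves $\tilde f$ unchanged and then uses the two-element-preimage characterization $\tilde f^{-1}(z)=\{x,x\oplus 1\}$ to force $I=V$; you instead decompose $f$ as a product of $p$ circular factors and count: the image of $\tilde f$ has size $2^{|V|-p}$ but must have size $2^{|V|-1}$, so $p=1$. Your count leans on the forward (easy) direction of Theorem~\ref{thm:circular-eosd} to get the factor image sizes, which is harmless; the paper deliberately keeps its converse argument independent of that theorem (see its footnote), and its preimage argument is marginally more self-contained, while your image-size count is arguably more transparent. The sign determination at the end is the same in both.
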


\begin{proof}
Suppose that $f$ is positive-circular (negative-circular). Then, by Theorem~\ref{thm:circular-eosd}, $f$ is even-self-dual (resp. odd-self-dual), and since each vertex $i\in V$ has exactly one in-neighbor in $G(f)$, $f$ is an and-net. 

\medskip
Suppose that $f$ is an even- or odd-self-dual and-net. Let $i,j,k\in V$, and assume that $j$ and $k$ are distinct in-neighbor of $i$ in $G(f)$. Let $x\in\B^V$ be such that $f_i(x)=1$. Then $f_i(y)=0$ for every $y$ such that $y_j\neq x_j$ or $y_k\neq x_k$. So $f_i(x\o e_j)=f_i(x \o e_j\o 1)=0$, so $f$ is not self-dual, a contradiction. Since $f_i$ is not a constant, we deduce each vertex of $G(f)$ is of in-degree one. According to Proposition~\ref{odd}, each vertex of $G(f)$ is of out-degree at least one. Since the sum of the in-degrees equals the sum of the out-degrees, we deduce that each vertex of $G(f)$ is of in-degree one and out-degree one~\footnote{If each vertex of $G(f)$ is of out-degree one, then $f$ is non-expansive, and we can conclude by applying Theorem~\ref{thm:circular-eosd}. However, we give here the few additional arguments that makes the proof independent of Theorem~\ref{thm:circular-eosd}.}. In other words, $G(f)$ is a disjoint union of cycles. Let $C$ be a cycle of $G(f)$ with vertex set $I$. Then, for all $x\in\B^V$, 
\[ 
f_i(x\o e_I)= f_i(x)\o 1\qquad \forall i\in I,
\]
and since $G(f)$ has no arc from $I$ to $V\setminus I$, we deduce that 
\[
f_i(x\o e_I)=f_i(x)\qquad \forall i\in V\setminus I,
\]
So $f(x\o {e_I})=f(x)\o e_I$, and thus:
\[
\f(x\o e_I)=(x\o e_I)\o (f(x)\o e_I)=x\o f(x)=\f(x).
\]
and since $f$ is even- or odd-self-dual, we deduce that $x\o e_I=x\o 1$, that is $I=V$. So $G(f)$ is a cycle, which is positive if $f$ is even, and negative otherwise.
\end{proof}

Using this proposition and Corollary~\ref{cor2} we obtain the following characterization. 

\begin{corollary}\label{cor:and1} 
If $f$ is an and-net, then each subnetwork of $f$ has a unique fixed point if and only if $f$ has no circular subnetworks.
\end{corollary}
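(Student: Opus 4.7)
The plan is to derive this as an immediate combination of Corollary~\ref{cor2} and Proposition~\ref{pro:and1}, using the crucial closure observation, stated at the beginning of the section, that every subnetwork of an and-net is itself an and-net.

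First I would recall that, by Corollary~\ref{cor2}, for any network $f$ the statement ``each subnetwork of $f$ has a unique fixed point'' is equivalent to ``$f$ has no even- or odd-self-dual subnetwork''. So the task reduces to showing that, when $f$ is an and-net, the class of even- or odd-self-dual subnetworks of $f$ coincides exactly with the class of circular subnetworks of $f$.

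For this equivalence, let $h$ be any subnetwork of $f$. Since every subnetwork of an and-net is an and-net, $h$ is itself an and-net. Now apply Proposition~\ref{pro:and1}: $h$ is even-self-dual iff $h$ is a positive-circular and-net iff $h$ is positive-circular, and similarly $h$ is odd-self-dual iff $h$ is negative-circular. Hence $h$ is even- or odd-self-dual iff $h$ is circular. Quantifying over all subnetworks $h$ of $f$ gives exactly the equivalence ``$f$ has no even- or odd-self-dual subnetwork iff $f$ has no circular subnetwork'', which combined with Corollary~\ref{cor2} yields the statement.

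There is essentially no obstacle here: all the conceptual work has been done in establishing Proposition~\ref{pro:and1} (which itself leans on Theorem~\ref{thm:circular-eosd}) and in Corollary~\ref{cor2} (which is the forbidden-subnetwork version of the main theorem). The only point that needs to be mentioned explicitly is the fact that the class of and-nets is closed under taking subnetworks, so that Proposition~\ref{pro:and1} can be applied uniformly to every subnetwork of $f$, not merely to $f$ itself.
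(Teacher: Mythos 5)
Your proposal is correct and is exactly the derivation the paper intends: it introduces the corollary with ``Using this proposition [Proposition~\ref{pro:and1}] and Corollary~\ref{cor2} we obtain the following characterization,'' which is precisely your combination, including the (implicit in the paper, explicit in your write-up) use of the fact that subnetworks of and-nets are and-nets. Nothing is missing.
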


We will now show that the ``unicity part'' of this characterization can be obtained under the absence of positive-circular subnetwork.  

\begin{theorem}\label{thm:and2}
$f$ is positive-circular if and only if $f$ is a 2-critical and-net.  
\end{theorem}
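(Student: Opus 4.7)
I plan to deduce both directions from earlier results in the paper. The forward direction is immediate: if $f$ is positive-circular, then direction ``$\Rightarrow$'' of Theorem~\ref{thm:nonexp}(1) gives that $f$ is 2-critical, and direction ``$\Rightarrow$'' of Proposition~\ref{pro:and1}(1) gives that $f$ is an and-net.

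For the converse, assume $f$ is a 2-critical and-net; by the remark following the definition of 2-criticality, $f$ has exactly two fixed points, of the form $x$ and $x\o 1$. I would first establish (Step 1) a fixed-point structure for and-nets: if $y$ is a fixed point of an and-net and $i$ has in-degree $\geq 1$ in $G(f)$, then every in-signal at $i$ equals $y_i$; encoding each arc sign by $s_{ij}\in\{0,1\}$ ($0$ for positive, $1$ for negative) this reads $y_i\o y_j=s_{ij}$ for every arc $j\to i$. This follows by unpacking the and-net definition: at a vertex with $y_i=1$, all in-signals at $y$ must be $1$; at a vertex with $y_i=0$, the complementary fixed point $y\o 1$ (whose value at $i$ is $1$) forces all in-signals at $y$ to be $0$. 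A byproduct is that no vertex of $G(f)$ has in-degree $0$, since $f_i\equiv 1$ at such a vertex would force $x_i=(x\o 1)_i=1$.

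I would then (Step 2) apply this identity along any directed cycle $i_1\to i_2\to\cdots\to i_d\to i_1$ of $G(f)$: summing $x_{i_{k+1}}\o x_{i_k}=s_{i_ki_{k+1}}$ over $k$ modulo two, the left-hand side telescopes to $0$, so the cycle contains an even number of negative arcs and is positive. Hence $G(f)$ has no directed negative cycle.

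Finally (Step 3) I would invoke Corollary~\ref{cor:and1}: since $f$ is an and-net that does not have a unique fixed point, $f$ has a circular subnetwork $h$. By Proposition~\ref{pro:subgraph} $G(h)\subseteq G(f)$, and since $G(h)$ is a directed cycle, Step 2 forces it to be positive, so $h$ is positive-circular with two fixed points. If $h$ were a strict subnetwork of $f$, 2-criticality would force $h$ to have at most one fixed point, a contradiction; hence $h=f$ and $f$ is positive-circular. The only step that is not already packaged in earlier results is Step 1, a short direct unfolding of the and-net definition; Steps 2 and 3 are routine applications.
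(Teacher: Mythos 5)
Your proposal is correct and follows essentially the same route as the paper: both arguments show that the two antipodal fixed points of a 2-critical and-net force every cycle of $G(f)$ to be positive (your per-arc identity $x_i\o x_j=s_{ij}$ telescoped around a cycle is just a direct version of the paper's inductive claim that the sign of a walk equals $x_{i_1}\o x_{i_{l+1}}$), and then both invoke the existence of a circular subnetwork (you via Corollary~\ref{cor:and1}, the paper via Theorem~\ref{main} and Proposition~\ref{pro:and1}, which is the same content) to conclude that this subnetwork is positive-circular and, by 2-criticality, equal to $f$. No gaps.
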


\begin{proof}
If $P$ is a sequence of signed arcs of $G(f)$, we set $s(P)=0$ if $P$ has an even number of negative arcs, and $s(P)=1$ if $P$ has an odd number of negative arcs. We first prove the following two properties (which may be of independent interest). 
\begin{enumerate}
\item[{\bf(1)}]
Suppose that $f$ is an and-net. Suppose also that there exists $x\in\B^V$ such that $f(x)=x$ and $f(x\o 1)=x\o 1$. Let  
\[
P=(i_1,s_1,i_2),(i_2,s_2,i_3),\dots,(i_{l-1},s_{l-1},i_{l}),(i_l,s_l,i_{l+1})
\]
be a sequence of arcs of $G(f)$. Then $s(P)=x_{i_1}\o x_{i_{l+1}}$. 
\end{enumerate}
We proceed by induction of the length $l$ of the sequence. 
\begin{enumerate}
\item
Suppose that $l=1$, that is $P=(i_1,s_1,i_2)$. If $s_1=1$, then the arc from $i_1$ to $i_2$ is positive and: If  $x_{i_1}=0$ then $f_{i_2}(x)=0=x_{i_2}$; and if $x_{i_1}=1$, then $f_{i_2}(x\o 1)=0=x_{i_2}\o 1$ thus $x_{i_2}=1$. Hence, in both cases, $x_{i_1}\o x_{i_2}=0=s(P)$. If $s_1=-1$, then the arc from $i_1$ to $i_2$ is negative so: If $x_{i_1}=1$ then $f_{i_2}(x)=0=x_{i_2}$; and if $x_{i_1}=0$, then $f_{i_2}(x\o 1)=0=x_{i_2}\o 1$ thus $x_{i_2}=1$. Hence, in both cases, $x_{i_1}\o x_{i_2}=1=s(P)$. This prove the base case.
\item
Suppose that $l>1$. Then $P$ can be expressed as the concatenation $P=QQ'$ of two subsequences $Q$ and $Q'$, both of length at most $l-1$. If $q$ is the length of $Q$, then, by induction hypothesis, $s(Q)=x_{i_1}\o x_{i_{q+1}}$ and  $s(Q')=x_{i_{q+1}}\o x_{i_{l+1}}$ thus $s(P)=s(Q)\o s(Q')=x_{i_1}\o x_{i_{q+1}}\o x_{i_{q+1}}\o x_{i_{l+1}}=x_{i_1}\o x_{i_{l+1}}$. This proves {\bf(1)}.
\end{enumerate}
\begin{enumerate}
\item[{\bf(2)}]
Suppose that $f$ is an and-net. If there exists $x\in\B^V$ such that $f(x)=x$ and $f(x\o 1)=x\o 1$ then $G(f)$ has no negative cycle. 
\end{enumerate}
If $C$ is a cycle of $G(f)$ go length $l$, and if $P=(i_1,s_1,i_2),(i_2,s_2,i_3),\dots,(i_l,s_l,i_1)$ are the arcs of $C$ given in the order, then following {\bf(1)}, $s(P)=x_{i_1}\o x_{i_1}=0$, thus $C$ has an even number of negative arcs, {\em i.e.} $C$ is positive. This proves~{\bf(2)}.

\medskip
We are now in position to prove the theorem. By Theorem~\ref{thm:nonexp}, every positive-circular network is 2-critical, and it is obvious that positive-circular networks are and-nets. So assume that $f$ is a 2-critical and-net. By theorem~\ref{main} and Proposition~\ref{pro:and1}, $f$ has a positive- or negative-circular subnetwork $h$. Following {\bf(2)}, $h$ cannot be negative-circular. Thus $h$ is positive-circular. Thus $h$ has two fixed points, and since $f$ is 2-critical, $h=f$.\end{proof} 

As a consequence of this theorem and the fact that a network with multiple fixed points has a 2-critical subnetwork, we obtain the following characterization. 

\begin{corollary}\label{cor:and2} 
If $f$ is an and-net, then each subnetwork of $f$ has at most one fixed point if and only if $f$ has no positive-circular subnetworks.
\end{corollary}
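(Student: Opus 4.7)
The plan is to deduce this corollary almost immediately from the theorem that was just established (every 2-critical and-net is positive-circular), together with two structural observations already available in the paper.

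For the forward direction, I would argue by contraposition: suppose $f$ has a positive-circular subnetwork $h$. By Theorem~\ref{thm:circular-eosd}, $h$ is even-self-dual, and in the section on even/odd-self-dual networks it was observed that every even-self-dual network has exactly two fixed points. Thus $h$ is a subnetwork of $f$ with two fixed points, contradicting the hypothesis that every subnetwork of $f$ has at most one fixed point.

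For the backward direction, suppose some subnetwork $h$ of $f$ has at least two fixed points. Then $\Gamma(h)$ has at least two (punctual) attractors, and by the first item of Proposition~\ref{pro:critical}, $h$ contains a 2-critical subnetwork $g$. Since the subnetwork relation is transitive, $g$ is a subnetwork of $f$; and since the class of and-nets is closed under taking subnetworks (as noted at the beginning of Section~\ref{sec:and}), $g$ is itself an and-net. Applying the theorem just proved (2-critical and-net $\Rightarrow$ positive-circular) to $g$, we conclude that $g$ is a positive-circular subnetwork of $f$, contradicting the hypothesis.

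There is essentially no obstacle here: all the real work went into the preceding theorem characterizing 2-critical and-nets. The only things to check are the easy observations that positive-circular networks have two fixed points (so they cannot coexist with ``at most one fixed point everywhere''), that and-net is preserved under subnetworks, and that any network with multiple fixed points contains a 2-critical subnetwork. Because each of these facts is already in hand, the corollary reduces to stringing them together, and the proof can be made very short.
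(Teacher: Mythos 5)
Your proposal is correct and follows essentially the same route as the paper: the paper derives this corollary directly from the theorem that 2-critical and-nets are positive-circular, combined with the fact that a network with multiple fixed points has a 2-critical subnetwork, exactly as you do. Your forward direction (positive-circular implies even-self-dual implies exactly two fixed points) and your closure observations (and-nets under subnetworks, transitivity of the subnetwork relation) are the same easy checks the paper leaves implicit.
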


Using again the fact that if $f$ is circular then $Gf(x)=G(f)$ for all $x\in\B^V$, we obtain: 

\begin{corollary} Suppose that $f$ is an and-net. 
If, for every $1\leq k\leq |V|$ there exists at most $2^k-1$ points $x$ such that $Gf(x)$ has a chordless positive cycle of length $k$, then $f$ has at most one fixed points. 
\end{corollary}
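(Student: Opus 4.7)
The plan is to mimic the argument behind Corollary~\ref{cor:shih}: I would show by induction on $|V|$ that, under the stated hypothesis, $f$ has no positive-circular subnetwork, and then invoke Corollary~\ref{cor:and2} to conclude that every subnetwork of $f$ (and $f$ in particular) has at most one fixed point. The base case $|V|=1$ is immediate, since the unique positive-circular and-net on a single vertex is $f_1(x)=x_1$, whose local interaction graph at both $x=0$ and $x=1$ is the positive self-loop, giving $2>2^1-1$ points carrying a chordless positive cycle of length $1$.

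For the inductive step, I would fix $i\in V$ and $\alpha\in\B$ and argue that $f^{i\alpha}$ inherits the hypothesis on $V\setminus\{i\}$. Since every subnetwork of an and-net is an and-net, $f^{i\alpha}$ is an and-net, and by Proposition~\ref{pro:subgraph} the graph $Gf^{i\alpha}(y)$ is an \emph{induced} subgraph of $Gf(y^{i\alpha})$ for every $y\in\B^{V\setminus\{i\}}$. Because of the ``induced'' part, a cycle on vertices of $V\setminus\{i\}$ is chordless in $Gf^{i\alpha}(y)$ if and only if it is chordless in $Gf(y^{i\alpha})$. Since the map $y\mapsto y^{i\alpha}$ is injective, this shows that the number of $y$ for which $Gf^{i\alpha}(y)$ has a chordless positive cycle of length $k$ is bounded above by the number of $x\in\B^V$ for which $Gf(x)$ has one, i.e.~by $2^k-1$ for every $1\leq k\leq |V|-1$. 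So the inductive hypothesis applies and $f^{i\alpha}$ has no positive-circular subnetwork; in particular $f$ has no positive-circular \emph{strict} subnetwork.

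It would then remain to rule out that $f$ itself is positive-circular. If it were, $G(f)$ would be a positive cycle of length $|V|$ in which every vertex has in-degree one, and by the observation just before the corollary ($Gf(x)=G(f)$ whenever the maximum in-degree of $G(f)$ is at most one) we would have $Gf(x)=G(f)$ for every $x\in\B^V$. This would yield $2^{|V|}$ points carrying a chordless positive cycle of length $|V|$, contradicting the hypothesis at $k=|V|$. Corollary~\ref{cor:and2} then delivers the conclusion. The only delicate step I foresee is the transfer of chordlessness between $Gf$ and $Gf^{i\alpha}$; everything else is routine once the ``induced subgraph'' part of Proposition~\ref{pro:subgraph} is exploited, which is why I expect this single observation to be the only real subtlety of the argument.
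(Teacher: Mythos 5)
Your proof is correct and follows exactly the route the paper intends (the paper only sketches it by saying ``proceed as in Section~\ref{sec:shih}, using that $Gf(x)=G(f)$ for circular networks''): induction on $|V|$ to exclude positive-circular subnetworks, transfer of the hypothesis to immediate subnetworks via the induced-subgraph part of Proposition~\ref{pro:subgraph}, the contradiction at $k=|V|$ if $f$ itself were positive-circular, and then Corollary~\ref{cor:and2}. Your identification of the chordlessness transfer as the one point requiring the ``induced'' clause is exactly right and is handled correctly.
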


\begin{rem}
In view of Theorems~\ref{thm:nonexp} and \ref{thm:and2}, it is tempting to think that {\em every 0-critical and-net is negative-circular}. But this is false, as showed below. For all $n\geq 4$, let $G_n$ be the digraph with vertex set $V=\{0,1,\dots,n-1\}$ and such that for all $u\in V$ and $k\in\{1,\pm2,\pm3,\dots,\pm\lfloor\frac{1}{2}n\rfloor\}$ there is an arc from $u$ to $u+k~(\mathrm{mod}~n)$. In \cite{GN86}, it is proved that $G_n$ is {\em kernel-critical}: $G_n$ has no kernel and every strict induced subdigraph has a kernel. Using the correspondence between kernels in digraphs and fixed points in and-nets established in \cite{RR2012}, we easily deduce that: {\em For all $n\geq 4$, the and-net $f$ such that $|G(f)|=G_n$ and such that $G(f)$ has only negative arc is a  non-circular $0$-critical and-net.} 
\end{rem}

Now, we show how to check if an and-net has or not a circular subnetworks by looking at the chordless cycles of $G(f)$. For that, additional definitions are needed. Let $G$ be a simple interaction graph with vertex set $V$, and let $C$ be a cycle in it. A vertex $v\in V$ is a {\bf delocalizing vertex} of $C$ if $G$ has both a positive and a negative arcs from $v$ to distinct vertices of $C$ ($v$ can be a vertex of $C$; in such a case the cycle has two chords of opposite sign starting from $v$).

\begin{proposition}[Richard and Ruet \cite{RR2012}]\label{pro:and3}
Suppose that $f$ is an and-net. There exists $x\in\B^V$ such that $Gf(x)$ has a cycle $C$ if and only $C$ is a cycle of $G(f)$ that has no delocalizing vertex in $G(f)$. 
\end{proposition}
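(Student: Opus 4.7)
The plan is to translate the condition ``$C\subseteq Gf(x)$'' into a system of literal constraints on the coordinates of $x$, and then show that this system is satisfiable if and only if $C$ has no delocalizing vertex in $G(f)$.

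First I would record a basic computation. Since $f$ is an and-net, $f_i(x)=\bigwedge_{k\to i}l_k(x)$, where $l_k(x)=x_k$ if $(k,+1,i)\in G(f)$ and $l_k(x)=\overline{x_k}$ if $(k,-1,i)\in G(f)$. Flipping the $j$-th coordinate changes only $l_j$, so $f_i(x^{j0})\neq f_i(x^{j1})$ if and only if the ``background'' $\bigwedge_{k\to i,\, k\neq j}l_k(x)$ equals $1$. Consequently, for every arc $(j,s,i)$ of $G(f)$, that arc lies in $Gf(x)$ (with the same sign) exactly when $l_k(x)=1$ for every in-neighbor $k\neq j$ of $i$ in $G(f)$.

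Now write $C=(i_1,s_1,i_2),(i_2,s_2,i_3),\dots,(i_l,s_l,i_1)$. By the previous characterization, $C\subseteq Gf(x)$ if and only if for every index $k$ and every arc $(v,s,i_{k+1})$ of $G(f)$ with $v\neq i_k$ one has $x_v=1$ when $s=+1$ and $x_v=0$ when $s=-1$. Observe that the restriction ``$v\neq i_k$'' discards precisely the cycle arc out of $v$ whenever $v\in C$; hence for $v\in C$ only chords from $v$ to vertices of $C$ impose constraints on $x_v$, while for $v\notin C$ every arc from $v$ to a vertex of $C$ imposes one.

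For the direction $(\Rightarrow)$ I would argue by contrapositive: if $v$ is a delocalizing vertex, the two arcs required by the definition—both chords when $v\in C$, as per the parenthetical—yield the contradictory constraints $x_v=1$ and $x_v=0$, so no $x\in\B^V$ realizes $C$ inside $Gf(x)$. For the direction $(\Leftarrow)$, in the absence of a delocalizing vertex, the constraints on each fixed coordinate $x_v$ all demand the same bit (or there are none), so one can define $x$ coordinate-wise, setting $x_v$ to this common bit when it exists and arbitrarily otherwise; this $x$ makes every background simultaneously true, hence $C\subseteq Gf(x)$.

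The main obstacle is keeping the case $v\in C$ clean: one has to observe that the very arc being tested, $(i_k,s_k,i_{k+1})$, is the cycle arc out of $i_k$ and therefore creates no constraint on $x_{i_k}$. This is exactly why only chords out of a cycle vertex can play the role of the conflicting pair, which is what the definition of delocalizing vertex insists upon. Once this bookkeeping is settled, both directions reduce to the routine verification that conjunctive local transitions translate into independent literal constraints on distinct coordinates.
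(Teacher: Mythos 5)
Your proof is correct. Note that the paper itself does not prove Proposition~\ref{pro:and3}; it is imported from Richard and Ruet \cite{RR2012}, so there is no in-paper argument to compare against. Your reduction is the natural one: since $f_i$ is a conjunction of literals, the arc $(j,s,i)$ lies in $Gf(x)$ exactly when all the other input literals of $i$ evaluate to $1$ at $x$, so realizing $C$ in $Gf(x)$ becomes a system of unit literal constraints, one per non-cycle arc into a vertex of $C$, and simplicity of $G(f)$ guarantees that conflicting constraints on a single coordinate $x_v$ correspond exactly to a positive and a negative arc from $v$ to distinct vertices of $C$ (chords only, when $v$ lies on $C$). You correctly identified the one delicate point, namely that the cycle arc out of a vertex of $C$ imposes no constraint on that vertex's coordinate, which is precisely why the definition of delocalizing vertex must be read (as the parenthetical indicates) as requiring two chords of opposite sign when $v\in C$; with that reading both directions follow as you describe.
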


\begin{proposition}[Remy and Ruet \cite{RR08}]\label{pro:and4}
Let $f$ be a network on $V$. Let $x\in\B^V$, and suppose that $Gf(x)$ has a cycle $C$ with vertex set $I$. If $C$ has no chord in $G(f)$, then the subnetwork of $f$ induced by $x_{-I}$ is a circular network with interaction graph $C$. 
\end{proposition}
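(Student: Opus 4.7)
The plan is to prove, component by component, that the subnetwork $h$ of $f$ induced by $x_{-I}$ has interaction graph $G(h)$ exactly equal to $C$. Write the directed cycle $|C|$ as $i_1\to i_2\to\cdots\to i_n\to i_1$ (indices mod $n$) and let $s_p\in\{-1,+1\}$ denote the sign of the arc from $i_p$ to $i_{p+1}$ in $C$. For each $p$ the goal is to determine exactly which variables the local function $h_{i_{p+1}}:\B^I\to\B$ depends on, and with what sign.

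First I would argue that for every $j\in I$ with $j\neq i_p$, the function $h_{i_{p+1}}$ does not depend on $y_j$. Indeed, Proposition~\ref{pro:subgraph} gives $G(h)\subseteq G(f)$, and since every arc of $G(h)$ sits on vertices of $I$, it suffices to observe that the chordless hypothesis makes $|G(f)|$ restricted to $I$ coincide exactly with $|C|$: any arc from a $j\in I$ to $i_{p+1}$ in $G(f)$ would be a chord of $C$ unless $j=i_p$. Second, I would show that $h_{i_{p+1}}$ really does depend on $y_{i_p}$ with the expected sign: the identity $h_{kj}(x|_I)=f_{kj}(x)$ for all $j,k\in I$ (established in the proof of Proposition~\ref{pro:subgraph}) applied to the discrete derivative witnessed by the cycle $C\subseteq Gf(x)$ gives $h_{i_{p+1},i_p}(x|_I)=f_{i_{p+1},i_p}(x)=s_p\neq 0$.

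Combining the two points, $h_{i_{p+1}}$ is a Boolean function of the single variable $y_{i_p}$ that actually depends on that variable, hence is either the identity or the negation. The sign of the discrete derivative computed above forces $h_{i_{p+1}}(y)=y_{i_p}$ when $s_p=+1$ and $h_{i_{p+1}}(y)=y_{i_p}\o 1$ when $s_p=-1$. Doing this for every $p$ shows that $h$ is precisely the circular network on $I$ with permutation $i_{p+1}\mapsto i_p$ and constant vector encoding the signs $s_p$, and that $G(h)=C$, as required.

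The only subtle point is the sign-consistency step at the very end: a priori, $G(f)$ could carry both a positive and a negative arc from $i_p$ to $i_{p+1}$, since the chordless assumption is purely structural (it controls $|G(f)|$, not the signs). What rescues the argument is that, after fixing $x_{-I}$, the restricted function $h_{i_{p+1}}$ sees only one variable, so it necessarily carries a single well-defined sign, which the discrete derivative identity pins down to $s_p$. Everything else is a direct application of Proposition~\ref{pro:subgraph}.
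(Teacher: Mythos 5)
Your argument is correct. Note that the paper does not actually prove Proposition~\ref{pro:and4} --- it imports it from Remy and Ruet \cite{RR08} without proof --- so there is no in-paper argument to compare against; your proof is a valid self-contained verification. The two key steps are both sound: chordlessness plus $G(h)\subseteq G(f)$ (Proposition~\ref{pro:subgraph}) confines every in-neighbour of $i_{p+1}$ in $G(h)$ to $i_p$, and the identity $h_{i_{p+1},i_p}(x|_I)=f_{i_{p+1},i_p}(x)=s_p\neq 0$ forces actual dependence; since a non-constant Boolean function of one variable is the identity or its negation, the sign is pinned down and $G(h)=C$ exactly. You are also right that the only delicate point is that chordlessness constrains $|G(f)|$ rather than the signed graph, and your one-variable argument disposes of it correctly.
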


\begin{proposition}\label{pro:and5}
Suppose that $f$ is an and-net. Then $f$ has a circular subnetwork with interaction graph $C$ if and only if $C$ is a cycle of $G(f)$ that has no chord and no delocalizing vertex in $G(f)$.
\end{proposition}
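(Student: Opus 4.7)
The plan is to prove both directions, combining Propositions~\ref{pro:and3} and~\ref{pro:and4} for the backward implication and performing a short case analysis of conjunctive subnetworks for the forward one.

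For the $\Leftarrow$ direction, suppose $C$ is a cycle of $G(f)$ with vertex set $I$ having no chord and no delocalizing vertex in $G(f)$. Proposition~\ref{pro:and3} then produces a point $x\in\B^V$ such that $C$ is a cycle of $Gf(x)$. Since $C$ has no chord in $G(f)$, Proposition~\ref{pro:and4} applies to this $x$ and yields that the subnetwork of $f$ induced by $x_{-I}$ is a circular network with interaction graph $C$, which is the desired circular subnetwork of $f$.

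For the $\Rightarrow$ direction, suppose $h$ is a circular subnetwork of $f$ with interaction graph $C$, induced by a fixing $z\in\B^{V\setminus I}$, where $I$ is the vertex set of $C$. Since $h$ is a subnetwork of an and-net, $h$ is itself an and-net, and since $G(h)=C$ gives every $i\in I$ in-degree one, each $h_i$ is the conjunction of a single literal, hence non-constant. Writing $h_i(y)=f_i(y,z)$ as the conjunction of the literals encoding the in-arcs to $i$ in $G(f)$, non-constancy forces every literal involving $z$ to evaluate to $1$; equivalently, every in-arc from $V\setminus I$ to $i$ in $G(f)$ is ``satisfied'' by $z$ (a positive arc $j\to i$ with $z_j=1$, a negative arc $j\to i$ with $z_j=0$). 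The remaining literals correspond to the in-arcs to $i$ from within $I$ in $G(f)$, and because $G(h)=C$ leaves $h_i$ depending only on the cycle predecessor of $i$, those in-arcs reduce to the single cycle arc into $i$. Ranging over $i\in I$ gives that $C$ has no chord in $G(f)$. For delocalizing vertices, suppose some $v\in V$ has a positive arc to $c_1\in I$ and a negative arc to $c_2\in I$ with $c_1\neq c_2$: if $v\in V\setminus I$, the satisfaction condition forces $z_v$ to be simultaneously $1$ and $0$; and if $v\in I$, the no-chord conclusion forces both out-arcs to be cycle arcs of $C$, which forces $c_1=c_2$ since $v$ has out-degree one in $C$. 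Both cases give a contradiction.

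The argument is essentially routine once one sees the key observation, which is the only real content: when an and-net $h$ is obtained by fixing coordinates of an and-net $f$, non-constancy of each $h_i$ forces those fixings to satisfy the corresponding in-arcs of $i$ in $G(f)$, so the combinatorics of $G(h)$ faithfully reflects, and constrains, the combinatorics of $G(f)$ around $I$. I do not anticipate a substantial obstacle; the delicate bookkeeping is entirely concentrated in the distinction between the $v\in I$ and $v\notin I$ cases for delocalizing vertices, both of which are immediate consequences of what has already been established.
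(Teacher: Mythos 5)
Your proof is correct. The backward direction is exactly the paper's (combine Propositions~\ref{pro:and3} and~\ref{pro:and4}), but your forward direction takes a genuinely different, more self-contained route. The paper obtains ``no delocalizing vertex'' essentially for free from Proposition~\ref{pro:and3}: since $Gh(x|_I)=G(h)=C$ is a subgraph of $Gf(x)$, the cycle $C$ is realized in a local interaction graph, and Proposition~\ref{pro:and3} then forbids delocalizing vertices; for ``no chord'' it constructs an explicit point $y$ falsifying the chord literal, so that $h_i$ loses its arc from the cycle predecessor at $y|_I$, contradicting circularity. You instead rederive everything from the conjunctive structure: non-constancy of each $h_i$ forces $z$ to satisfy every literal coming from $V\setminus I$, the fact that a conjunction of literals over distinct variables depends on each of them forces the in-neighbourhood of $i$ inside $I$ to be exactly the cycle predecessor (no chords), and the two delocalizing cases ($v\notin I$ gives $z_v=1$ and $z_v=0$ simultaneously; $v\in I$ reduces to the no-chord conclusion) fall out directly. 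Your argument avoids invoking Proposition~\ref{pro:and3} in the forward direction and makes the mechanism transparent (fixed coordinates must ``satisfy'' the corresponding in-arcs), at the cost of a slightly longer case analysis; the paper's version is shorter because it reuses the cited propositions, but hides the delocalizing-vertex reasoning inside Proposition~\ref{pro:and3}. Both are sound; note only that your chord and delocalizing arguments implicitly use that $G(f)$ is simple (so the literals feeding $f_i$ are over distinct variables), which is part of the and-net definition and so is harmless.
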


\begin{proof} 
If $C$ a cycle of $G(f)$ without chord and delocalizing vertex in $G(f)$, then the fact that $f$ has a subnetwork with interaction graph $C$ follows from Proposition~\ref{pro:and3} and Proposition~\ref{pro:and4}. 

\medskip
Suppose that $h$ is a circular subnetwork of $f$ with interaction graph~$C$. Let $I$ be the vertex set of $C$, $x\in\B^V$, and suppose that $h$ is induced by $x_{-I}$. Since $Gh(x_{-I})=G(h)=C$ is a subgraph of $Gf(x)$, we deduce from Proposition~\ref{pro:and3} that $C$ has no delocalizing vertex in $G(f)$. Suppose, for a contradiction, that $C$ has a chord in $G(f)$, say from $j$ to $i$. Let $k\neq j$ be the vertex preceding $i$ in $C$. 
Let $y\in\B^V$  be such that $y_{-I}=x_{-I}$ and $y_j=0$ if and only if the chord $j\to i$ is positive. Then $h_i(y_{-I})=f_i(y)=0$ and $h_i(y_{-I}\o e_k)=f_i(y\o e_k)=0$, thus $Gh(y_{-I})$ has no arc from $k$ to $i$, a contradiction with the fact that $h$ is circular. 
\end{proof}

We are now in position to express conditions in Corollaries \ref{cor:and1} and \ref{cor:and2} in terms of chordless cycles and delocalizing vertices. 

\begin{corollary}\label{theo:and2}
Let $f$ be an and-net.
\begin{enumerate}
\item
Each subnetwork of $f$ has a unique fixed point if and only if $f$ every chordless cycle of $G(f)$ has a delocalizing vertex. 
\item
Each subnetwork of $f$ has at most one fixed point if and only if $f$ every chordless positive cycle of $G(f)$ has a delocalizing vertex. 
\end{enumerate}
\end{corollary}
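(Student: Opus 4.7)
The plan is to obtain this corollary as a direct combination of two pairs of previously established results: the characterizations of unique (resp. at most one) fixed point in and-nets in terms of circular subnetworks (Corollaries~\ref{cor:and1} and~\ref{cor:and2}), and the characterization of circular subnetworks of an and-net in terms of chordless cycles with no delocalizing vertex (Proposition~\ref{pro:and5}).

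For point~1, I would first invoke Corollary~\ref{cor:and1}: since $f$ is an and-net, every subnetwork of $f$ has a unique fixed point if and only if $f$ has no circular subnetwork. Then I would apply the contrapositive of Proposition~\ref{pro:and5}, which says that a cycle $C$ of $G(f)$ is the interaction graph of a circular subnetwork of $f$ if and only if $C$ is chordless and has no delocalizing vertex. Thus the nonexistence of a circular subnetwork is equivalent to the nonexistence of a chordless cycle of $G(f)$ without a delocalizing vertex, that is, to the statement that every chordless cycle of $G(f)$ has a delocalizing vertex.

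For point~2, the reasoning is analogous but starts from Corollary~\ref{cor:and2}: every subnetwork of $f$ has at most one fixed point if and only if $f$ has no positive-circular subnetwork. A circular subnetwork $h$ with interaction graph $C$ is positive-circular precisely when $C$ is a positive cycle, simply by the definition of positive-circular. Restricting Proposition~\ref{pro:and5} to positive cycles, we get that $f$ has a positive-circular subnetwork with interaction graph $C$ if and only if $C$ is a chordless positive cycle of $G(f)$ without delocalizing vertex. Taking the contrapositive yields exactly the desired equivalence.

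There is no real obstacle here; the corollary is essentially a rewording of the combinatorial content already packaged in Corollaries~\ref{cor:and1}, \ref{cor:and2} and Proposition~\ref{pro:and5}. The only minor point worth spelling out explicitly is the observation in point~2 that the sign of a circular subnetwork coincides with the sign of its (cyclic) interaction graph, so that ``no positive-circular subnetwork'' translates into a condition concerning only the \emph{positive} chordless cycles of $G(f)$.
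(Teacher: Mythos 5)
Your proposal is correct and is exactly the argument the paper intends: the corollary is stated as an immediate consequence of combining Corollaries~\ref{cor:and1} and~\ref{cor:and2} with Proposition~\ref{pro:and5}, which is precisely what you do, including the (correct) observation that the sign of a circular subnetwork is the sign of its cyclic interaction graph. Nothing is missing.
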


\begin{remark}
If $G(f)$ has $n$ vertices and $c$ cycles, then the enumeration of these $m$ cycles can be done with time complexity $\mathcal{O}(n^2c)$; see {\cite{J75}} for instance. Since for each cycle the absence chord and delocalizing vertex can be verified in $\mathcal{O}(n^2)$, conditions of Corollary~\ref{theo:and2} can be verified in time $\mathcal{O}(n^2c)$.
\end{remark}

\begin{example}[Continuation of Example \ref{ex1}] 
Take again the network $f$ on $\{1,2,3\}$ defined by:
\[
\begin{array}{l}
f_1(x)=\overline{x_2}\land x_3\\[1mm]
f_2(x)=\overline{x_3}\land x_1\\[1mm]
f_3(x)=\overline{x_1}\land x_2.
\end{array}
\]
This network is an and-net and its global interaction graph $G(f)$ is 
\[
\begin{tikzpicture} 
\pgfmathparse{1.9}
\node[outer sep=1.5,inner sep=1.5,circle,draw,thick] (1) at (90:1.2){$1$};
\node[outer sep=1.5,inner sep=1.5,circle,draw,thick] (2) at (-30:1.2){$2$};
\node[outer sep=1.5,inner sep=1.5,circle,draw,thick] (3) at (210:1.2){$3$};
\path[thick]
(1) edge[bend left=15,->] (2)
(2) edge[bend left=15,->] (3)
(3) edge[bend left=15,->] (1)
(1) edge[bend left=15,-|] (3)
(3) edge[bend left=15,-|] (2)
(2) edge[bend left=15,-|] (1)
;
\end{tikzpicture}
\]
It is easy to see that every chordless cycle ({\em i.e.} cycle of length 2) has a delocalizing vertex. Thus $f$ has no circular subnetwork (cf. Proposition~\ref{pro:and5}). Thus it has no even- or odd-self-dual subnetwork (cf. Proposition~\ref{pro:and1}). Thus each subnetwork of $f$ has a unique fixed point (cf. Corollary~\ref{cor2}); see indeed Example~\ref{ex1}. Note that the two cycles of length three have no delocalizing vertex, thus these cycles are in $Gf(x)$ for some $x$; see indeed Example~\ref{ex2}. 
\end{example}

\paragraph{Acknowledgements}

I wish to thank Emmanuelle Seguin and Sebastien Brun for their help. This work has been partially supported by the French National Agency for Research (ANR-10-BLANC-0218 BioTempo project).

\appendix
\section{Proofs of Theorems \ref{thm:robert}, \ref{thm:thomas} and \ref{thm:shihdong}}
\label{appendix}

\begin{proof}[Proof of Theorem~\ref{thm:robert}]
We proceed by induction on $|V|$. If $|V|=1$ the theorem is obvious. So assume that $|V|>1$, and suppose that $G(f)$ has no cycle. Then $G(f)$ has at least one vertex, say $i$, without in-neighbor. Hence, $f_i=\mathrm{cst}=\a\in\B$. Since $G(f^{i\a})$ is a subgraph of $G(f)$, $G(f^{i\a})$ has no cycle and by induction hypothesis $f^{i\a}$ has a unique fixed point {\em i.e.} there exists a unique $x\in\B^V$ with $x_i=\alpha$ such that $f^{i\a}(x_{-i})=x_{-i}$. Since $f(x)_{-i}=f^{i\a}(x_{-i})=x_{-i}$ and $f_i(x)=\a=x_i$, we deduce that $f(x)=x$. Suppose that $f$ has a fixed point $y\neq x$. Then $f_i(y)=\a=y_i$ so $y_{-i}\neq x_{-i}$ and $f^{i\a}(y_{-i})=f(y)_{-i}=y_{-i}$. Thus $f^{i\a}$ has a fixed point distinct from $x_{-i}$, a contradiction. Thus $x$ is the unique fixed point of $f$.
\end{proof}

\begin{proof}[Proof of Theorem~\ref{thm:thomas}]
We proceed by induction on the number of strongly connected components. If $|V|=1$ then the theorem is obvious. So assume that $|V|>1$. If $G(f)$ is strongly connected, then the theorem is given by Theorem~\ref{thm:ara}. So suppose that $G(f)$ is not strongly connected, an let $I\subseteq V$ be an initial strongly connected component of $G(f)$ (there is no arc from $V\setminus I$ to $I$). Let $h$ be the subnetwork of $f$ induced by $0\in\B^{V\setminus U}$. Let us prove that
\[\tag{$*$}
\forall x\in\B^V,\qquad h(x|_I)=f(x)|_I.
\]
Suppose, for a contradiction, that $h(x|_I)\neq f(x)|_I$ for some $x\in\B^V$, and assume that $\d{x}$ is minimal for this property. Then, since $h$ is induced by the point $0\in\B^{V\setminus I}$, there exists $j\in V\setminus I$ with $x_j=1$. Thus $\d{x\o e_j}<\d{x}$ so $h(x|_I)=h((x\o e_j)|_I)=f(x\o e_j)|_I\neq f(x)|_I$. Thus there exists $i\in I$ such that $f_i(x\o e_j)\neq f_i(x)$. Thus $G(f)$ has an arc from $j$ to $i$, a contradiction. This prove ($*$). We are now in position to complet the induction step. 
\begin{enumerate}
\item
Suppose that $G(f)$ has no positive cycle, and suppose, for a contradiction, that $x$ and $y$ are fixed points of $f$. Then following ($*$), $x|_I$ and $y|_I$ are fixed points of $h$. Since  $G(h)$ has no positive cycle, by induction hypothesis, $h$ has at most one fixed point, thus $x|_I=y|_I=z$. Let $h'$ be the subnetwork of $f$ induced by~$z$. By definition, $h'(x_{-I})=f(x)_{-I}$ and $h'(y_{-I})=f(y)_{-I}$. Thus $x_{-I}$ and $y_{-I}$ are fixed points of $h'$. Since $G(h')$ has no positive cycle, by induction hypothesis, $h'$ has at most one fixed point, thus $x_{-I}=y_{-I}$. Thus $x=y$ so $f$ has at most one fixed point. 
\item
Suppose that $G(f)$ has no negative cycle. Then $G(h)$ has no negative cycle, and by induction hypothesis, $h$ has at least one fixed point $z\in\B^I$. Let $h'$ be the subnetwork of $f$ induced by $z$. Again, by induction hypothesis, $h'$ has at least one fixed point. Thus, there exists $x\in\B^V$ with $x|_I=z$ such that $x_{-I}=h'(x_{-I})=f(x)_{-I}$, and by ($*$) we have $x|_I=z=h(z)=h(x|_I)=f(x)|_I$. Thus $x$ is a fixed point~of~$f$. 
\end{enumerate}    
\end{proof}

\begin{proof}[Proof of Theorem~\ref{thm:shihdong}]
The ``trick'' consists in proving, by induction on $|V|$, the following
more general statement:
\begin{enumerate}
\item[($*$)] If $Gf(x)$ has no cycle for all $x\in\B^V$, then the conjugate of $f$ is a
bijection (and so $f$ has a unique fixed point).
\end{enumerate}
The case $|V|=1$ is obvious. So suppose that $|V|>1$, and suppose that $Gf(x)$ has no cycle for all $x\in\B^V$. Let $i\in V$ and $\a\in\B$. For all $x\in\B^V$, $Gf^{i\a}(x_{-i})$ is a subgraph of $Gf(x)$, and thus $Gf^{i\a}(x_{-i})$ has no cycle. Using the induction hypothesis, we deduce that: {\em For all $i\in V$ and $\a\in\B$, the conjugate of $f^{i\a}$ is a bijection.}  Now, suppose that $\f$ is not a bijection. Then, there exists two distinct points $x$ and $y$ in $\B^V$ such that $\f(x)=\f(y)$. Let us proved that $x=y\o 1$. Indeed, if $x_i=y_i=\a$ for some $i\in V$, then $\f^{i\a}(x_{-i})=\f(x)_{-i}=\f(y)_{-i}=\f^{i\a}(y_{-i})$. Thus the conjugate of  $f^{i\a}$ is not a bijection, a contradiction. So $x=y\o 1$. Since $Gf(x)$ has no cycle, it contains at least one vertex of out-degree $0$. In other words, there exists $i\in V$ such that $f(x^{i1})=f(x^{i0})$. Thus $\f(x^{i1})_{-i}=\f(x^{i0})_{-i}=\f(x)_{-i}$. Hence, setting $\a=y_i$, we obtain
\[
\f^{i\a}(x_{-i})
=\f(x^{i\a})_{-i}
=\f(x)_{-i}
=\f(y)_{-i}
=\f(y^{i\a})_{-i}
=\f^{i\a}(y_{-i}).
\]
So the conjugate of $f^{i\a}$ is not a bijection, a contradiction. Thus $\f$ is a bijection and ($*$) is proved.
\end{proof}

\bibliographystyle{plain}
\bibliography{bibliography}
\end{document}